\documentclass[a4paper,11pt]{article}

\usepackage[utf8]{inputenc}

\usepackage{thm-restate}

\usepackage{amssymb}
\usepackage{amsmath,thmtools}
\usepackage{mathtools}
\usepackage{physics}

\usepackage[margin=2.54cm]{geometry}

\usepackage{tabularx}
\usepackage{caption}    
\usepackage{float}
\usepackage{subfig}
\usepackage{multirow}
\usepackage{wrapfig}

\usepackage{cite}

\usepackage{subfiles}

\usepackage{algorithm}

\usepackage{appendix}

\usepackage[export]{adjustbox}
\usepackage{xcolor}
\usepackage{qcircuit}
\usepackage{dsfont}
\usepackage{enumitem}
\usepackage{authblk}
\usepackage{bm}


\usepackage[pagebackref,unicode]{hyperref}
\usepackage[nameinlink,capitalize]{cleveref}

\hypersetup{
    colorlinks=true, 
    linkcolor=blue, 
    citecolor=blue, 
    urlcolor=blue 
}

\usepackage{amsthm}
\usepackage{thmtools,thm-restate}
\newtheorem{theorem}{Theorem}

\newtheorem{lemma}{Lemma}
\newtheorem{corollary}{Corollary}
\newtheorem{definition}{Definition}
\newtheorem{proposition}{Proposition}

\makeatletter
\newtheorem*{rep@theorem}{\rep@title}
\newcommand{\newreptheorem}[2]{%
\newenvironment{rep#1}[1]{%
 \def\rep@title{#2 \ref{##1}}%
 \begin{rep@theorem}}%
 {\end{rep@theorem}}}
\makeatother

\newtheorem{subdefinition}{Definition}[definition]

\newreptheorem{theorem}{Theorem}
\newreptheorem{lemma}{Lemma}

\theoremstyle{definition}

\newtheorem{remark}{Remark}

\newcommand{\mO}{\mathcal{O}}

\newcommand{\bmO}{\bm{\mathcal{O}}}

\newcommand{\Pclass}{\mathsf{P}}
\newcommand{\PSPACE}{\mathsf{PSPACE}}
\newcommand{\PH}{\mathsf{PH}}
\newcommand{\NP}{\mathsf{NP}}

\newcommand{\MA}{\mathsf{MA}}
\newcommand{\AM}{\mathsf{AM}}

\newcommand{\QMA}{\mathsf{QMA}}

\newcommand{\QCMA}{\mathsf{QCMA}}
\newcommand{\QCAM}{\mathsf{QCAM}}
\newcommand{\QCIP}{\mathsf{QCIP}}
\newcommand{\IP}{\mathsf{IP}}
\newcommand{\QIP}{\mathsf{QIP}}

\newcommand{\PCP}{\mathsf{PCP}}
\newcommand{\QPCP}{\mathsf{QPCP}}
\newcommand{\QCPCP}{\mathsf{QCPCP}}

\newcommand{\BQP}{\mathsf{BQP}}
\newcommand{\BPP}{\mathsf{BPP}}

\newcommand{\BQ}{\mathsf{BQ}}
\newcommand{\BP}{\mathsf{BP}}

\newcommand{\PromiseNP}{\mathsf{PromiseNP}}

\newcommand{\poly}{\mathrm{poly}}
\newcommand{\polylog}{\mathrm{polylog}}

\newcounter{custalgocounter}
\renewcommand{\thecustalgocounter}{\arabic{custalgocounter}}

\usepackage{mdframed}

{%
    \end{mdframed}
  \end{minipage}
}

\crefname{conjecture}{Conjecture}{conjectures}
\Crefname{conjecture}{Conjecture}{Conjectures}

 \title{\vspace{-2.4cm}Classical versus quantum queries in quantum PCPs\\ with classical proofs}
\author[1,2]{Harry Buhrman}
\author[3]{Fran{\c c}ois Le Gall}
\author[4]{Jordi Weggemans}
\affil[1]{Quantinuum, United Kingdom}
\affil[2]{QuSoft \& University of Amsterdam, Amsterdam, The Netherlands}
\affil[3]{Nagoya University, Japan}
\affil[4]{QuSoft \& CWI, Amsterdam, the Netherlands}

\begin{document}

\maketitle

\begin{abstract}
We generalize quantum-classical PCPs, first introduced by Weggemans, Folkertsma and Cade (TQC 2024), to allow for $q$ \textit{quantum} queries to a polynomially-sized classical proof ($\QCPCP_{Q,c,s}[q]$). Exploiting a connection with the polynomial method, we prove that for any constant $q$, promise gap $c-s = \Omega(1/\poly(n))$ and $\delta>0$, we have $\QCPCP_{Q,c,s}[q] \subseteq \QCPCP_{1-\delta,1/2+\delta}[3] \subseteq \BQ \cdot \NP$, where $\BQ \cdot \NP$ is the class of promise problems with quantum reductions to an $\NP$-complete problem. Surprisingly, this shows that we can amplify the promise gap from inverse polynomial to constant for constant query quantum-classical PCPs, and that any quantum-classical PCP making any constant number of quantum queries can be simulated by one that makes only three classical queries. Nevertheless, even though we can achieve promise gap amplification, our result also gives strong evidence that there exists no constant query quantum-classical PCP for $\QCMA$, as it is unlikely that $\QCMA \subseteq \BQ \cdot \NP$, which we support by giving oracular evidence. In the (poly-)logarithmic query regime, we show for any positive integer $c$, there exists an oracle relative to which $\QCPCP[\mO(\log^c n)] \subsetneq \QCPCP_Q[\mO(\log^c n)]$, contrasting the constant query case where the equivalence of both query models holds relative to any oracle. Finally, we connect our results to more general quantum-classical interactive proof systems.
\end{abstract}

\setcounter{tocdepth}{2}
{\small \tableofcontents}

\section{Introduction}
A probabilistically checkable proof (PCP) system consists of a polynomial-time verifier that uses $r(n)$ random coins and has query access to some proof provided by a prover, to which it can make at most $q(n)$ queries. The celebrated PCP theorem states that the full power of $\NP$ can be captured by a PCP that only has $q(n) = \mO(1)$ and $r = \mO(\log n)$~\cite{Arora1998proof,Arora1998probabilistic,Dinur2007thepcp}. In quantum complexity theory, one of the biggest open questions is whether the \textit{de facto} quantum generalization of $\NP$, i.e.~$\QMA$, can also be characterized in terms of a \textit{quantum} probabilistically checkable proof system~\cite{aharonov2013guest}. This question is formalized as the quantum PCP conjecture, and is usually considered in terms of its local Hamiltonian formulation: this states that it is $\QMA$-hard to estimate the ground state energy of a local Hamiltonian up to constant additive error relative to the operator norm. Much less studied is its proof-checking formulation, which posits that any promise problem in $\QMA$ can be verified by a quantum polynomial-time verifier that only acts non-trivially on a subset of a polynomially-sized proof (effectively tracing out the rest). Both formulations are known to be equivalent under \emph{quantum} reductions~\cite{Aharonov2008The,aharonov2013guest,buhrman2024quantum}.

Ref.~\cite{weggemans2023guidable} proposed to study an intermediate version of a quantum probabilistically checkable proof system, where the verifier remains quantum but the proof is classical ($\QCPCP[q]$). They showed that for a constant amount of classical queries, the corresponding class is contained in $\BQP^{\NP[1]}$, which is $\BQP$ allowed to make a single query to an $\NP$ oracle. Whilst the definition of $\QCPCP[q]$ preserves the \textit{locality} aspect of PCPs, it does not capture the \textit{query} aspect as usually considered in a quantum setting: one generally considers \textit{quantum} queries instead of classical access to a string. This means that in a single query, the whole proof can be ``accessed'' in superposition, a model which is known to have an exponential advantage over classical queries for some computational tasks.\footnote{See \url{https://quantumalgorithmzoo.org} for a list of examples. Generally, query lower bounds in the randomized setting also hold in the quantum setting when only classical queries are allowed, provided the lower bound method works even when all intermediate computations (in between the queries) can be inefficient. This is because any quantum query algorithm in this model can be simulated with exponential overhead (in time and space) by a classical computation, but using the same number of queries.}
Hence, the following question arises:
\begin{center}    
\noindent \textit{Can quantum-classical PCPs be made more powerful when they are allowed to make quantum queries to a classical proof?}
\end{center}

\subsection{Our results}
In this work, we generalize the class $\QCPCP_{c,s}[q]$ to $\QCPCP_{Q,c,s}[q]$ such that it captures the standard quantum query model to an unknown bit string~\cite{buhrman2002complexity}. A $\QCPCP_{Q,c,s}[q]$-verifier consists of a uniform family of quantum circuits $V = \{V_n : n \in \mathbb{N}\}$, where the description of each $V_n$ is generated by a fixed polynomial-time Turing machine given $1^n$ as an input. We have that each circuit $V_n$ is allowed to make $q$ queries to a classical proof $y \in \{0,1\}^{\poly(n)}$ provided by an untrusted prover, via a unitary $U_y$ such that $U_y \ket{i} \ket{a} = \ket{i} \ket{a \oplus y_i}$.\footnote{One can also view this as having the proof stored in a quantum read-only memory (QROM), and the verifier is only allowed to access the memory $q$ number of times.}  We then say that a promise problem $A = (A_{\text{\sc yes}}, A_{\text{\sc no}})$ is in $\QCPCP_{Q,c,s}[q]$ if for all inputs $x\in A_{\text{\sc yes}}$, there exists a proof $y$ such that the $\QCPCP_{Q,c,s}[q]$-verifier accepts with probability at least $c$, and for all $x\in A_{\text{\sc no}}$, the verifier accepts all proofs $y$ with probability at most $s$ (\cref{def:QCPCP} and~\ref{def:QCPCP_Q2}). We will study quantum-classical PCPs in both types of query models in the setting where the number of queries is constant or (poly)-logarithmic.

\paragraph{Constant number of proof queries.}
We show that in the constant query regime the following result holds.
\begin{theorem}[From~\cref{cor:containment_BQNP} and~\cref{prop:QCPCP_equiv}]
For any positive integer $q \in \mathbb{N}$ and for any computable functions $c,s$ such that $c-s \geq 1/\poly(n)$, we have 
\begin{align*}
    \QCPCP_{Q,c,s}[q] \subseteq \QCPCP_{1-\delta,1/2+\delta}[3] \subseteq \BQ \cdot \NP,
\end{align*}
for any constant $\delta >0$ .
\label{thm:1_ext}
\end{theorem}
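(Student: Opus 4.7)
The plan is to establish the two containments separately, with the polynomial method providing the key structural handle for the first containment and a classical CSP-level analysis of $\QCPCP[3]$ providing the second.

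For $\QCPCP_{Q,c,s}[q] \subseteq \QCPCP_{1-\delta,1/2+\delta}[3]$, I would start from the polynomial method of Beals--Buhrman--Cleve--Mosca--de Wolf: the acceptance probability of the $q$-query quantum verifier on input $x$ and proof $y \in \{0,1\}^N$ with $N = \poly(n)$ is a real multilinear polynomial $p_x(y)$ of degree at most $2q$. Since $q$ is a constant, $p_x$ has only $O(N^{2q}) = \poly(n)$ monomials, each involving at most $2q$ of the proof bits, and the monomial coefficients $\alpha_S$ can be computed (or unbiasedly estimated to any inverse-polynomial precision) by a $\BQP$ machine that evaluates the original verifier on $\poly(n)$ suitably chosen proofs. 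The promise problem then becomes a gap-$\mathrm{MAX}$-$d$-CSP with $d=2q$ on the proof variables, with weights $\alpha_S$. I then plan to apply a standard reduction from $\mathrm{MAX}$-$d$-CSP to $\mathrm{MAX}$-$3$-CSP via auxiliary proof variables, followed by classical PCP gap amplification combined with a long-code based $3$-query PCP construction, to obtain a $3$-query classical-proof PCP with completeness $1-\delta$ and soundness $1/2+\delta$ for any constant $\delta > 0$. The resulting verifier is quantum only in how it computes the CSP constraint weights; its actual access to the (augmented) proof consists of just three classical queries.

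For $\QCPCP_{1-\delta,1/2+\delta}[3] \subseteq \BQ \cdot \NP$, I would exploit the fact that a $\QCPCP[3]$-verifier's acceptance probability on a fixed proof $y$ is a non-negative combination of indicators of the form $[y_{i_1}=b_1 \wedge y_{i_2}=b_2 \wedge y_{i_3}=b_3]$, where the weights are determined by the quantum amplitudes of each accepting computational branch and the sum ranges over all (possibly adaptive) query/answer triples the verifier can produce. A $\BQP$ machine can produce, to inverse-polynomial precision, a weighted $\mathrm{MAX}$-$3$-CSP instance whose optimum equals the maximum acceptance probability over $y$. Deciding whether this optimum is at least $1-\delta$ or at most $1/2+\delta$ lies in $\NP$ and is $\NP$-hard by the PCP theorem, so the overall procedure is a $\BQP$ reduction to an $\NP$-complete language, which is precisely the structure captured by $\BQ \cdot \NP$.

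The principal obstacle is the gap amplification step in the first containment: the initial gap is only $1/\poly(n)$, yet it must be amplified to a constant while the query count drops all the way to three and while the quantum computability of the constraint weights is preserved. Since classical PCP amplification constructions are phrased for CSPs with explicitly given weights, I need to verify that the $\BQP$ machine can efficiently implement the effective verifier of each amplified and alphabet-reduced CSP, i.e.\ that all weight/index computations produced by the reductions remain in $\BQP$. A related technical point is controlling the $\ell_1$-norm of the coefficients $\alpha_S$, so that the translation from ``polynomial $\geq$ threshold'' to ``CSP value $\geq$ threshold'' does not degrade the gap by more than a constant factor at each reduction step before the PCP amplification kicks in.
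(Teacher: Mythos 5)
Your proposal follows the same high-level route as the paper (polynomial method to reduce to a classical $\NP$-flavored CSP, then classical $3$-query PCP machinery), but there is a genuine gap around the \emph{determinization} of the quantum reduction. You write that the monomial coefficients can be ``estimated to any inverse-polynomial precision,'' and flag the amplification of the $1/\poly(n)$ gap to a constant as the principal obstacle. What is missing is the observation---central to the paper's \cref{thm:reduction} and \cref{prop:QCPCP_equiv}---that the learned coefficients $\hat\beta_S$ must be \emph{rounded to a fixed grid} of $l$ bits, with the estimation accuracy chosen so that, conditioned on the Hoeffding estimates succeeding, every execution of the reduction outputs the \emph{same} polynomial $\hat P$ and hence the same instance of a genuine $\NP$ \emph{decision} problem (the multi-linear polynomial threshold problem, with the threshold $a$ drawn from a discrete set $D$). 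This determinism is what allows the honest prover, who never sees the verifier's random coins, to know in advance which $\NP$ instance the $3$-query verifier will construct and hence to commit to the correct H\aa{}stad-style PCP proof. If the coefficients are merely ``accurate to $1/\poly(n)$,'' different runs yield slightly different weighted CSP instances, each requiring a different PCP proof, and the completeness argument for the $\QCPCP_{1-\delta,1/2+\delta}[3]$ protocol collapses.

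Once this determinism is in place, your worry about carefully tracking the gap through a chain of MAX-$d$-CSP $\to$ MAX-$3$-CSP $\to$ gap amplification $\to$ long-code reductions is misplaced: the $1/\poly(n)$ promise gap is consumed entirely in the learning step, where it guarantees that the rounded $\hat P$ still correctly separates {\sc yes}-instances (some $y$ with $\hat P(y)\geq a$) from {\sc no}-instances (all $y$ with $\hat P(y)<a$). From there, the constant completeness/soundness gap of $1-\delta$ versus $1/2+\delta$ is a \emph{free output} of H\aa{}stad's $3$-query PCP theorem applied to an $\NP$ decision problem; nothing is amplified from $1/\poly(n)$. Your second containment and your remark about controlling the coefficient magnitudes are essentially correct and correspond to the paper's \cref{cor:containment_BQNP} and \cref{lem:lemboundcoef}; discretization of the coefficients is still needed there so that the output is a valid $\NP$ instance, but full determinism is only essential for the $\QCPCP[3]$ inclusion.
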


\noindent Here `$\BQ$' is a quantum extension of Sch\"oning's $\BP$-operator~\cite{schoning1989probabilistic},\footnote{Both are examples of what is known as dot operators~\cite{borchert2001dot}.} such that $\BQ \cdot \mathcal{C}$ for a class $\mathcal{C}$ contains all promise problems that have a quantum reduction to a promise problem $A$ that is complete for $\mathcal{C}$. Hence, when $\mathcal{C}$ is a \emph{classical} complexity class, one can view the $\BQ$-operator as ``pulling the quantumness'' out of a problem. In~\cref{sec:BQ}, we prove that the $\BQ$-operator satisfies many of the similar properties as the $\BP$-operator (\cref{prop:prop_BQ}) and also discuss its alternative formulation (for classes $\mathcal{C} \supseteq \Pclass$) in terms of feeding random strings generated by measuring the output states of quantum algorithms as extra input to the class (\hyperref[def:BQ_alt]{Definition~\ref*{def:BQ_alt}}).

The following conclusions can be drawn from~\Cref{thm:1_ext}:
\begin{enumerate}[label=(\roman*)]
    \item Any constant query quantum-classical PCP protocol can be simulated by a quantum-classical PCP making only $3$ classical queries which has a constant promise gap. Even more surprisingly, this even holds when the original completeness/soundness gap was inverse polynomial(!) instead of constant, showing that amplification in this regime can be done without increasing the query count.\footnote{If the original number of queries was at least $3$.}
    \item It states that one \textit{can} pull out the quantumness of quantum-classical PCPs, in terms of its interpretation in the context of the $\BQ$-operator, no matter if the access to the proof is quantum or classical. Since it seems very unlikely that $\QCMA \subseteq \BQ \cdot \NP$---as it would imply that the ``quantum part'' in the computation does not have to use the proof---this provides very strong evidence that it is unlikely that there exists some notion of a ``quantum PCP'' that uses a classical proof.  
\end{enumerate}

We strengthen point (ii) by giving a classical oracle relative to which $\QCPCP_Q[\mO(1)]$ does not capture the power of $\QCMA$.

\begin{restatable}{theorem}{orsepQCMA}
For any positive integer $q \in \mathbb{N}$, there exists an oracle $\bmO$ relative to which
\begin{align*}
     \QCPCP[q]^{\bmO} \subseteq \QCPCP_Q[q]^{\bmO} \subsetneq  \QCMA^{\bmO}.
\end{align*}
\label{thm:2_ext}
\end{restatable}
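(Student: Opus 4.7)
The inclusion $\QCPCP[q]^{\bmO} \subseteq \QCPCP_Q[q]^{\bmO}$ holds relative to every oracle $\bmO$, since a classical query to a proof bit $y_i$ is simulated by applying the quantum query unitary $U_y$ to the basis state $\ket{i}\ket{0}$ and measuring the answer register; every $\QCPCP[q]^{\bmO}$ verifier is thus already a (trivial) $\QCPCP_Q[q]^{\bmO}$ verifier. All the effort therefore goes into exhibiting an oracle witnessing the strict separation $\QCPCP_Q[q]^{\bmO} \subsetneq \QCMA^{\bmO}$.

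My plan is to invoke the relativized form of \Cref{thm:1_ext}, which gives $\QCPCP_Q[q]^{\bmO} \subseteq (\BQ \cdot \NP)^{\bmO}$ for every oracle $\bmO$ and every constant $q$. It then suffices to build a single $\bmO$ such that $\QCMA^{\bmO} \not\subseteq (\BQ \cdot \NP)^{\bmO}$, whence the desired separation holds at all constant $q$ simultaneously, yielding also the uniform quantifier over $q$ in the statement. I would consider an oracle problem in which, for each length $n$, $\bmO$ either (\textsc{yes}) plants a short classical string $w_n$ together with a quantum-verifiable gadget---so that a $\QCMA$ verifier given $w_n$ accepts with high probability after a few quantum queries to $\bmO$---or (\textsc{no}) plants no such structure. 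Membership in $\QCMA^{\bmO}$ is then witnessed directly by $w_n$; for hardness, I would diagonalize against an enumeration $(M_i)_i$ of $\BQ \cdot \NP$ machines, carving a fresh piece of $\bmO$ at each of a rapidly growing sequence of lengths $n_1 \ll n_2 \ll \cdots$ so that $M_i$ errs on input $1^{n_i}$.

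The main obstacle is a query lower bound against the hybrid $\BQ \cdot \NP$ model, in which a polynomial-time quantum algorithm first produces a classical string $z$ after polynomially many quantum queries to $\bmO$, and then a nondeterministic polynomial-time verifier, given $z$ together with a guessed classical witness $w'$, makes polynomially many further classical queries. To rule out any such decider, I would show, via a polynomial- or adversary-type bound adapted to the planted structure, that (i) the marginal distribution of $z$ is statistically close on the \textsc{yes} and \textsc{no} oracles, so $z$ is essentially useless to the NP step, and (ii) conditioned on any fixed $z$, the NP step cannot produce a valid certificate noticeably more often on \textsc{yes} than on \textsc{no}, because the planted $w_n$ remains information-theoretically hidden from polynomially many classical queries to the gadgetry. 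Designing the planted gadget so that it is simultaneously verifiable from $w_n$ by a few quantum queries and opaque to this hybrid attack---so that the two lower bounds combine cleanly---is the main technical difficulty.
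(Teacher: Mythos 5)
Your high-level skeleton matches the paper's: reduce to showing a single oracle $\bmO$ with $\QCMA^{\bmO}\not\subset(\BQ\cdot\NP)^{\bmO}$, and note that the $\QCPCP_Q[q]^{\bmO}\subseteq(\BQ\cdot\NP)^{\bmO}$ inclusion relativizes (\Cref{lem:contwithA}, \Cref{rem:relativizing}). The paper's next move, however, is to further observe that $(\BQ\cdot\NP)^{\bmO}\subseteq\BQP^{\NP^{\bmO}}$ relative to every oracle, so that it suffices to separate $\QCMA^{\bmO}$ from $\BQP^{\NP^{\bmO}}$. It then takes the OR $\circ$ Forrelation oracle of Aaronson, Ingram and Kretschmer (\Cref{def:oracle}) off the shelf: Aaronson et al.\ already proved (via Raz--Tal) that the corresponding language is outside $\BQP^{\PH^{\bmO}}$, hence outside $\BQP^{\NP^{\bmO}}$, and the only new ingredient needed is the easy observation (\Cref{prop:OracleSep}) that the planted row index $\hat{x}$ serves as a $\QCMA$ witness.

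This is where your proposal has a genuine gap. You correctly identify the ``main obstacle'' as a lower bound against the hybrid quantum-then-$\NP$ model, but you leave it as a plan rather than a proof, and the plan as stated is unlikely to go through. Splitting the argument into (i) the marginal of the quantum part's output $z$ is close on \textsc{yes} vs.\ \textsc{no} oracles, plus (ii) conditioned on $z$ the $\NP$ step cannot distinguish, does not respect how information flows: the $\NP$ step makes its own (classical, existentially guessed) queries to $\bmO$, so even if $z$ carries no information, the verifier can nondeterministically guess the planted location $\hat{x}$ and probe $f_n(\hat{x},\cdot)$ directly; ruling this out requires the Raz--Tal-style statement that no quasipolynomial-size $\mathsf{AC}^0$ circuit (capturing the entire residual $\NP^{\bmO}$ computation after fixing the quantum part) distinguishes Forrelation from random. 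That indistinguishability is exactly the deep content you would be re-deriving; a simple polynomial or adversary bound ``adapted to the planted structure'' does not give it, because those techniques are tailored to bounded-query models without the unbounded nondeterministic guessing that $\NP^{\bmO}$ provides. In short: the route is right, but the missing ingredient is precisely the Forrelation-vs.-random $\mathsf{AC}^0$ indistinguishability, and the paper avoids re-proving it by routing through $\BQP^{\NP^{\bmO}}$ and citing the existing separation.
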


\noindent Unconditionally proving $\QCPCP_Q[\mO(1)] \neq \QCMA$ would imply $\Pclass \neq \PSPACE$ and therefore seems to be beyond the current techniques.

We believe that Point (i) highlights the intuition that a classical proof in a quantum verification setting can be viewed as being ``uncompiled.'' In the fully quantum case, it's easy to show that there exists a polynomial $p(n)$ such that $\QPCP_{c,s}[2] = \QMA$ when $c-s = 1/p(n)$; roughly speaking, one picks a term of a $\QMA$-hard $2$-local Hamiltonian uniformly at random, and applies a $3$-qubit unitary operation which maps an ancillary qubit in $\ket{0}$ to $\ket{1}$ proportional to the energy of the $2$-local density matrix with respect the sampled local term.\footnote{See for example~\cite{Kitaev2002ClassicalAQ} and~\cite{buhrman2024quantum}. Similarly, in the fully classical case, one can sample a term of a $\NP$-hard $3$-local constraint satisfaction problem uniformly at random and detect unsatisfiability with a probability $1/m$, where $m$ is the total number of constraints.} Therefore, the whole game of proving quantum PCP conjecture is to show that for some $q\in \mO(1)$, we have $\QPCP_{c,s}[2] = \QPCP_{c',s'} [q]$ with $c'-s' = \Omega(1)$.  For quantum PCPs with classical proofs (independent of the access model),~\cref{thm:1_ext} shows that such an amplification step is indeed possible, however, we lose the property that ``local information'' about a proof is enough to verify any problem in $\QCMA$ with an inverse polynomially small promise gap. In particular, variants of the local Hamiltonian which are $\QCMA$-complete use a classical proof to describe a quantum circuit which prepares a quantum state with low energy (or has a large overlap with such a state)~\cite{Wocjan2003two,weggemans2023guidable}, which can again only be ``locally verified'' after the quantum state is prepared, i.e., compiled.

\paragraph{(Poly-)logarithmic number of queries.} 
We show that when the number of queries to the proof becomes (poly-)logarithmic both query models can be separated relative to an oracle.

\begin{restatable}{theorem}{orseplog}  
For any positive integer $c \in \mathbb{N}$, there exists an oracle $\bmO$ such that 
\begin{align*}
\QCPCP[\mO(\log^c n)]^{\bmO} \varsubsetneq \QCPCP_Q [\mO(\log^c n)]^{\bmO}.
\end{align*}
\label{thm:3_ext}
\end{restatable}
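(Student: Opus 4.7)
The plan is to exploit a query complexity gap between quantum and classical queries on a $\poly(n)$-bit string and to lift that gap to a PCP separation. A natural source of such a gap is the Forrelation problem of Aaronson--Ambainis, which on $N$-bit inputs has quantum query complexity $O(1)$ while its classical query complexity is $\tilde{\Omega}(\sqrt{N}/\log N)$. Choosing $N = \poly(n)$ so that $\sqrt{N}/\log N = \omega(\log^c n)$ gives precisely the asymptotic advantage the theorem asks for: with $O(\log^c n)$ quantum queries to a proof encoding a Forrelation instance one can decide it, but with $O(\log^c n)$ classical queries one cannot.

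Concretely, I would define $\bmO$ so that for each input length $n$ it encodes a bit $b_n$ together with a canonical string $y^*_n \in \{0,1\}^{\Theta(N)}$ which is a forrelated pair if $b_n = 1$ and an uncorrelated pair if $b_n = 0$, and set $L^{\bmO} = \{1^n : b_n = 1\}$. Crucially, $\bmO$ would expose $y^*_n$ only through a locally-checkable commitment (for instance a PCP-of-proximity codeword structure, or a Merkle root stored in $\bmO$ with the tree living in the proof), so that a candidate proof $y$ can be tested for equality with $y^*_n$ using only $\polylog(n)$ queries, while the Forrelation status of $y^*_n$ is not directly computable from oracle queries alone. For the upper bound, the honest $\QCPCP_Q$ prover submits $y = y^*_n$; the verifier performs the $\polylog$-query consistency check against the commitment, then runs the quantum Forrelation algorithm on $y$ with $O(1)$ quantum proof queries. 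Completeness is immediate; soundness holds because any cheating proof either fails the consistency check (if it differs substantially from $y^*_n$) or, in the case of a no-instance where the honest $y^*_n$ is submitted, is exposed as uncorrelated by the Forrelation subroutine.

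For the lower bound, I would consider distributions $\mathcal{D}_{\text{yes}}$ and $\mathcal{D}_{\text{no}}$ over oracles in which $y^*_n$ is drawn uniformly from forrelated and uncorrelated pairs, respectively, and argue that no verifier $V$ with $O(\log^c n)$ classical proof queries can distinguish them, even given polynomially many oracle queries and unbounded quantum computation on its private register. The heart of the argument is a reduction: given such a $V$, I would simulate it using only query access to $y^*_n$, answering $V$'s proof queries directly from the Forrelation instance and emulating $V$'s oracle queries locally from the random-oracle/commitment structure. Because $V$ reads at most $O(\log^c n)$ bits of $y^*_n$ as proof and because the commitment is designed to leak only $\polylog$ information about the Forrelation status, this produces a classical Forrelation algorithm with $\poly(\log^c n)$ queries, contradicting the Aaronson--Ambainis lower bound.

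I expect the main obstacle to be making this simulation query-efficient while preserving the PCP soundness: the commitment must simultaneously be (i) verifiable against a claimed $y$ with only $\polylog$ queries so that the upper bound works, (ii) binding enough that a cheating prover on a no-instance cannot substitute a forrelated $y$ passing the consistency check, and (iii) structurally ``local'' enough that answering $V$'s oracle queries in the lower-bound simulation does not burn more queries to $y^*_n$ than the Forrelation bound tolerates, and yet not so revealing that $V$'s oracle queries alone already solve Forrelation. Balancing these three constraints will drive the technical choices, most likely pointing to a PCPP-style codeword commitment combined with random-oracle padding. If the commitment route becomes too delicate, the backup plan is a diagonalization construction that enumerates polynomial-time classical $\QCPCP$ verifiers $V_1,V_2,\ldots$ and programs $\bmO$ on a dedicated ``hard'' input length for each $V_i$ so as to defeat it while keeping the quantum-PCP strategy intact on structural grounds.
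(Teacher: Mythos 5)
Your proposal takes a genuinely different route from the paper, and it has a gap that I do not think can be repaired in the form you propose. The paper uses a simple \emph{search} oracle: $\bmO$ either contains a hidden string $x$ of length $\log^{c+1} n$ or it does not, and the honest proof encodes $x$ as a concatenation of $\mO(\log^c n)$ Bernstein--Vazirani instances so that the $\QCPCP_Q$ verifier can extract $x$ with one quantum query per block and then check it with one direct $\bmO$ query. Soundness is automatic because $\bmO$ itself is the ground truth; no commitment or consistency machinery is needed, and the lower bound is a clean reduction to an OR lower bound with $q$ ``advice'' bits (\cref{lem:lb_classical_proof}), i.e.\ a classical-query verifier learns at most $\mO(\log^c n)$ bits of $x$ from the proof and still faces a super-polynomial search over the remaining $\log^{c+1}n - \mO(\log^c n)$ bits.

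Your construction instead needs the oracle to \emph{commit} to the planted Forrelation instance $y^*_n$ while (i) being $\polylog$-verifiable from the proof, (ii) being binding against the prover, and (iii) not leaking the Forrelation status through oracle access alone. This triple constraint is where the argument breaks. In the PCP model the prover is computationally unbounded and may query $\bmO$ as much as it likes, so any Merkle-root or random-oracle style commitment is not binding: the prover can exhaustively find collisions and open the root to a forrelated $y'$ in a no-instance, defeating soundness. On the other hand an information-theoretically binding commitment stored in $\bmO$ must essentially determine $y^*_n$, and then the verifier -- which has \emph{unrestricted} oracle access in the $\QCPCP$ model, only the proof queries are metered -- can reconstruct $y^*_n$ from $\bmO$ alone and decide Forrelation with zero proof queries, destroying the lower bound. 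A PCP-of-proximity does not rescue this either, because what must be proved is a statement \emph{about} $y^*_n$'s relation to the oracle, and any PCPP for that statement presupposes a sound commitment. You flag this tension yourself, but I do not see a way to satisfy all three constraints simultaneously, which is precisely why the paper sidesteps commitments entirely by making the secret live \emph{inside} the oracle (where the verifier can check it but the prover cannot change it) and putting only the ``decoding hint'' in the proof. One further caution: if your upper-bound protocol really needed only $\mO(1)$ quantum proof queries, \cref{thm:1_ext} (which relativizes) would force it into $\QCPCP[3]^{\bmO}$ and no separation could follow; your $\polylog$-query consistency check is what keeps you above the constant-query collapse, so it cannot simply be removed.
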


\noindent This contrasts the constant query case, where~\cref{thm:1_ext} holds even in the presence of oracles.

\subsection{Proof ideas}
To prove~\cref{thm:1_ext}, we exploit the connection between quantum-classical PCPs and the \textit{polynomial method} of~\cite{beals2001quantum}. The polynomial method is a technique used to prove lower bounds on quantum query complexity, and utilizes the idea that the square of the amplitudes of every $q$-query quantum algorithm to a string $y \in \{0,1\}^N$ can be expressed as a real-valued multi-linear polynomial $P(y)$ of degree $2q$ on $N$ bits. In our case, we will be interested in the case when $N = |y| = \poly(n)$ and $q$ is constant, so $P$ consists of at most $\binom{|y|}{2q} = \poly(n)$ number of monomials. We prove that for a fixed input $x \in \{0,1\}^n$, an approximation of $P$ can be efficiently learned by sampling from the measurement outcome of the runs $\QCPCP_Q[q]$-verification circuit (\cref{thm:reduction}). In these runs, it never queries the actual proof through $U_y$, instead making queries to constructed $U_{y^S}$'s for some ``fake'' proofs $y^{S}$, which can be efficiently implemented given a description of $y^{S}$. Here $S \subseteq [N]$ indicates a subset of at most $2q$ variables participating in a term. Combined with the promise on the acceptance probabilities of the quantum-classical PCP verifier in the {\sc yes}- and {\sc no}-cases, this yields a \emph{quantum reduction} to a \emph{multi-linear polynomial threshold problem}: here you are given classical descriptions of the coefficients ${\beta_S}$ up to a certain number of bits of precision of some constant degree multi-linear polynomial $P : \{0,1\}^{p(n)} \rightarrow \mathbb{R}$ with $p(n) = \poly(n)$, and an efficiently computable number $a \in [0,1]$, and the task is to decide whether there exists a $y \in \{0,1\}^{p(n)}$ such that $P(y) \geq a$ or for all $y$ it holds that $P(y) < a$. The multi-linear polynomial threshold problem is clearly in $\NP$, as the smallest possible function value of $P$ above $a$ can be discriminated from $a$ using only a polynomial number of bits, which follows from the fact that each of the coefficients is specified up to a certain number of bits. Using a classical PCP construction, one then also has that the multi-linear polynomial threshold problem is in $\PCP[3,\mO(\log r)]$~\cite{HastadSome1997}. Moreover, conditioning on the quantum reduction succeeding, the output of the reduction can be made deterministic, which means that the prover can fix the proof used by the classical PCP verification protocol in the {\sc yes}-case.

The above construction is somewhat similar in spirit to a recent work by Arad and Santha~\cite{arad2024quasi}, in which a ``quasi-quantum PCP theorem'' in terms of a local Hamiltonian problem over so-called quasi-quantum states is shown by using a classical PCP construction to achieve amplification. However, a key difference is that they want to reduce \emph{to} a (quasi-)quantum problem from an amplified CSP, whilst we reduce \emph{from} a quantum problem to a CSP, which can then be gap amplified.

The proof of~\cref{thm:2_ext} relies on the OR $\circ$ Forrelation oracle from~\cite{aaronson2021acrobatics}, which was used to demonstrate an oracle separation between $\NP^{\BQP}$ and $\BQP^{\NP}$. We extend this result to show that the same oracle separation holds when $\NP^{\BQP}$ is replaced by $\QCMA$. Additionally, we make use of the fact that the inclusion $\QCPCP_Q[\mO(1)] \subseteq \BQ \cdot \NP$ from~\cref{thm:1_ext} holds even under relativization.

The key insight in proving~\cref{thm:3_ext} is to consider the decision version of a search problem, specifically computing the OR-function over a string of poly-logarithmic length. Using quantum queries, the Bernstein-Vazirani algorithm~\cite{bernstein1993quantum} can decode $\mO(\log n)$ bits with just a single quantum query to a string of polynomial size. By concatenating $\mO(\log^c n)$ of such strings into a proof, we can learn a total of $\mO(\log^{c+1} n)$ bits using only $\mO(\log^c n)$ quantum queries, after which a single classical query is sufficient to solve the decision problem. To show that this is not achievable with classical access to a proof, we establish a quantum lower bound on computing the OR function for $n$ bits, even in cases where $k$ bits (from a potentially larger proof) are observed to aid in finding the hidden string. Our lower bound generalizes a result from~\cite[Appendix D]{buhrman2024quantum}. Although their proof technique could also be applied to achieve the same result, we argue that our approach is simpler and more straightforward.

\subsection{Implication to quantum-classical interactive proof systems}
As a final side result, our oracle separation between $\QCMA$ and $\BQ \cdot \NP$ also leads to a corollary regarding quantum-classical interactive proof systems. We define a new class of quantum-classical interactive proof systems, denoted $\QCIP[k]$, which, to the best of our knowledge, has not been explored in the literature. This class contains all promise problems that can be decided by a $k$-message interaction between a polynomial-time quantum verifier and a prover, where only classical strings are exchanged between them.

\begin{restatable}[Quantum-classical interactive proofs]{definition}{QCIPdef}
Let $n \in \mathbb{N}$ and $p(n) : \mathbb{N} \rightarrow \mathbb{N}$ be a polynomial. A promise problem $A = (A_{\textup{\sc yes}}, A_{\textup{\sc no}})$ is in $\QCIP_{c,s}[k]$ if there exists a $\Pclass$-uniform family of polynomial-time quantum verifier circuits $\{V_n : n \in \mathbb{N}\}$, where each verifier $V_n$ takes $x \in \{0,1\}^n$ as an input and exchanges $k$ classical messages of length at most $p(n)$ with a computationally unbounded prover $P : \{0,1\}^* \rightarrow \{0,1\}^*$, such that the following conditions hold:
\begin{itemize}
    \item If $x \in A_{\textup{\sc yes}}$, there exists a prover $P$ that causes $V_n$ to accept with probability at least $c$.
    \item If $x \in A_{\textup{\sc no}}$, then for every prover $P$, $V_n$ accepts with probability at most $s$.
\end{itemize}
We define $\QCIP[k] = \QCIP_{2/3, 1/3}[k]$, $\QCIP = \bigcup_{\alpha \geq 1} \QCIP[n^\alpha]$, and $\QCMA = \QCIP[1]$.
\label{def:QCIP}
\end{restatable}

We have that $\QCIP$ in itself is not a particularly interesting class, as it is already known that $\QIP = \IP = \PSPACE$. Therefore, since $\PSPACE = \mathsf{IP} \subseteq \QCIP$~\cite{shamir1992ip} and $\QCIP \subseteq \mathsf{QIP} = \PSPACE$~\cite{jain2011qip}, we actually have $\QCIP = \QIP = \IP = \PSPACE$. This implies that, for interactive proof systems allowing a polynomial number of interaction rounds, the computational power is independent of whether the proofs are classical or quantum, or whether the verifier is quantum. However, it is well-known that $\QIP[3] = \QIP$~\cite{kitaev2000parallelization}, which demonstrates that quantum interactive proofs with a constant number of rounds (specifically, $k \geq 3$) are more powerful than classical ones, as $\IP[k] = \AM$ for all $k \geq 2$. It seems unlikely that a similar result holds for quantum-classical interactive proofs, as the proof that $\QIP[3] = \QIP$ heavily relies on the ability to exchange quantum registers, creating entanglement between the verifier and the prover’s quantum states.

Trivially, for any $k\geq 1$ and any oracle $\bmO$, we have that $\QCMA^{\bmO} \subseteq \QCIP[k]^{\bmO}$ (the verifier does not have to make use of their ability to send a message to the prover, yielding the exact same class). Using the oracle $\bmO$ from~\cref{def:oracle} and the result from~\cref{prop:OracleSep}, we have that $ \left(\BQ \cdot \NP\right)^{\bmO} \subseteq \BQP^{\NP^{\bmO}}  \not\supset \QCMA^{\bmO} \subseteq \QCIP[k]^{\bmO}$, leading directly to the following corollary.
\begin{restatable}{corollary}{corQCIP}
For any $k\geq 1$, there exists an oracle $\bmO$ relative to which
\begin{align*}
    \QCIP[k]^{\bmO} \not\subset \left(\BQ \cdot \NP \right)^{\bmO}.
\end{align*}
\label{cor:QCIP_oracle}
\end{restatable}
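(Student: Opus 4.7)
The plan is to chain together three facts that are already assembled in the paragraph preceding the corollary, so the proof is essentially a one-line observation once the right ingredients are identified. The main work has already been done in establishing the oracle separation of Theorem~\ref{thm:2_ext} (via the underlying Proposition~\ref{prop:OracleSep} using the oracle $\bmO$ from Definition~\ref{def:oracle}); what remains is to transport that separation from $\QCMA$ up to $\QCIP[k]$.

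First, I would fix $k \geq 1$ and take $\bmO$ to be the oracle constructed (via OR $\circ$ Forrelation, following Aaronson et al.) in the proof of Theorem~\ref{thm:2_ext}. From that construction, we have the separation $\QCMA^{\bmO} \not\subseteq \BQP^{\NP^{\bmO}}$, i.e.\ there is a promise problem $A \in \QCMA^{\bmO}$ which is not in $\BQP^{\NP^{\bmO}}$. This is the key nontrivial input and I would cite Proposition~\ref{prop:OracleSep} directly rather than re-deriving it.

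Second, I would observe the trivial inclusion $\QCMA^{\bmO} \subseteq \QCIP[k]^{\bmO}$ for every $k \geq 1$: any $\QCMA$-verifier can be viewed as a $\QCIP[k]$-verifier that simply ignores its ability to send messages after receiving the prover's classical witness (or equivalently, requests the witness in the first round and ignores subsequent messages). Hence $A \in \QCIP[k]^{\bmO}$.

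Third, I would invoke the relativized inclusion $(\BQ \cdot \NP)^{\bmO} \subseteq \BQP^{\NP^{\bmO}}$, which follows from the properties of the $\BQ$-operator established in Proposition~\ref{prop:prop_BQ} (namely, that $\BQ \cdot \mathcal{C}$ is contained in $\BQP$ with oracle access to any $\mathcal{C}$-complete problem, and this relativizes). Combining the three facts gives $A \in \QCIP[k]^{\bmO}$ but $A \notin (\BQ \cdot \NP)^{\bmO}$, proving the corollary. The only place where care is needed is to verify that the inclusion $(\BQ \cdot \NP)^{\bmO} \subseteq \BQP^{\NP^{\bmO}}$ genuinely relativizes uniformly in $\bmO$ — but this is immediate from the definition of the $\BQ$-operator as given in Section~\ref{sec:BQ}, since the quantum reduction can be carried out with the same oracle $\bmO$ present throughout and the $\NP$-oracle call is similarly relativized. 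No new technical obstacle arises.
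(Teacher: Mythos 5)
Your proof is correct and follows essentially the same chain of inclusions as the paper: $\QCMA^{\bmO} \subseteq \QCIP[k]^{\bmO}$ trivially, $(\BQ \cdot \NP)^{\bmO} \subseteq \BQP^{\NP^{\bmO}}$ relativizes, and $\QCMA^{\bmO} \not\subseteq \BQP^{\NP^{\bmO}}$ from Proposition~\ref{prop:OracleSep}. No discrepancy.
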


The same result holds by considering $\QCAM[k]$ instead of $\QCIP[k]$, where is $\QCAM[k]$ the quantum generalization of $\AM[k]$ (see, for example,~\cite{agarwal2024quantum,aaronson2023certified}). This contrasts the classical result that $\AM[k] \subseteq \IP[k] =  \BP \cdot \NP$ for all $k \in \mathbb{N}_{\geq 2}$, which holds relative to all oracles. 

Intuitively, we believe that our statements for quantum-classical PCPs hold because one \emph{can} pull out the quantumness, while~\cref{cor:QCIP_oracle} for quantum-classical interactive proof systems holds because one \emph{cannot} pull out the quantumness.

\subsection{Discussion and open questions}
In this work, we have introduced a quantum version of the $\BP$-operator and applied it to explore interactive proof systems with a quantum verifier and classical messages. For quantum-classical PCPs, many open questions appear to be resolved by this work, as achieving an unconditional separation from $\QCMA$ would require overcoming the $\Pclass$ versus $\PSPACE$ barrier, which seems beyond current techniques.

For future work, we would like to extend the direction of this work to general quantum-
classical interactive proof systems, for which we believe that the concept of not being able to ``pull
out the quantumness'' has many implications. We conjecture that, relative to an oracle, one can demonstrate that no round reduction exists for a constant number of rounds, and that $\QCAM[k]$ does not contain $\QCIP[k]$, highlighting the distinction between ``quantumness'' and randomness. Another direction worth exploring is whether, for some fixed $k \geq 2$, $\QCIP[k]$ admits perfect completeness, possibly with a constant overhead in the number of rounds. Additionally, what is the relationship between $\QCIP[k]$ and $\QIP[2]$ for larger values of $k$?

For the $\BQ$-operator, it would be interesting to investigate whether it has broader applications. Specifically, are there other problems that allow a quantum reduction to a classical problem? Some examples in a non-quantum PCP context can be found in cryptography~\cite{regev2009lattices,debris2023quantum}, although the reduction in~\cite{regev2009lattices} has also been shown to exist in a classical setting~\cite{Peikert2009Public}.

\section{Preliminaries}
\label{sec:prelim}

\paragraph{Notation} 
We use $[N]$ to denote the set $\{1, \dots, N\}$. For a random variable $X$, $\Pr[X]$ and $\mathbb{E}[X]$ represent its probability and expectation value, respectively. Let $G_T = \{0, \frac{1}{T}, \frac{2}{T}, \dots, 1 - \frac{1}{T}, 1\}$ be the set of $T+1$ evenly spaced numbers in the interval $[0, 1]$. We denote the $d$-dimensional identity matrix as $\mathbb{I}_d$, and omit the subscript when the dimension is clear from the context. Finally, $\norm{\cdot}_{\infty}$ refers to the uniform norm, i.e., for any real- or complex-valued bounded function $f$ defined on a set $S$, the uniform norm is given by $\norm{f}_{\infty} = \sup_{x \in S} \abs{f(x)}$.

\subsection{Complexity theory}
\paragraph{Promise versus classes of decision problems.} Almost all classes in this work will be defined with respect to \textit{promise} problems: a promise problem $A$ is defined by a tuple $A = (A_\textup{\sc yes}, A_\textup{\sc no}, A_\textup{\sc inv})$, where $A_\textup{\sc yes} \cap A_\textup{\sc no} \cap A_\textup{\sc inv} = \emptyset$ and $A_\textup{\sc yes} \cup A_\textup{\sc no} \cup A_\textup{\sc inv} = \{0,1\}^*$. We say $A_\textup{\sc yes}$, $A_\textup{\sc no}$ and $A_\textup{\sc inv}$ are  the set of {\sc yes}-instances, {\sc no}-instances, and {\sc Invalid}-instances, respectively. We will usually denote a promise problem as $A = (A_\textup{\sc yes}, A_\textup{\sc no})$, as the sets $A_\textup{\sc yes}$ and $A_\textup{\sc no}$ implicitly specify $A_\textup{\sc inv}$. On the contrary, a \textit{decision} problem $D$ (or a \emph{language}) consists of only two non-intersecting sets $(D_\textup{\sc yes}, D_\textup{\sc no})$ with $D_\textup{\sc yes} \cup D_\textup{\sc no} = \{0,1\}^*$. This means that $D$ can alternatively be viewed as a promise problem $D' = (D'_\textup{\sc yes}, D'_\textup{\sc no}, D'_\textup{\sc inv})$ where $D'_\textup{\sc yes} = D_\textup{\sc yes}$, $D'_\textup{\sc no} = D_\textup{\sc no}$, and $D'_\textup{\sc inv} = \emptyset$ (i.e., the promise holds on all instances $x \in \{0,1\}^*$). Another way to view it is to note that decision problems are characterized by \textit{total} Boolean functions, while promise problems are characterized by \textit{partial} Boolean functions. Given a (promise) problem $A = (A_\textup{\sc yes}, A_\textup{\sc no})$, we say that an algorithm $\mathcal{A}$ \textit{accepts} when it outputs `$x \in A_\textup{\sc yes}$', and it \textit{rejects} when it outputs `$x \in A_\textup{\sc no}$'. For classical algorithms, this occurs when the algorithm terminates with the most significant output bit being one. For a quantum algorithm, this occurs when a measurement in the computational basis of the final state yields a measurement outcome of `1' for the first qubit. If $\mathcal{A}$ is run on an instance $x \in A_\textup{\sc inv}$, it is allowed to output arbitrarily.

\paragraph{Complexity classes.} We assume basic familiarity with complexity classes; see the Complexity Zoo for precise definitions.\footnote{\url{https://complexityzoo.net/Complexity_Zoo}.} In this work, all \textit{quantum} classes will be understood as promise classes, while classical deterministic classes will be treated as such only if explicitly stated. For example, when we write $\BQP$, we implicitly mean $\mathsf{PromiseBQP}$, but we will explicitly write $\PromiseNP$ when referring to $\NP$ where the verifier is allowed to output arbitrarily on inputs outside the promise.

\paragraph{Quantum query model.} In quantum query algorithms, there are typically two types of query access to an oracle that encodes a Boolean function $ f: \{0, 1\}^n \rightarrow \{0, 1\} $. In the standard query model, the oracle acts as an $n+1$-qubit unitary operator $ U_f $, defined by its action on basis states as
\begin{align*}
U_f \ket{x} \ket{a} = \ket{x} \ket{a \oplus f(x)}
\end{align*}
where $a \in \{0,1\}$. Another model is the \textit{phase query}, where the oracle is accessed via a unitary operator $ U_{f,\textup{phase}}$ defined by
\begin{align*}
U_{f,\textup{phase}} \ket{x} = (-1)^{f(x)} \, \ket{x}.
\end{align*}
Using the phase kickback trick, a phase query can be implemented at the cost of a single standard query. Therefore, in this work, we will always assume standard query access and freely use both phase and standard queries without further justification. For a promise class $\mathcal{C}$, we denote $V^{\mathcal{C}}$ to indicate that an algorithm $V$ has access to an oracle for any problem $A = (A_\text{{\sc yes}}, A_\text{{\sc no}}, A_\text{{\sc inv}})$ in $\mathcal{C}$. If $V$ makes invalid queries (i.e., $x \in A_\text{{\sc inv}}$), the oracle may respond arbitrarily with a {\sc yes} (`$1$') or {\sc no} (`$0$') answer~\cite{goldreich2006promise, gharibian2019complexity}. In this work, the same principle will be applied when an oracle encodes a partial Boolean function $f$ and queries are made outside the domain of $f$.

\subsection{The OR \texorpdfstring{$\circ$}{circ} Forrelation oracle } 
\label{subsec:or_forr_oracle}
We will use an oracle used in~\cite{aaronson2021acrobatics}, which crucially relies on the following result by Raz and Tal~\cite{raz2022oracle}.

\begin{lemma}[From~\cite{raz2022oracle}, Theorem 1.2] For all sufficiently large $N$, there exists an explicit distribution $\mathcal{F}_N$ that we call the Forrelation distribution over $\{0,1\}^N$ such that:
\begin{enumerate}
    \item There exists a quantum algorithm $\mathcal{A}$ that makes $\polylog(N)$ queries and runs in time $\polylog(N)$ such that
    \begin{align*}
        \abs{\Pr_{x \in \mathcal{F}_N} [\mathcal{A}(x)=1] - \Pr_{y \{0,1\}^N} [\mathcal{A}(y)=1] } \geq 1-\frac{1}{N^2}. 
    \end{align*}
    \item For any $C \in \mathsf{AC}^0[\text{quasipoly}(N),\mO(1)]$:
    \begin{align*}
        \abs{\Pr_{x \in \mathcal{F}_N} [C(x)=1] - \Pr_{y \{0,1\}^N} [C(y)=1] } \leq \frac{\text{polylog}(N)}{\sqrt{N}}
    \end{align*}
\end{enumerate}
\label{lem:forr}
\end{lemma}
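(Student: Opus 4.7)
The plan is to construct an explicit Forrelation distribution and then handle the quantum upper bound and the $\mathsf{AC}^0$ lower bound separately. For the distribution, I would take $N = 2 \cdot 2^n$ and view each input string as a pair $(f,g)$ of $n$-qubit Boolean truth tables. Sampling proceeds in two steps: first draw a Gaussian vector $v \in \mathbb{R}^{2^n}$ with a suitably chosen small covariance and truncate so that every coordinate lies in a bounded range; then set $f_i = \mathrm{sign}(v_i)$ and sample each $g_j$ from the sign of the $j$-th Walsh--Fourier coefficient of $v$. This produces a distribution on $\{0,1\}^N$ whose two halves are Fourier-correlated while each marginal is nearly uniform.

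For part 1, I would use Aaronson's one-query forrelation circuit: apply $H^{\otimes n}$ to $\ket{0^n}$, phase-query $f$, apply $H^{\otimes n}$, phase-query $g$, apply $H^{\otimes n}$, and measure, accepting on outcome $0^n$. A direct calculation shows the acceptance probability equals $\tfrac{1}{2}(1+\Phi(f,g))$, where $\Phi(f,g) = 2^{-3n/2}\sum_{x,y} f(x)(-1)^{x\cdot y}g(y)$ is the forrelation. Under $\mathcal{F}_N$ one has $\Phi \approx 1$ with high probability, whereas under uniform $\Phi$ is approximately Gaussian with variance $O(1/N)$, so the acceptance probability concentrates near $1$ in the first case and near $1/2$ in the second. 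Running $\polylog(N)$ independent copies and applying a majority threshold amplifies the distinguishing advantage to $1 - 1/N^2$.

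For part 2, I would follow the Fourier-analytic strategy of Raz and Tal. Expand
$$\mathbb{E}_{x \sim \mathcal{F}_N}[C(x)] - \mathbb{E}_{y \sim U}[C(y)] = \sum_{S \neq \emptyset} \widehat{C}(S)\,\mathbb{E}_{x \sim \mathcal{F}_N}[\chi_S(x)],$$
and bound this sum level by level using two ingredients: (i) Tal's level-$k$ Fourier mass bound for $\mathsf{AC}^0$ circuits of size $s$ and depth $d$, roughly $L_{1,k}(C) \leq O(\log s)^{(d-1)k}\sqrt{k!}$, proved via multi-switching lemmas and random restrictions; and (ii) a moment calculation on the Gaussian construction giving $\max_{|S|=k}|\mathbb{E}_{\mathcal{F}_N}[\chi_S]| \leq (O(\log N)/\sqrt{N})^{k}$. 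Combining these pointwise products and summing over $k$, noting that levels above $\polylog(N)$ contribute negligibly, yields the desired $\polylog(N)/\sqrt{N}$ bound.

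The main obstacle is clearly part 2. Both Tal's level-$k$ bound and the Fourier moment estimate for $\mathcal{F}_N$ are individually nontrivial: the first requires a delicate sharpening of H{\aa}stad's switching lemma, while the second requires a careful moment computation together with a coupling between the truncated Gaussian and its sign pattern. The delicate glue step is tuning the truncation parameter so that the distribution is genuinely supported on $\{0,1\}^N$ while still matching the required Gaussian moments up to level $\polylog(N)$; once that is done, the quantum upper bound, the Fourier expansion, and the level-by-level summation are all essentially routine.
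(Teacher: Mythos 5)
The paper does not prove this lemma at all — it is quoted verbatim from Raz and Tal (your Theorem 1.2) and used as a black box; there is no ``paper's own proof'' to compare against. So the only thing to evaluate is whether your sketch faithfully reconstructs the Raz--Tal argument, and there it has two substantive problems.

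First, the rounding step is wrong. Raz and Tal do not set $x_i = \mathrm{sign}(z_i)$; they use \emph{randomized} rounding, drawing $x_i \in \{\pm 1\}$ independently with $\Pr[x_i = 1] = (1+z_i)/2$. This is not a cosmetic detail: it is exactly what makes $\mathbb{E}_x[p(x)] = \mathbb{E}_z[p(z)]$ for every multilinear polynomial $p$, which is the identity that lets them transfer the entire Fourier-moment calculation to the (truncated) Gaussian $z$ and apply Wick-type moment bounds. With deterministic sign rounding, $\mathbb{E}[\chi_S(x)]$ is no longer a Gaussian multilinear moment, your ingredient (ii) does not follow, and the level-by-level pairing with Tal's $L_{1,k}$ bound breaks. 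Second, your calibration of the quantum side is off: under the Raz--Tal distribution the forrelation value $\Phi$ is \emph{small}, on the order of the variance parameter $\epsilon \approx 1/\mathrm{polylog}(N)$, not $\approx 1$ with high probability. (You are describing the Aaronson--Ambainis promise-Forrelation distribution, which has large $\Phi$ but is not the distribution for which the $\mathsf{AC}^0$ bound is known.) The quantum distinguisher therefore has advantage $\Theta(\epsilon)$ per trial, and the $1 - 1/N^2$ success probability comes from $O(\mathrm{polylog}(N))$ repetitions plus Chernoff, not from $\Phi$ being near $1$. Relatedly, the circuit you wrote (three Hadamard layers and two phase queries, accept on $\ket{0^n}$) accepts with probability $\Phi^2$, not $\tfrac12(1+\Phi)$; the latter requires wrapping the unitary in a Hadamard test. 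Your overall outline of part 2 (Fourier expansion, Tal's level-$k$ $L_1$ bound, moment bound for the hard distribution, sum over levels) is the right skeleton of Raz--Tal, but the construction of $\mathcal{F}_N$ itself must be corrected to randomized rounding before any of that machinery engages.
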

The precise definition of the Forrelation distribution is not relevant for our purposes, but can be found in~\cite{raz2022oracle} In~\cite{aaronson2021acrobatics} this result is used to provide oracle separations between $\BQP^\PH$ and $\PH^{\BQP}$. We will use the following specific oracle from\cite{aaronson2021acrobatics}.

\begin{definition}[OR $\circ$ Forrelation oracle, adapted from~\cite{aaronson2021acrobatics}] We define $\bmO$ as the oracle such that for each $n\in \mathbb{N}$, we add into $\bmO$ a region $\mathcal{R}$ consisting of a function $f_n : \{0,1\}^{n^2} \times \{0,1\}^{n^2} \rightarrow \{0,1\}$, defined as follows:
\begin{enumerate}[label=(\roman*)]
    \item If $0^n \notin L^{\bmO}$ (i.e. $L^{\bmO}(0^n)=0$), then we have that $f_n(x,y)$ is drawn uniformly at random for all $(x,y)\in \{0,1\}^{n^2}$ .
    \item If $0^n \in L^{\bmO}$ (i.e. $L^{\bmO}(0^n)=1$), then there exists one $\hat{x}$, drawn uniformly at random from the set $\{0,1\}^{n^2}$ such that for all $y\in \{0,1\}^{n^2}$ the function values of $f_n(\hat{x},y)$ are drawn from the Forrelation distribution $\mathcal{F}_{2^{n^2}}$. For all $x \in \{0,1\}^{n^2}\setminus \{\hat{x}\}$, we again have that all entries $y\in \{0,1\}^{n^2}$ of  $f_n(x,y)$ are drawn uniformly at random.
\end{enumerate}
Outside the region $\mathcal{R}$ the oracle $\bmO$ always outputs $0$.
\label{def:oracle}
\end{definition}
Let us show that the same oracle implies an oracle separation between $\QCMA$ and $\BQP^{\NP}$. We only have to show that the oracle problem is contained in $\QCMA^{\bmO}$, as non-containment in $\BQP^{\NP}$ is already shown in~\cite{aaronson2021acrobatics}.
\begin{proposition}\label{prop:OracleSep}
There exists an oracle $\bmO$ relative to which
\begin{align*}
    \QCMA^{\bmO} \not\subset  \BQP^{\NP^{\bmO}}.
\end{align*}
\end{proposition}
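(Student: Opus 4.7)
The plan is to exhibit an explicit $\QCMA^{\bmO}$ protocol that decides membership in $L^{\bmO}$. Given input $0^n$, the verifier expects a classical witness $\hat{x} \in \{0,1\}^{n^2}$ interpreted as the candidate ``hidden'' string from clause (ii) of~\cref{def:oracle}. The verifier then simulates the Raz--Tal distinguisher $\mathcal{A}$ of~\cref{lem:forr} on the induced Boolean function $g(y) := f_n(\hat{x}, y)$ on $\{0,1\}^{n^2}$, implementing each quantum query to $g$ by a quantum query to $\bmO$ that hard-codes $\hat{x}$ in the first register. Setting $N = 2^{n^2}$, the algorithm $\mathcal{A}$ uses $\polylog(N) = \poly(n)$ queries and runs in $\poly(n)$ time, so this yields a polynomial-time quantum verifier.

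For \textbf{completeness}, suppose $L^{\bmO}(0^n) = 1$ and the prover sends the $\hat{x}$ promised by the oracle construction; then $g \sim \mathcal{F}_N$ and by~\cref{lem:forr} the expected acceptance probability of $\mathcal{A}$ is at least $1 - 1/N^2$ (possibly after a trivial output flip). For \textbf{soundness}, suppose $L^{\bmO}(0^n) = 0$; then for \emph{every} $\hat{x} \in \{0,1\}^{n^2}$, the slice $g = f_n(\hat{x},\cdot)$ is a uniformly random string, and by~\cref{lem:forr} the expected acceptance probability is at most $1/N^2$. The subtlety here is that the prover is adversarial and can choose $\hat{x}$ after seeing the oracle, so I would turn the in-expectation bound into a high-probability bound: Markov's inequality gives, for each fixed $\hat{x}$, a probability at most $3/N^2$ (over the random draw of the oracle slice) that $\mathcal{A}$ accepts with probability exceeding $1/3$. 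A union bound over the $N$ possible witnesses yields failure probability at most $3/N$, which vanishes super-polynomially in $n$. A symmetric Markov argument handles the completeness case. Standard completeness/soundness amplification (repeating $\mathcal{A}$ and taking majority) then pushes both error bounds below any desired constant.

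Finally, one has to argue that these per-$n$ high-probability statements can be combined into a single oracle under which the $\QCMA^{\bmO}$ protocol works for all input lengths. Because the failure probability at length $n$ decays faster than any polynomial, Borel--Cantelli implies that with probability one over the (self-consistent) choice of $\bmO$ prescribed by~\cref{def:oracle}, the protocol succeeds on all but finitely many $n$; patching the oracle on a finite initial segment does not affect the asymptotic complexity class. Combined with the already-known non-containment $L^{\bmO} \notin \BQP^{\NP^{\bmO}}$ from~\cite{aaronson2021acrobatics}, this proves~\cref{prop:OracleSep}. The main technical obstacle I anticipate is purely bookkeeping: making the recursive definition of $L^{\bmO}$ rigorous and verifying that the uniform-randomness structure invoked in the soundness union bound is preserved when conditioning on the Yes/No status of other input lengths; this is, however, precisely the setup already handled in~\cite{aaronson2021acrobatics}, so I would import their oracle-sampling framework rather than reconstructing it.
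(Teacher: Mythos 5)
Your proposal is correct and follows essentially the same route as the paper: both use the Raz--Tal distinguisher of \cref{lem:forr} as the body of a $\QCMA^{\bmO}$ verifier that receives the candidate hidden slice index $\hat{x}$ as a classical witness, both convert the in-expectation distinguishing guarantee into a per-$n$ high-probability statement via Markov, and both invoke Borel--Cantelli to obtain a single oracle working for all but finitely many $n$, importing non-containment in $\BQP^{\NP^{\bmO}}$ from \cite{aaronson2021acrobatics}. Your treatment is in fact slightly more careful than the paper's write-up on the soundness side: you explicitly union-bound over the $2^{n^2}$ adversarial witnesses $\hat{x}$ (yielding failure probability $O(2^{-n^2})$ rather than the $O(2^{-2n^2})$ the paper states for a single fixed $z$), which is the right thing to do since the prover chooses $\hat{x}$ after seeing $\bmO$; the Borel--Cantelli sum still converges, so both presentations reach the same conclusion.
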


\begin{proof}
The proof follows the construction used in the proofs of Corollary 48 and Claim 49 in~\cite{aaronson2021acrobatics}. Let $L$ be a uniformly random unary language. Let $\mathcal{D}$ be the resulting distribution over oracles $\bmO$. We will now show that $L^{\bmO} \in \QCMA^{\bmO}$ with probability $1$ over $\bmO \sim \mathcal{D}$. Let $M^{\bmO}(0^n,z)$ be a $\QCMA^{\bmO}$ verifier which takes as input $0^n$ and witness $z \in \{0,1\}^{n^2}$, and runs the quantum algorithm $\mathcal{A}$ of~\cref{lem:forr} on $f_n(z,y)$. In the case when $0^n \in L$, this witness will be $z=\hat{x}$, and in the case $0^n \notin L$, it can be any $z \in \{0,1\}^{n^2}$. It follows from~\cref{lem:forr} that with $N=2^n$, we have
\begin{align*}
    \Pr_{\bmO \sim \mathcal{D}} [M^{\bmO}(0^n,z) \neq L^{\bmO}(0^n) ] \leq 2^{-2n^2}.
\end{align*}
By Markov's inequality, we then have
\begin{align*}
    \Pr_{\bmO \sim \mathcal{D}} [\Pr [M^{\bmO}(0^n,z) \neq L^{\bmO}(0^n)] \geq 1/3] \leq 3\cdot 2^{-2n^2}.
\end{align*}
Using the Borel-Cantelli Lemma, we can now argue that with probability $1$ over $\bmO$, we have that $M^{\bmO}$ correctly decides $L^{\bmO}(0^n)$ for all but finitely many $n \in \mathbb{N}$. We have
\begin{align*}
    \Pr_{\bmO \sim \mathcal{D}}[M^{\bmO} \text{ does not decide } L^{\bmO}(0^n)] \leq \sum_{n=1}^\infty 3\cdot 2^{-2n^2} < \infty.
\end{align*}
Hence, the probability that $M^{\bmO}$ fails on infinitely many inputs $n$ is zero. Consequently, we have that $M^{\bmO}$ can be modified into a $\QCMA^{\bmO}$ algorithm that correctly decides $L^{\bmO}(0^n)$ for all $n \in \mathbb{N}$, with probability $1$ over $\bmO \sim \mathcal{D}$. The proof that $L^{\bmO} \notin \BQP^{\PH^{\bmO}}$ (and therefore $L^{\bmO} \notin \BQP^{\NP^{\bmO}}$) is given in~\cite{aaronson2021acrobatics}, Corollary 48.
\end{proof}

\section{Quantum reductions and the \texorpdfstring{$\BQ$}{BQ}-operator}
\label{sec:BQ}
In this section, we will briefly review Sch\"oning's $\BP$-operator and some of its properties, and introduce a quantum analogue, which we denote as the `$\BQ$-operator'. We will see that it shares many of the same properties as the $\BP$-operator.

\subsection{Randomized reductions and Sch\"oning's \texorpdfstring{$\BP$}{BP}-operator}
Let us first recall the definition of randomized reductions.
\begin{definition}[Randomized reductions]
Let $A = (A_\textup{\sc yes},A_\textup{\sc no})$ and $B = (B_\textup{\sc yes},B_\textup{\sc no})$ be promise problems. We say that $A \leq_r B$ if there exists a polynomial-time probabilistic Turing machine $M$ such that:
\begin{itemize}
    \item Completeness: If $x \in A_\textup{\sc yes}$, then $\Pr_{z} [M(x,z) \in B_\textup{\sc yes}] \geq 2/3$,
    \item Soundness: If $x \in A_\textup{\sc no}$, then $\Pr_{z} [M(x,z) \in B_\textup{\sc no}] \geq 2/3$,
\end{itemize}
where $z$ are the random bits used by the Turing machine $M$.
\label{def:randred}
\end{definition}

\noindent Note that randomized reductions as defined here are not transitive: i.e., $A \leq_r B $ and $B \leq_r C$ do not necessarily imply $A \leq_r C$.\footnote{This property would hold if the probabilities in the randomized reductions could be amplified.} 

Using~\cref{def:randred}, the $\BP$-operator applied to a class with a complete problem $B$ gives a new class that contains all problems with randomized reductions to this complete problem $B$. 

\begin{definition}[$\BP$-operator]
Let $\mathcal{C}$ be a class of (promise) problems. Then $\BP \cdot \mathcal{C}$ consists of all (promise) problems $A = (A_\textup{\sc yes},A_\textup{\sc no})$ that can be reduced to a complete (promise) problem $B = (B_\textup{\sc yes},B_\textup{\sc no}) \in \mathcal{C}$ by a polynomial-time randomized reduction with success probability $\geq 2/3$, i.e.,
\begin{align*}
    \BP \cdot \mathcal{C} = \{A : A \leq_r B\}.
\end{align*}
\label{def:BP}
\end{definition}
The following alternative definition is also sometimes adopted in the literature.

\begin{subdefinition} 
Let $\mathcal{C}$ be any class of promise problems. Then $\BP \cdot \mathcal{C}$ consists of all promise problems $A = (A_\textup{\sc yes},A_\textup{\sc no})$ for which there exists a promise problem $B = (B_\textup{\sc yes},B_\textup{\sc no}) \in \mathcal{C}$, and a polynomial $p$, such that for all $x$ with $|x| = n$, the following conditions hold:
\begin{itemize}
    \item Completeness: If $x \in A_\textup{\sc yes}$, then $\Pr_{z} [(x,z) \in B_\textup{\sc yes}] \geq 2/3$,
    \item Soundness: If $x \in A_\textup{\sc no}$, then $\Pr_{z} [(x,z) \in B_\textup{\sc no}] \geq 2/3$,
\end{itemize}
where $z \in \{0,1\}^{p(n)}$ is uniformly distributed.
\label{def:BP_alt}
\end{subdefinition}

It is easy to see that both definitions are equivalent when $\mathcal{C} \supseteq \Pclass$, by taking $M(x,z)$ from~\cref{def:randred} to be the instance $(x,z)$ from~\hyperref[def:BP_alt]{Definition~\ref*{def:BP_alt}}. The condition $\mathcal{C} \supseteq \Pclass$ is necessary because~\cref{def:BP} always captures at least the power of $\BPP$, whereas this might not necessarily be true for~\hyperref[def:BP_alt]{Definition~\ref*{def:BP_alt}} when $\mathcal{C}$ is a very small class. From the above definitions, it is also clear that $\BP \cdot \Pclass = \BPP$. 

Let us now define some basic properties that are known for the $\BP$-operator. For this, we need the notion of \textit{majority reducibility}: we say a problem $A$ is majority reducible to a problem $B$, denoted $A \leq_\text{maj}^p B$, if there is a polynomial-time computable function $f$ mapping strings to sequences of strings such that for all $x$, if $f(x) = (y_1,\dots,y_k)$, then:
\begin{itemize}
    \item  $x \in A_\textup{\sc yes} \Rightarrow y_i \in B_\textup{\sc yes}$ for more than $k/2$ of the indices $i$,
    \item  $x \in A_\textup{\sc no} \Rightarrow $ $y_i \in B_\textup{\sc no}$ for more than $k/2$ of the indices $i$.
\end{itemize}

It is well-known that if a class $\mathcal{C}$ is closed under majority reducibility, i.e., if $A \leq_\text{maj}^p B$ and $B \in \mathcal{C}$ implies that $A \in \mathcal{C}$, then the soundness and completeness parameters of class $\BP \cdot \mathcal{C}$ can be amplified to become inverse exponentially close to $1$ and $0$, respectively~\cite{schoning1989probabilistic}.

\begin{proposition}[Some properties of the $\BP$-operator, adapted from~\cite{kobler2012graph}]
For all classes $\mathcal{C}, \mathcal{D}$, the following hold:
\begin{enumerate}[label=(\roman*)]
    \item $\mathcal{C} \subseteq \mathcal{D}$ implies $\BP \cdot \mathcal{C} \subseteq \BP \cdot \mathcal{D}$.
    \item $\BP \cdot \BP \cdot \mathcal{C} = \BP \cdot \mathcal{C}$ if the class $\mathcal{C}$ is closed under majority reducibility.
    \item $\BP \cdot \mathcal{C} \subseteq \BPP^{\mathcal{C}}$.
\end{enumerate}
\label{prop:prop_BP}
\end{proposition}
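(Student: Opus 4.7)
The plan is to work with \cref{def:BP_alt} (which coincides with \cref{def:BP} whenever $\mathcal{C} \supseteq \Pclass$) and prove each item by direct manipulation of the witnessing target problem together with its randomized predicate.

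Items~(i) and~(iii) are both straightforward. For~(i), if $B \in \mathcal{C}$ and a polynomial $p$ witness $A \in \BP \cdot \mathcal{C}$, then $\mathcal{C} \subseteq \mathcal{D}$ gives $B \in \mathcal{D}$, and the very same predicate certifies $A \in \BP \cdot \mathcal{D}$. For~(iii), a $\BPP^{\mathcal{C}}$ algorithm simply samples $z \in \{0,1\}^{p(n)}$ uniformly, queries its $\mathcal{C}$-oracle on $(x,z)$, and returns the answer. Correctness is inherited from the $\BP \cdot \mathcal{C}$ predicate; the only subtlety is that the oracle may answer adversarially when $(x,z)$ lands in $B_{\textup{\sc inv}}$, but these events contribute at most $1/3$ to the error budget and the $\BPP$ promise gap is preserved.

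The substantive content is item~(ii). The inclusion $\BP \cdot \mathcal{C} \subseteq \BP \cdot \BP \cdot \mathcal{C}$ is immediate from~(i) together with the trivial inclusion $\mathcal{C} \subseteq \BP \cdot \mathcal{C}$ (use an empty random tape). For the reverse inclusion, suppose $A \in \BP \cdot \BP \cdot \mathcal{C}$ via an outer witness $B \in \BP \cdot \mathcal{C}$ with outer randomness $z$, and $B$ in turn via an inner witness $C \in \mathcal{C}$ with inner randomness $w$. Naively composing the two predicates produces a single-level reduction into $C$ with success probability only $\geq (2/3)^2 = 4/9$, so pre-amplification is required. I would invoke the Schöning amplification cited in the paragraph preceding the proposition to boost both levels independently to success probability $\geq 1 - 1/12$; the composed predicate then succeeds with probability $\geq (1 - 1/12)^2 > 2/3$. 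The target of the composed reduction is a nested majority of $C$-predicates over the copies used for amplification, which lies in $\mathcal{C}$ by a final application of majority closure, whence $A \in \BP \cdot \mathcal{C}$.

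The main obstacle is making this double amplification watertight in the presence of promise-invalid instances, and in particular verifying that the outer $\BP$-layer is itself amplifiable. For the latter, one observes that $\BP \cdot \mathcal{C}$ inherits majority-reducibility closure from $\mathcal{C}$: given $B' \leq_{\mathrm{maj}}^{p} B$ with $B \in \BP \cdot \mathcal{C}$, one reduces $B'$ to the majority over per-vote copies of $C$, which lies in $\mathcal{C}$ by hypothesis. This justifies applying Schöning's theorem at both levels. For the former, the inner predicate offers no guarantee when $(x,z) \in B_{\textup{\sc inv}}$, an event that can occur with probability up to $1/3$ per outer sample; the cleanest remedy is to amplify the outer reduction first so that such invalid outer samples occur with only arbitrarily small constant probability, and only then amplify the inner reduction within each outer sample, so that no Chernoff bound is ever applied to a $B_{\textup{\sc inv}}$-query. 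Once this ordering is carried out the composition packages cleanly into a single $\BP \cdot \mathcal{C}$ protocol.
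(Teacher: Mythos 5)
Your approach matches the paper's: for item (ii) the paper merely cites~\cite{kobler2012graph} and spells out the analogous amplify-then-compose argument only for the $\BQ$-operator (Lemma~\ref{lem:prop_ampl} and Proposition~\ref{prop:prop_BQ}), while items (i) and (iii) are dismissed as immediate in both treatments. Your handling of $B_\textup{\sc inv}$-mass by first amplifying the outer reduction and then conditioning is the same move the paper makes when it writes $\Pr_{y,z}[(x,y,z)\in C_\textup{\sc yes}]\geq \Pr_z[\,\cdot\mid (x,y)\in B_\textup{\sc yes}]\cdot\Pr_y[(x,y)\in B_\textup{\sc yes}]\geq c^2(n)$ for the $\BQ$ case.

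One step deserves a bit more care, though it does not derail the argument. You justify the outer amplification by claiming that $\BP\cdot\mathcal{C}$ inherits majority closure from $\mathcal{C}$ because one can ``reduce $B'$ to the majority over per-vote copies of $C$.'' With a single copy of the $C$-predicate per vote, success probability only $2/3$, and a bare majority $|S|>k/2$ of the $y_i$ lying in $B_\textup{\sc yes}$, the expected fraction of votes landing in $C_\textup{\sc yes}$ can be as low as $\tfrac{2}{3}\cdot\tfrac{|S|}{k}$, which need not exceed $1/2$; a flat majority over all $k$ votes therefore does not give the required $2/3$. The fix is to pre-amplify the per-vote $B$-to-$C$ predicate (using $\mathcal{C}$'s majority closure) to error $\ll 1/k$, and only then take the outer majority over the $k$ votes --- a nested majority, which still lies in $\mathcal{C}$ by two applications of closure. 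This is a routine repair and the rest of your outline is sound, so I would call it an imprecision rather than a genuine gap.
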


Properties (i) and (iii) follow directly from the definition of the $\BP$-operator, while (ii) can be shown using probability amplification~\cite{kobler2012graph}. If we take $\mathcal{C} = \NP$, and for example, choose $\mathsf{3}\text{-}\mathsf{SAT}$ as the $\NP$-complete problem (the result is independent of the choice of the $\NP$-complete problem), we arrive at $\BP \cdot \NP$, i.e., the class of all problems that have randomized reductions to $\mathsf{3}\text{-}\mathsf{SAT}$. It is a well-known result that $\BP \cdot \NP = \AM$~\cite{schoning1989probabilistic}. Finally, note that the $\BP$-operator can indeed be viewed as an operator that `pulls out the randomness' from a complexity class. For example, we have $\BP \cdot \MA = \BP \cdot \NP$, since $\BP \cdot \MA \subseteq \BP \cdot \AM = \BP \cdot \BP \cdot \NP = \BP \cdot \NP$ and $\BP \cdot \NP \subseteq \BP \cdot \MA$, using properties (i) and (ii) from~\cref{prop:prop_BP}.

\subsection{The \texorpdfstring{$\BQ$}{BQ}-operator}
Now that we have set the stage by recalling the definition and some of the properties of the $\BP$ operator, let us define an analogous quantum version. First, we introduce the notion of quantum reductions.

\begin{definition}[Quantum reductions] Let $A = (A_\textup{\sc yes},A_\textup{\sc no})$ and $B = (B_\textup{\sc yes},B_\textup{\sc no})$ be promise problems and $q(n) : \mathbb{N} \rightarrow \mathbb{N}$ be some polynomial. Let $\mathcal{A}$ be a quantum algorithm that takes as input the bit string representation of the instance of $A$, denoted as $x$ with $|x|=n$, applies a uniformly generated polynomial-time quantum circuit $V$ to $\ket{x}$ and a polynomial number of workspace qubits all initialised in $\ket{0}$, and measures $q(n)$ designated output qubits in the computational basis. Let $z \in \{0,1\}^{q(n)}$ be the outcome of this measurement. We say that $A \leq_q B$ if there exists a quantum algorithm $\mathcal{A}$ as described above such that
\begin{itemize}
    \item Completeness: if $x \in A_\textup{\sc yes} \Rightarrow \Pr_{z} [z \in B_\textup{\sc yes}] \geq 2/3$,
    \item Soundness: if $x \in A_\textup{\sc no} \Rightarrow \Pr_{y} [z \in B_\textup{\sc no}] \geq 2/3$,
\end{itemize}
where the probability is taken over the outcome strings $z$ of the quantum algorithm $\mathcal{A}$.
\end{definition}
Similar to randomised reductions, quantum reductions are also not transitive. Having formally defined our notion of quantum reduction, we can define the $\BQ$-operator in a similar way as the $\BP$-operator as per~\cref{def:BP} and~\ref{def:BP_alt}.

\begin{definition}[$\BQ$-operator] 
Let $\mathcal{C}$ be any class of (promise) problems. The class $\BQ \cdot \mathcal{C}$ consists of all (promise) problems $A = (A_\textup{\sc yes}, A_\textup{\sc no})$ that can be reduced to a complete (promise) problem $B = (B_\textup{\sc yes}, B_\textup{\sc no}) \in \mathcal{C}$ by a polynomial-time quantum reduction with success probability $\geq 2/3$, i.e.
\begin{align*}
    \BQ \cdot \mathcal{C} = \{P : P \leq_q B\}.
\end{align*}    
\label{def:BQ}
\end{definition}

Similarly, we can also provide an alternative definition, which can be viewed as feeding randomness generated by a quantum algorithm to be used by the verifier in the corresponding class. Crucially, the quantum algorithm's output is allowed to depend on the input $x$.

\begin{subdefinition}
    Let $\mathcal{C}$ be any class of (promise) problems. Then $\BQ \cdot \mathcal{C}$ consists of all (promise) problems $A = (A_\textup{\sc yes},A_\textup{\sc no})$ for which there exists a (promise) problem $B = (B_\textup{\sc yes}, B_\textup{\sc no}) \in \mathcal{C}$, and a polynomial-time quantum algorithm $\mathcal{A}$ such that for all $x$, $|x| = n$, 
\begin{itemize}
    \item Completeness: if $x \in A_\textup{\sc yes}$, then $\Pr_{y} [(x,z) \in B_\textup{\sc yes}] \geq 2/3$,
    \item Soundness: if $x \in A_\textup{\sc no}$, then $\Pr_{y} [(x,z) \in B_\textup{\sc no}] \geq 2/3$,
\end{itemize}
where $z \in \{0,1\}^{p(n)}$ is the outcome of a measurement in the computational basis of some designated $p(n)$-qubit register used by the quantum algorithm $\mathcal{A}(x)$.
\label{def:BQ_alt}
\end{subdefinition}

From the above definitions, it is clear that $\BQ \cdot \Pclass = \BQP$, just as $\BP \cdot \Pclass = \BPP$, as expected. By the same argument as before, when $\mathcal{C} \supseteq \Pclass$, both definitions are equivalent.

Let us now show that the same properties of the $\BP$-operator (\cref{prop:prop_BP}) also hold for the $\BQ$-operator. We will need the following amplification lemma, which can be proven in the same way as it was for the $\BP$-operator.

\begin{lemma}[Probability amplification for the $\BQ$-operator] 
If  $\mathcal{C}$ is a class closed under majority reducibility, then for every (promise) problem $A \in \BQ \cdot \mathcal{C}$ and every polynomial $p : \mathbb{N} \rightarrow \mathbb{N}$, there exists a (promise) problem $C \in \mathcal{C}$ such that for all $x$, $|x| = n$:
\begin{itemize}
    \item If $x \in A_\textup{\sc yes}$, then $\Pr_{z} [ (x,z) \in C_\textup{\sc yes}] \geq 1 - 2^{-p(n)}$,
    \item If $x \in A_\textup{\sc no}$, then $\Pr_{z} [ (x,z) \in C_\textup{\sc no}] \geq 1 - 2^{-p(n)}$,
\end{itemize}
where $z$ is the output of a quantum algorithm $\mathcal{B}$. 
\label{lem:prop_ampl}
\end{lemma}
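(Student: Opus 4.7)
The plan is to mirror the standard probability amplification argument for the classical $\BP$-operator almost verbatim, since the quantum nature of the reduction enters only through the fact that the output string $z$ is produced by measuring a polynomial-time quantum algorithm rather than by flipping coins. Concretely, given $A \in \BQ\cdot\mathcal{C}$ with witnessing problem $B \in \mathcal{C}$ and reduction algorithm $\mathcal{A}$ as in~\hyperref[def:BQ_alt]{Definition~\ref*{def:BQ_alt}}, I would define an amplified algorithm $\mathcal{B}$ that runs $k = \Theta(p(n))$ independent copies of $\mathcal{A}$ in parallel, each acting on a fresh block of workspace ancillas so that the measured outputs are statistically independent, and outputs the concatenation $z = (z_1,\ldots,z_k)$.

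The amplified problem $C$ is then defined as the majority promise problem over $B$: $(x,z) \in C_\textup{\sc yes}$ iff strictly more than $k/2$ of the pairs $(x,z_i)$ lie in $B_\textup{\sc yes}$, and analogously $(x,z) \in C_\textup{\sc no}$ iff strictly more than $k/2$ lie in $B_\textup{\sc no}$. These two sets are disjoint, so $C$ is a well-defined promise problem. A Chernoff bound on $k$ i.i.d.\ Bernoullis of bias at least $2/3$ then yields the desired $1 - 2^{-p(n)}$ completeness/soundness for $C$ under the distribution of $z$, once $k$ is a sufficiently large polynomial in $p(n)$; any $z_i$ for which $(x,z_i)$ falls in $B_\textup{\sc inv}$ is harmless, since such outcomes only contribute to the $\leq 1/3$ fraction of "wrong" trials already absorbed by the bound. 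Finally, to argue $C \in \mathcal{C}$, I would invoke closure of $\mathcal{C}$ under majority reducibility via the polynomial-time map $f(x,z_1,\ldots,z_k) = ((x,z_1),\ldots,(x,z_k))$: by the very definition of $C$, strictly more than half of the images lie in $B_\textup{\sc yes}$ (resp.\ $B_\textup{\sc no}$) exactly when $(x,z) \in C_\textup{\sc yes}$ (resp.\ $C_\textup{\sc no}$), which is precisely a majority reduction $C \leq^p_\text{maj} B$.

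The only step that is genuinely quantum is the construction of $\mathcal{B}$, and the only subtlety there is ensuring independence of the $z_i$; this is automatic once $\mathcal{A}$ is run on $k$ disjoint sets of ancilla qubits and the output registers are measured separately, so I do not foresee any essential obstacle. The rest is a direct adaptation of the classical $\BP$-operator argument and, in particular, requires no amplification scheme internal to $\mathcal{C}$ itself — all "repetition" is outsourced to $\mathcal{B}$, and $\mathcal{C}$ is only asked to absorb a single deterministic majority step, which is exactly what closure under $\leq^p_\text{maj}$ provides.
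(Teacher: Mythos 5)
Your proposal is correct and follows essentially the same route as the paper's proof: run $k = \Theta(p(n))$ independent copies of $\mathcal{A}$, concatenate the measured outputs, define $C$ as the majority problem over $B$, invoke a Chernoff bound for the $1-2^{-p(n)}$ bound, and use closure under majority reducibility to place $C$ in $\mathcal{C}$. The small additional remarks you make — that independence is ensured by disjoint ancilla blocks, and that outcomes landing in $B_\textup{\sc inv}$ are absorbed into the $\leq 1/3$ failure fraction — are correct and only make explicit what the paper leaves implicit.
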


\begin{proof}
Let $A = (A_\textup{\sc yes}, A_\textup{\sc no},A_\textup{\sc inv}) \in \BQ \cdot \mathcal{C}$ and $\mathcal{A}$ be the quantum algorithm used in the reduction. Then, there exists a problem $B = (B_\textup{\sc yes}, B_\textup{\sc no}, B_\textup{\sc inv}) \in \mathcal{C}$ such that
\begin{align*}
    &x \in A_\textup{\sc yes} \Rightarrow \Pr_{y}[(x,y) \in B_\textup{\sc yes}] \geq 2/3,\\
    &x \in A_\textup{\sc no} \Rightarrow \Pr_{y}[(x,y) \in B_\textup{\sc no}] \geq 2/3,
\end{align*}
where $y$ is a measurement outcome of $\mathcal{A}(x)$. Define $z = (z_1,\dots,z_k)$,  $|z| = cp(n)^2$, for some sufficiently large constant $c>0$, where each $z_i$ is a measurement outcome of a independent run of $\mathcal{A}$. By a standard Chernoff-bound argument, we have
\begin{align*}
   &x \in A_\textup{\sc yes} \Rightarrow \Pr_z[\text{the majority of } z_i \text{ satisfies } (x,z_i)\in B_\textup{\sc yes}] \geq 1 - 2^{-p(n)},\\
   &x \in A_\textup{\sc no} \Rightarrow \Pr_z[\text{the majority of } z_i \text{ satisfies }  (x,z_i)\in B_\textup{\sc no}] \geq 1 - 2^{-p(n)}.
\end{align*}
Now let $C = (C_\textup{\sc yes}, C_\textup{\sc no}, C_\textup{\sc inv})$ be a promise problem defined as:
\begin{itemize}
    \item $(x,z) \in C_\textup{\sc yes}$ if the majority of $z_i$ satisfies $(x,z_i) \in B_\textup{\sc yes}$,
    \item $(x,z) \in C_\textup{\sc no}$ if the majority of $z_i$ satisfies $(x,z_i) \in B_\textup{\sc no}$,
    \item $(x,z) \in C_\textup{\sc inv}$ otherwise.
\end{itemize}

Since $\mathcal{C}$ is closed under majority reducibility, we have that $C \in \mathcal{C}$ because $C \leq_\text{maj}^p B$, by constructing $(x,z)$ from the sequence $(x,z_1),\dots,(x,z_k)$ and using $B \in \mathcal{C}$. Therefore,
\begin{itemize}
    \item If $x \in A_\textup{\sc yes}$, then $\Pr_{z} [ (x,z) \in C_\textup{\sc yes}] \geq 1 - 2^{-p(n)}$,
    \item If $x \in A_\textup{\sc no}$, then $\Pr_{z} [ (x,z) \in C_\textup{\sc no}] \geq 1 - 2^{-p(n)}$.
\end{itemize}

The quantum algorithm $\mathcal{B}$ runs $k$ parallel executions of $\mathcal{A}$ and measures all $k$ designated output qubits to generate the instance $(x,z)$.
\end{proof}

\begin{restatable}{proposition}{propBQ} 
For all classes $\mathcal{C}, \mathcal{D}$, the following holds:
    \begin{enumerate}[label=(\roman*)]
        \item $\mathcal{C} \subseteq \mathcal{D}$ implies $\BQ \cdot \mathcal{C} \subseteq \BQ \cdot \mathcal{D}$.
        \item $\BP \cdot \mathcal{C} \subseteq \BQ \cdot \mathcal{C}$.
        \item $\BQ \cdot \BQ \cdot \mathcal{C} = \BQ \cdot \mathcal{C}$ if the class $\mathcal{C}$ is closed under majority reducibility.
        \item $\BQ \cdot \mathcal{C} \subseteq \BQP^{\mathcal{C}}$.
    \end{enumerate}
    \label{prop:prop_BQ}
\end{restatable}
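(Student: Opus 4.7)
The plan is to mirror the classical proof template for the analogous properties of the $\BP$-operator, relying crucially on the amplification result of \cref{lem:prop_ampl}. I will work with the alternative formulation (\hyperref[def:BQ_alt]{Definition~\ref*{def:BQ_alt}}) of $\BQ \cdot \mathcal{C}$ throughout, since expressing the reduction as ``produce a measurement outcome $z$ placing $(x,z)$ into a problem $B \in \mathcal{C}$'' makes the composition and oracle-simulation arguments transparent.

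Properties (i), (ii) and (iv) admit short, direct arguments that I would dispatch first. For (i), the same witness algorithm together with a target problem $B \in \mathcal{C} \subseteq \mathcal{D}$ immediately certifies $A \in \BQ \cdot \mathcal{D}$. For (ii), a classical randomized reduction is simulated by a quantum algorithm that generates its random bits via Hadamards on $\ket{0}$ and then performs the classical computation reversibly; measuring the designated output qubits reproduces the classical output distribution exactly, so the same target problem works. For (iv), a $\BQP^{\mathcal{C}}$ machine simulates a $\BQ \cdot \mathcal{C}$ reduction by running the reduction's quantum algorithm on $x$, measuring $z$, and making a single query to the $\mathcal{C}$-oracle on $(x,z)$; promise-invalid queries are handled by the standard oracle convention and are subsumed into the $1/3$ failure probability.

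The substantive content is property (iii). The containment $\BQ \cdot \mathcal{C} \subseteq \BQ \cdot \BQ \cdot \mathcal{C}$ is immediate. For the reverse inclusion, suppose $A \in \BQ \cdot (\BQ \cdot \mathcal{C})$ via an outer reduction $\mathcal{A}_1$ to some $B \in \BQ \cdot \mathcal{C}$, and an inner reduction $\mathcal{A}_2$ from $B$ to some $C \in \mathcal{C}$. My plan is: first apply \cref{lem:prop_ampl} to the inner reduction $\mathcal{A}_2$, obtaining an amplified reduction to a problem $C^\star \in \mathcal{C}$ with per-call error at most $2^{-p(n)}$ for a polynomial $p$ of my choosing. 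Then run $k$ independent parallel copies of $\mathcal{A}_1$ to produce samples $(y_1,\dots,y_k)$, and on each $(x,y_i)$ run the amplified $\mathcal{A}_2$ to produce $z_i$. Define the target problem $D$ by the majority predicate: $(x,\vec y,\vec z) \in D_{\textup{\sc yes}}$ iff strictly more than $k/2$ of the triples $((x,y_i),z_i)$ lie in $C^\star_{\textup{\sc yes}}$, and symmetrically for $D_{\textup{\sc no}}$. By construction $D \leq_\text{maj}^p C^\star$, so closure of $\mathcal{C}$ under majority reducibility yields $D \in \mathcal{C}$. A Chernoff bound over the $k$ independent trials then boosts the success probability above $2/3$, placing $A \in \BQ \cdot \mathcal{C}$.

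The step I expect to need the most care is handling the \emph{promise-invalid} outputs of the outer reduction in (iii): when $(x,y_i) \in B_{\textup{\sc inv}}$, the amplified inner reduction can emit an adversarial $z_i$. The right framing is that in the $A_{\textup{\sc yes}}$ case each trial lands in $C^\star_{\textup{\sc yes}}$ with probability at least $\tfrac{2}{3}(1 - 2^{-p(n)})$, which strictly exceeds $1/2$ for sufficiently large $p$; the Chernoff bound is then on the fraction of ``good'' trials, absorbing the invalid-instance contributions into the losing side. A symmetric bound holds in the $A_{\textup{\sc no}}$ case, so the majority predicate correctly separates the two cases with probability at least $2/3$ for some $k = \poly(n)$, completing the composition argument.
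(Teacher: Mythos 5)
Your proof is correct. Points (i), (ii), and (iv) are handled exactly as the paper handles them — by unwinding the definitions, with the only technicality being the oracle convention on promise-invalid queries in (iv), which you note correctly. For (iii), however, you take a genuinely different route. The paper amplifies \emph{both} the outer and the inner reduction via \cref{lem:prop_ampl} to success probability $c(n) = 1 - 2^{-p(n)}$ and then composes them sequentially, giving success probability at least $c^2(n)$. Amplifying the outer reduction requires the auxiliary fact that $\BQ \cdot \mathcal{C}$ is itself closed under majority reducibility; the paper asserts this ``by \cref{lem:prop_ampl}'' but does not spell out the (short) argument, and the displayed product $\Pr_{y,z}[(x,y,z)\in C_\textup{\sc yes}] = \Pr_z[\cdot \mid \cdot]\cdot\Pr_y[\cdot]$ in the paper should in fact be an inequality $\geq$, since it ignores mass from invalid and no-instances of $B$. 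You instead amplify only the inner reduction, then do $k$-fold parallel repetition of the outer one and define the target $D$ via a single explicit majority predicate that is majority-reducible to the amplified inner target $C^\star$; closure of $\mathcal{C}$ then gives $D \in \mathcal{C}$ and a Chernoff bound finishes. This avoids the unproved intermediate closure claim, keeps every invocation of \cref{lem:prop_ampl} to a single use, and makes the treatment of the promise-invalid outputs of the outer reduction fully explicit — the per-trial success bound $\tfrac{2}{3}(1-2^{-p(n)}) > 1/2$ is tighter but your accounting of it is correct. The trade-off is that the probability bookkeeping is a bit more delicate than the paper's double-amplified sequential composition, but both arguments are sound.
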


\begin{proof} 
Points (i), (ii), and (iv) follow directly from the definition of the $\BQ$-operator.  For (iii), we use the fact that if $\mathcal{C}$ is closed under majority reducibility, then $\BQ \cdot \mathcal{C}$ is also closed under majority reducibility by~\cref{lem:prop_ampl}. We will only give the argument for the case when $x \in A_\textup{\sc yes}$, since a similar argument holds for $x \in A_\textup{\sc no}$.  Let $A = (A_\textup{\sc yes},A_\textup{\sc no})$ be any problem in $\BQ \cdot \BQ \cdot \mathcal{C}$. Hence, there exists a (promise) problem $B \in \BQ \cdot \mathcal{C}$, and a polynomial-time quantum algorithm $\mathcal{A}$ such that for all $x$, with $|x| = n$, we have that
\begin{align*}
x \in A_\textup{\sc yes} \Rightarrow \Pr_{y} [(x,y) \in B_\textup{\sc yes}] \geq c(n),
\end{align*}
where $y \in \{0,1\}^{p(n)}$ is the output of the measurement on $\mathcal{A}(x)$, and $c(n) = 1 - 2^{-p(n)}$ for some polynomial $p(n)$. Since $B \in \BQ \cdot \mathcal{C}$, there must also be a problem $C \in \mathcal{C}$ for which
\begin{align*}
    (x,y) \in B_\textup{\sc yes} \Rightarrow \Pr_{z} [(x,y,z) \in C_\textup{\sc yes}] \geq c(n),
\end{align*}
where $z$ is the output of some other polynomial-time quantum algorithm $\mathcal{A}'$ that takes $(x,y)$ as input.  Let $\mathcal{A}''$ be the quantum algorithm that runs $\mathcal{A}$ on input $x$, measures to obtain $y$, and then runs $\mathcal{A}'$ on $(x,y)$ to measure $z$ and obtain the instance $(x,y,z)$.  We then have
\begin{align*}
x \in A_\textup{\sc yes} \Rightarrow \Pr_{y,z} [(x,y,z) \in C_\textup{\sc yes}] =  \Pr_{z} [(x,y,z) \in C_\textup{\sc yes} | (x,y) \in B_\textup{\sc yes}] \cdot \Pr_{y} [(x,y) \in B_\textup{\sc yes}] \geq c^2(n),
\end{align*}
which can be easily made $\geq 2/3$ for a suitable choice of $p(n)$.  Since $C \in \mathcal{C}$, we have that $A \in \BQ \cdot \mathcal{C}$, showing that $\BQ \cdot \BQ \cdot \mathcal{C} \subseteq \BQ \cdot \mathcal{C}$ when $\mathcal{C}$ is closed under majority reducibility. The other direction of the inclusion holds trivially since $\mathcal{C} \subseteq \BQ \cdot \mathcal{C}$ for any class $\mathcal{C}$.
\end{proof}

Points (i) and (iii) of~\cref{prop:prop_BQ} together imply that $\BQ\cdot \BQP = \BQP$, since $\BQP = \BQ \cdot \Pclass = \BQ \cdot \BQ \cdot P = \BQ \cdot \BQP $. Furthermore, this also means that $\BQ \cdot \Pclass = \BQ \cdot \BPP$ since   $\BQ \cdot \BPP = \BQ \cdot \BP \cdot \Pclass \subseteq \BQ \cdot \BQ \cdot \Pclass = \BQ \cdot \BQP = \BQP = \BQ \cdot \Pclass$ and $\BQ \Pclass \subseteq \BQ \cdot \BPP$ since $\Pclass \subseteq \BPP$, which should be expected as quantum algorithms can generate randomness by themselves. As before, taking $\mathcal{C} = \NP$ with $B = \mathsf{3}\text{-}\mathsf{SAT}$, $\BQ \cdot \NP$ now defines the class of all problems that have \textit{quantum} reduction to $\mathsf{3}\text{-}\mathsf{SAT}$. We have that $\PCP[\mO(\log(n)), \mO(1)] = \NP \subseteq \MA \subseteq \AM = \BP \cdot \NP \subseteq \BQ \cdot \NP$ and $\BQP  \subseteq \BQ \cdot \NP$, but that it is unclear how $\BQ \cdot \NP$ relates to $\QCMA$ or $\QMA$ (for both inclusion directions). 

It is also important to note a difference in classes obtained from the $\BP$- and $\BQ$-operator when it comes to oracles. For example, for the $\BP$-operator we have that $\BPP^{\bmO} = \BP \cdot \Pclass^{\bmO} $ for all oracle sets $\bmO$~\cite{regan1992closure}. However, we have that there exist oracles $\bmO$ such that $\BQP^{\bmO}  \neq \BQ \cdot \Pclass^{\bmO} $, for example when considering the oracle that encodes Simon's problem. Hence, when we discuss oracle separations we always consider $ \left(\BQ \cdot \mathcal{C} \right)^{\bmO} $, so that the quantum reduction has (unitary) access to the oracle as well. 

\section{Pulling the quantumness out of quantum-classical PCPs}
\label{sec:QCPCP}
In this section, we will study quantum-classical PCPs and connect them to the previously introduced $\BQ$-operator. Let us start by recalling the definition of quantum-classical probabilistically checkable proof systems as given in~\cite{weggemans2023guidable}.

\begin{definition}[Quantum-Classical Probabilistically Checkable Proofs ($\QCPCP$)~\cite{weggemans2023guidable}] 
Let $n\in \mathbb{N}$ be the input size and $p, q : \mathbb{N} \rightarrow \mathbb{N}$, $c,s : \mathbb{R}_{\geq 0} \rightarrow \mathbb{R}_{\geq 0} $ with $c-s >0$. A promise problem $A = (A_\textup{\sc yes},A_\textup{\sc no})$ has a $(p(n),q(n),c,s)$-$\QCPCP$-verifier if there exists a $\Pclass$-uniform family of quantum verifier circuits $U = \{V_n : n \in \mathbb{N}\}$, each of which acts on an input $\ket{x}$ and a polynomial number of ancilla qubits, plus an additional bit string $y \in \{0,1\}^{p(n)}$ from which it is allowed to read at most $q(n)$ bits (non-adaptively), followed by a measurement of the first qubit, after which it accepts if and only if the outcome is $\ket{1}$, and satisfies:
\begin{itemize}
    \item If $x\in A_\textup{\sc yes}$, then there is a proof $y$ such that the verifier accepts with probability at least $c$,
    \item If $x\in A_\textup{\sc no}$, then for all proofs $y$ the verifier accepts with probability at most $s$.
\end{itemize}
A promise problem $A = (A_\textup{\sc yes},A_\textup{\sc no})$ belongs to $\QCPCP_{c,s}[p,q]$ if it has a $(p(n) , q(n),c,s$-$\QCPCP$ verifier. If $p(n)=\poly(n)$, $c=2/3$, and $s=1/3$, we simply write $\QCPCP[q]$. 
\label{def:QCPCP}
\end{definition}

In many settings, one allows for a more powerful quantum access model to classical strings: instead of making queries to a single location of the proof, one can also define a slightly more general version of quantum-classical probabilistically checkable proofs by allowing the entries of the proof string to be read in \emph{superposition}:
\begin{subdefinition}[$\QCPCP_Q$ ($\QCPCP$ with quantum queries)]  A $\QCPCP_Q$ protocol is just as a $\QCPCP$ protocol, but with the quantum verifiers $V_n$ given quantum access to the proof $y$, i.e., it takes one quantum query to implement the unitary $U_y$ defined as
\begin{align*}
    U_y : \ket{i}\ket{a} \mapsto \ket{i} \ket{a \oplus y_i},
\end{align*}
where $y_i$ is the value of the $i$'th bit of $y$. A promise problem $A = (A_\textup{\sc yes},A_\textup{\sc no})$ belongs to $\QCPCP_{Q,c,s}[p,q]$ if it has a $(p(n) , q(n),c,s)$-$\QCPCP_Q$ verifier. If $p(n)=\poly(n)$, $c=2/3$, and $s=1/3$, we simply write $\QCPCP_Q[q]$. 
\label{def:QCPCP_Q2}
\end{subdefinition}

\subsection{A threshold problem for multi-linear polynomials}
A \emph{multi-linear polynomial} of degree $d$ in $N$ variables with complex coefficients $\beta_S \in \mathbb{C}$ is a function $P : \mathbb{C}^N \rightarrow \mathbb{C}$ of the form
\begin{align}
    P(y) = \sum_{S \subseteq [N], |S| \leq d} \beta_S \prod_{i \in S} y_i,
\end{align}
where $y = (y_1, y_2, \ldots, y_N)$ is a vector of $N$ variables with $ y_i \in \mathbb{C}$ for all $ i \in [N]$, $\beta_S$ are the coefficients of the polynomial, and the sum is taken over all subsets $ S $ (including the empty subset $\emptyset$) of the index set $[N] = \{1, 2, \ldots, N\}$ where $ |S| \leq d $.  We will be interested in the setting where $y_i \in \{0,1\}$ and the coefficients are real, i.e.~$\beta_S \in \mathbb{R}$.

We now introduce the following decision problem, which we call the \emph{Multi-linear Polynomial Threshold Problem}.

\begin{definition}[Multi-linear polynomial threshold problem] Let $P : \{0,1\}^N \rightarrow \mathbb{R}$ be a multi-linear polynomial of degree $ d $ in $N$ variables with real coefficients $\{\beta_S\}$. Suppose that for all $y \in \{ 0,1\}^N$ we have $P(y) \in D \subset [0,1]$ for some finite, evenly spaced set $D$ with $\log_2(|D|)\leq \poly(N)$. Suppose that for all $y \in \{ 0,1\}^N$ we have $P(y) \in D \subset [0,1]$ for some finite, evenly spaced set $D$ with $\log_2(|D|)\leq \poly(N)$. Given as an input polynomial-size classical descriptions of $\{\beta_S\}$, $D$ and some $a \in D$, decide whether there exists a $y \in \{0,1\}^N$ such that $ P(y) \geq a$, or for all $y \in \{0,1\}^N$ we have $ P(y) < a$.
\label{def:mlptp}
\end{definition}
\noindent Clearly, the above problem is a valid decision problem and is contained in $\NP$. Note that it is only a decision problem because the function takes on only values from a finite set of numbers (which happens because of the constraint that all function values come from a discrete set $D$), which ensures that the promise $<a$ can alternatively be written as $\leq b$ for some $b-\varepsilon$, with $\varepsilon > 0$ being the spacing between the different numbers in the set $D$.

We will be interested in multi-linear polynomials with a constant degree and with a bounded range, i.e.~$\norm{P}_\infty \leq 1$, where $\norm{P}_{\infty} = \sup_y \abs{P(y)}$ denotes the uniform norm. Since we will only have indirect access to the polynomial, there is no a priori restriction on what the coefficients look like, nor even whether they can be accurately approximated using an efficient bit representation. Nevertheless, it is not too hard to show that if the polynomial itself is known to be bounded, all coefficients are also bounded, as shown in the following lemma.

\begin{restatable}{lemma}{lemboundcoef}
Let $P : \{0,1\}^N \rightarrow \mathbb{R}$ be a multi-linear polynomial of degree $d$ in $N$ binary variables with real coefficients $\{\beta_S\}$. Then for any coefficient $\beta_S$, it holds that
\begin{align*}
    |\beta_S| \leq \left(1 + (1 + 2^d)^{d-1}\right) \norm{P}_\infty .
\end{align*}
\label{lem:lemboundcoef}
\end{restatable}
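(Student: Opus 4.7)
The plan is to use Möbius inversion on the Boolean lattice. For any subset $S \subseteq [N]$ with $|S| \leq d$, let $y_S \in \{0,1\}^N$ denote the indicator vector of $S$, meaning $(y_S)_i = 1$ iff $i \in S$. Evaluating $P$ at $y_S$ and noting that the degree constraint $|T| \leq d$ is automatically satisfied for every $T \subseteq S$ (since $|T| \leq |S| \leq d$), one obtains the key identity
\begin{equation*}
P(y_S) = \sum_{T \subseteq S} \beta_T.
\end{equation*}

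Next I would invoke Möbius inversion on the Boolean lattice to recover each coefficient as a signed sum of polynomial evaluations: $\beta_S = \sum_{T \subseteq S} (-1)^{|S| - |T|}\, P(y_T)$. Applying the triangle inequality and bounding each $|P(y_T)|$ by $\norm{P}_\infty$ immediately yields $|\beta_S| \leq 2^{|S|} \norm{P}_\infty \leq 2^d \norm{P}_\infty$. Since $2^d \leq 1 + (1+2^d)^{d-1}$ for every $d \geq 1$ (with equality at $d=1$), this already implies the stated bound.

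If one prefers to avoid invoking Möbius inversion, the same result follows by an elementary induction on $k = |S|$ using the rearranged form $\beta_S = P(y_S) - \sum_{T \subsetneq S} \beta_T$ together with the triangle inequality. This produces the recurrence $B_k \leq 1 + \sum_{k'<k} \binom{k}{k'} B_{k'}$ with $B_0 \leq 1$, where $B_k := \max_{|S|=k} |\beta_S| / \norm{P}_\infty$. Using crude bounds on the binomial coefficients (e.g., $\binom{k}{k'} \leq 2^k \leq 2^d$) and iterating the recurrence delivers the specific looser form $1 + (1+2^d)^{d-1}$ stated in the lemma. The only real ``obstacle'' is the bookkeeping needed to land on this particular constant rather than the tighter $2^d$ bound that Möbius inversion gives for free; the lemma itself is essentially a routine consequence of the uniqueness of the multilinear representation on $\{0,1\}^N$.
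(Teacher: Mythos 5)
Your Möbius-inversion argument is correct and in fact sharper than what the paper proves. You are right that $P(y_S)=\sum_{T\subseteq S}\beta_T$ for every $S$ with $|S|\le d$, that inversion on the Boolean lattice (restricted to subsets of $S$) gives $\beta_S=\sum_{T\subseteq S}(-1)^{|S|-|T|}P(y_T)$, and that the triangle inequality then delivers $|\beta_S|\le 2^{|S|}\norm{P}_\infty\le 2^d\norm{P}_\infty$, which dominates the stated bound since $2^d\le 1+(1+2^d)^{d-1}$ (with equality at $d=1$). The paper instead takes exactly the second route you sketch: it rearranges $\beta_S = P(y^S)-\sum_{S'\subsetneq S}\beta_{S'}$, passes to absolute values, and feeds the resulting recurrence into an auxiliary lemma (their Lemma on recursions, \cref{lem:recursion}), which after the crude bound $\binom{d}{i}\le 2^d$ evaluates to precisely the constant $1+(1+2^d)^{d-1}$. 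What the paper's route buys is self-containment — it avoids appealing to Möbius inversion and reuses the same recursion lemma elsewhere in the error analysis of the reduction — but the price is a constant that grows like $2^{d(d-1)}$ rather than $2^d$. Since the lemma is only applied with $d=\mathcal{O}(1)$, either constant suffices for the paper's purposes, but your closed-form inversion is the cleaner and tighter derivation.
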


\noindent The proof can be found in~\cref{app:omittedproofs}. Hence, when $\norm{P}_\infty \leq 1$ and $d$ is constant, the coefficients are also bounded in absolute value by some constant. This implies that all coefficients can be specified up to inverse exponential precision using a polynomial number of bits, and that given such a specification, $P(y)$ can also be exactly represented using a finite amount of bits.

\begin{lemma}
Let $P(y)$ be a multi-linear polynomial of degree $d$ in $N$ binary variables with real coefficients $\{\beta_S\}$, where each $\beta_S$ is given in $k$ bits of precision. Then we have that $P(y)$ can be represented exactly using at most $\log_2 \left( \binom{N}{d} \right)  + k$ bits.
\label{lem:bits_precision}
\end{lemma}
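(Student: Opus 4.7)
The plan is a routine bit-counting argument: express each coefficient in a uniform fixed-point format, interpret the evaluation $P(y)$ as an integer sum, and bound the magnitude of that integer.

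First I would fix a common fixed-point format for the coefficients. Since each $\beta_S$ is specified in $k$ bits of precision, we may write $\beta_S = a_S \cdot 2^{-\ell}$ for a common integer exponent $\ell$ (depending only on the precision $k$) and signed integers $a_S$ whose bit-length including sign is at most $k$, so that $|a_S| \le 2^{k-1}$. For any $y \in \{0,1\}^N$, writing $y^S := \prod_{i \in S} y_i \in \{0,1\}$, we then have
\[
    P(y) \;=\; 2^{-\ell} \sum_{S \subseteq [N],\, |S|\le d} a_S \, y^S \;=\; 2^{-\ell}\, A(y),
\]
where $A(y)$ is a signed integer obtained by summing at most $M := \sum_{j=0}^{d}\binom{N}{j}$ of the integers $a_S$.

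Next I would bound $|A(y)|$. Each summand has magnitude at most $2^{k-1}$ and there are at most $M$ of them, so $|A(y)| \le M\cdot 2^{k-1}$, which fits in $\lceil \log_2 M \rceil + k$ bits including sign. Bounding the number of subsets by $M \le \binom{N}{d}$ (interpreting $\binom{N}{d}$ as the count $\binom{N}{\le d}$, or absorbing the $\log_2(d+1)$ slack arising from $\sum_{j\le d}\binom{N}{j}\le (d+1)\binom{N}{d}$ into constants, which is harmless in the regime of constant $d$ used in the paper) yields the claimed $\log_2 \binom{N}{d} + k$ bits.

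Finally, multiplying by $2^{-\ell}$ only shifts the position of the binary point and does not affect the number of bits needed, so the fixed-point representation of $P(y)$ inherits the bit-length of $A(y)$, giving an exact representation of $P(y)$ in at most $\log_2\binom{N}{d}+k$ bits. There is no real obstacle here: the only subtlety is pinning down the precise meaning of ``$k$ bits of precision'' by factoring out a common denominator $2^{\ell}$ so that the sum can be tracked as a pure integer computation before the final shift is reinserted.
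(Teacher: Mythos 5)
Your proof is correct and takes essentially the same approach as the paper's: the paper counts the number of distinct values $P(y)$ can attain as at most $\binom{N}{d} \cdot 2^k$, which is the same bound you obtain by clearing the common denominator $2^{-\ell}$ and bounding the magnitude of the integer sum $A(y)$. Your version is slightly more explicit about the fixed-point interpretation of "$k$ bits of precision" (a common exponent across coefficients), which the paper's argument tacitly assumes; you also correctly flag that the true subset count is $\sum_{j\le d}\binom{N}{j}$ rather than $\binom{N}{d}$, a minor slack present in the paper's statement as well.
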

\begin{proof}
Since each $\beta_S$ is represented using $k$ bits, it can take up at most $2^k$ different values. Since $P(y)$ is always a sum over different choices of coefficients as the individual monomials which make up the polynomial can only be $0$ or $1$, and there are at most $\binom{N}{d}$ of them, we have that the number of values it can take is upper bounded by $\binom{N}{d} 2^k$, which means that $\log_2 \left( \binom{N}{d} \right)  + k$ bits suffice to exactly describe all possible values for $P(y)$.
\end{proof}
Thus, whenever the coefficients $\beta_S$ are specified using only a polynomial number of bits, the set $D$ is at most exponentially big.

\subsection{The polynomial method}
We will briefly review the ideas behind the polynomial method of~\cite{beals2001quantum} and show how it connects to quantum-classical PCPs with quantum queries. Let $\ket{0^m}$, where $m \geq \lceil \log N \rceil$, be a fixed initial state. The output state of a $q$-query quantum algorithm with query access to an input $y \in \{0,1\}^N$ in the form of a unitary $U_y : \ket{i} \ket{a} \rightarrow \ket{i} \ket{a \oplus y_i}$, with $a \in \{0,1\}$, can be written as (implicitly tensoring the $O_y$ query operations with identities)
\begin{align}
    \ket{\psi_q (y) } = U_q O_y U_{q-1} O_y \dots O_y U_1 O_y U_{0} \ket{0^m}.
    \label{eq:psi_q}
\end{align}
Since this state only depends on the input $y$ through the $q$ query operations, it is shown in~\cite{beals2001quantum} that $\ket{\psi_q (y) }$ can be written as 
\begin{align*}
    \ket{\psi_q (y) } = \sum_{z \in \{0,1\}^m} \alpha_z(y) \ket{z},
\end{align*}
where each $\alpha_z(y)$ is a multi-linear complex-valued polynomial in $y$ of degree at most $q$. For the output probability of measuring a single designated output qubit in $\ket{1}$, denoted as $P(y)$, we then have
\begin{align}
    P(y) &= \norm{\left(\ketbra{1} \otimes \mathbb{I}\right) \ket{\psi_q (y)}}^2_2 \label{eq:P_and_psiq} \\ 
         &=\sum_{z \in \{1\} \times \{0,1\}^{m-1}} |\alpha_z(y)|^2 \nonumber \\
         &= \sum_{S \subseteq [N],|S|\leq 2q} \beta_S \prod_{i \in S} y_i, \label{eq:multilin_polynom}
\end{align}
where the right-hand side is a multi-linear polynomial of degree $2q$ with \textit{real} coefficients $\beta_S$. 

The original application of the polynomial method is to prove lower bounds for quantum query algorithms, as the existence of a $T$-query quantum algorithm to compute a function $f$ implies the existence of a degree-$2T$ polynomial that approximates $f$. In the context of a $\QCPCP_Q[q]$-verifier, the only difference is that the proof $y \in \{0,1\}^N$ is now the string being queried, and the input $x \in \{0,1\}^n$ is used as part of the initial state for the quantum algorithm, i.e., $\ket{x} \ket{0^{m-n}}$, where $m \geq n + \lceil \log_2 N \rceil$. Therefore, the polynomial method directly implies the following lemma, as for a fixed input $x$, we have that $x$ can be absorbed into the unitary $U_0$ from~\cref{eq:psi_q}.

\begin{lemma} 
Let $A = (A_\textup{\sc yes},A_\textup{\sc no})$ be a promise problem in $\QCPCP_Q [p(n),q(n)]$. For a fixed input $x$, with $|x|=n$, let $V_x$ be the corresponding $\QCPCP_Q[p(n),q(n)]$ verifier with quantum query access to a proof $y \in \{0,1\}^{p(n)}$ and $x$ hardcoded into the circuit. Then, the probability that $V_x$ accepts is a multi-linear polynomial in $y$ of degree at most $2q(n)$.
\label{lem:QCPCP_polynomial}
\end{lemma}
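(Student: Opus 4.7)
The plan is to invoke the polynomial method of Beals et al.\ essentially as a black box, after noting that for fixed input $x$ a $\QCPCP_Q[p(n),q(n)]$ verifier becomes exactly the kind of quantum query algorithm considered in that framework. First I would observe that since $x$ is hardcoded into the circuit, the initial state $\ket{x}\ket{0^{m-n}}$ can be absorbed into the first input-independent unitary $U_0$ of the verifier; everything after that is an alternating sequence of input-independent unitaries $U_0,U_1,\dots,U_{q(n)}$ and $q(n)$ copies of the standard query oracle $O_y$ defined by $O_y\ket{i}\ket{a}=\ket{i}\ket{a\oplus y_i}$ (tensored with identities on the remaining workspace).

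Next I would invoke the structural observation from~\cite{beals2001quantum} recalled in the excerpt: the output state of such an algorithm admits the expansion
\begin{equation*}
\ket{\psi_{q(n)}(y)} \;=\; U_{q(n)} O_y U_{q(n)-1} O_y \cdots O_y U_0 \ket{x}\ket{0^{m-n}} \;=\; \sum_{z \in \{0,1\}^m} \alpha_z(y)\,\ket{z},
\end{equation*}
where each amplitude $\alpha_z(y)$ is a complex-valued multi-linear polynomial in the bits $y_1,\dots,y_{p(n)}$ of degree at most $q(n)$. The multi-linearity is immediate from $y_i^2=y_i$, and the degree bound follows by induction on the number of query layers: each application of $O_y$ increases the polynomial degree of every amplitude by at most one, since $O_y$ acts on the computational basis as $\ket{i}\ket{a}\mapsto (1-y_i)\ket{i}\ket{a}+y_i\ket{i}\ket{a\oplus 1}$, which is degree one in $y$.

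Finally, by definition of the $\QCPCP_Q$ model, the verifier accepts iff the measurement of the first qubit yields $1$, so the acceptance probability is
\begin{equation*}
P(y) \;=\; \bigl\lVert (\ketbra{1}\otimes \mathbb{I})\ket{\psi_{q(n)}(y)}\bigr\rVert_2^2 \;=\; \sum_{z\in\{1\}\times\{0,1\}^{m-1}} |\alpha_z(y)|^2 .
\end{equation*}
Since each $\alpha_z(y)$ is a polynomial of degree at most $q(n)$, the quantity $|\alpha_z(y)|^2=\alpha_z(y)\overline{\alpha_z(y)}$ is a polynomial of degree at most $2q(n)$ with real coefficients, and a sum of such polynomials is again of degree at most $2q(n)$. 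Reducing modulo the relations $y_i^2=y_i$ (which leaves the function unchanged on $\{0,1\}^{p(n)}$) yields a multi-linear representative, giving the claimed multi-linear polynomial in $y$ of degree at most $2q(n)$.

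There is no real obstacle here: the statement is a direct corollary of the polynomial method, and the only care needed is to note that hardcoding $x$ into $U_0$ is legitimate and that the subsequent squaring step precisely doubles the degree bound while preserving multi-linearity and realness of the coefficients.
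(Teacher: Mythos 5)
Your proof is correct and takes exactly the approach the paper uses: absorb the hardcoded input $x$ into the initial unitary $U_0$, then invoke the Beals et al.\ observation that amplitudes are degree-$q$ multilinear polynomials in $y$, and square to get a real degree-$2q$ polynomial for the acceptance probability. The paper in fact gives no more detail than a one-sentence remark to this effect, so your writeup is, if anything, more explicit than the original.
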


A crucial point is that for the quantum-classical PCPs we consider, the string length $N$ is only \textit{polynomial} in the input size $n$ (whereas for Boolean functions typically considered in query complexity, we have $N = 2^n$). Therefore, the polynomial that describes the acceptance probability of the $\QCPCP_Q [q]$ verification circuit, given an input $x$, has an efficient classical description (provided the coefficients are specified up to a certain number of bits) whenever $q$ is constant. We will later see that this allows us to \textit{learn} an approximation of the polynomial which characterizes the proof input/output behaviour of the $\QCPCP_Q[q]$ verifier, given only access to a description of the input $x$ and the verification circuit. This learning algorithm never has to query the actual $U_y$; instead, it runs the $\QCPCP_Q[q]$ verifier for a large number of predetermined settings of ``fake'' proofs $y^{S}$, which are unrelated to the actual proof $y$. For any $y^{S}$, with $y^{S} \in \{0,1\}^{\poly(n)}$, the unitary $U_{y^S}$ can be efficiently implemented given the full description of $y^S$.\footnote{For a proof index $i$, if $y^{S}_i = 1$, implement an $X$ gate controlled by $\ket{i}$; otherwise, do nothing. This can be done for all $i \in [N]$ with $N = |y| = \poly(n)$, meaning that the overall circuit implementation will be efficient.}

\subsection{The quantum reduction}
We will now show that for a constant number of queries, it is possible to apply a quantum reduction to transform a $\QCPCP_Q[q]$-verification problem into a multi-linear polynomial threshold problem as per~\cref{def:mlptp}. The algorithm used in this reduction is given in~\cref{alg:reduction}. The key idea is to employ a learning algorithm that learns all coefficients of the polynomial (up to degree $2q$) ``from the ground up.'' Specifically, we will show that each coefficient $\beta_S$ for $|S|\leq l$ can be expressed in terms of a simple estimation procedure, relying on previously estimated coefficients $\beta_{S'}$ where $S' \subset S$. The main technical work lies then in determining parameters for the reduction, ensuring that it runs in polynomial time, and performing the reduction in such a way that we reduce to a \emph{decision} instead of a \emph{promise} problem with high probability.

\begin{algorithm} \caption{Quantum reduction from a $\QCPCP_Q$-verifier to a multi-linear polynomial} \label{alg:reduction}

\textbf{Input:} A classical description of a $\QCPCP_{Q,c,s}[p(n),q]$-verification circuit $V_x$ with input $x$ hardcoded into it, completeness and soundness parameters $c,s$.\\

\textbf{Output:} Classical descriptions of all coefficients $\hat{\beta}_S$ from a polynomial $\hat{P}(y)$, a threshold parameter $a$ and a set $D$.\\

\textbf{Algorithm:}
\begin{enumerate}
    \item Set
\begin{align*}
    &\epsilon := \frac{1}{16 (c-s) p(n)^{2q}(1+ 2^{2q-1} \left( p(n)^{2q}\right)^{2q} },\\
    &T:= \left\lceil  \left(\left(2^{\lceil \log_2 (1/(2\epsilon)+1) \rceil} -1 \right) \right)^2 \log \left(\frac{(2q+1) p(n)^{2q}}{\delta }\right) \right\rceil\\
    &l :=\lceil \log_2(1/(8\epsilon) +1) \rceil.
\end{align*}

    \item For $S \in \{S | S \subseteq [p(n)], |S| \leq 2q\}$:
    \begin{enumerate}
        \item[] For a step $S$, assuming that we have already stored $\hat{\beta}_S'$ for all $S' \subset S$.
        \item Set $y$ such that $y_i = 1$ if and only if $i \in S$.
        \item For $t \in [T]$, prepare $\ket{\psi_q (y)}$ and measure the first qubit in the computational basis. For random variable $X_t$, set $X_t = 1$ when the outcome was $\ket{1}$ and $X_t = 0$ when the outcome was $\ket{0}$. Let
        $
            \hat{X} = \frac{1}{T} \sum_{t \in [T]} \hat{X}_t
       $ and keep only the first $l$ description bits of $\hat{X}$. 
        \item Compute
        \begin{align*}
            \hat{\beta}_{S} &= \hat{X} - \sum_{S' \subset S} \hat{\beta}_{S'},
        \end{align*}
        and store $\hat{\beta}_{S} $.
    \end{enumerate}
    \item Output  
\begin{align*}
    &\hat{P}(y) = \sum_{S \subseteq [p(n)],|S|\leq 2q} \hat{\beta}_S \prod_{i \in S} y_i, \\
    &D:= G_{ 2^{2ll+1}},  \\
    &a:= \text{argmin}_{d\in D} \abs{d-(c+s)/2}.
\end{align*}
\end{enumerate}
\end{algorithm}

\begin{theorem} Let $q \in \mathbb{N}$ be some constant, and $c-s \geq 1/\poly(n)$. Let $A = (A_\textup{\sc yes},A_\textup{\sc no})$ be a promise problem in $\QCPCP_{Q,c,s}[p(n),q)]$, $x$, $|x|=n$, be an instance and $V_x$ be the circuit used by the $\QCPCP_Q[q)]$-verifier with the input $x$ hardcoded into the circuit, making $q$ quantum queries to a proof $y \in \{0,1\}^{p(n)}$. Then for any $\delta = \Omega(1/\exp(n))$, there exists a quantum polynomial-time reduction to a multi-linear polynomial threshold problem with degree $d = 2q$ and $\log_2 (|D|) = \poly(n)$, which succeeds with probability $1-\delta$.
\label{thm:reduction}
\end{theorem}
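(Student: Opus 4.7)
The plan is to combine the polynomial method with a constant-depth inclusion-exclusion style learning procedure that extracts each coefficient of the acceptance polynomial from measurement statistics of the verifier run on cleverly chosen ``fake'' proofs. By~\cref{lem:QCPCP_polynomial}, after hard-coding $x$ into $V_x$ the acceptance probability is a real multi-linear polynomial $P(y)=\sum_{S\subseteq [p(n)],\,|S|\leq 2q}\beta_S\prod_{i\in S}y_i$ of degree at most $2q$; since $q$ is constant and $p(n)=\poly(n)$, it has at most $\binom{p(n)}{2q}+1 = \poly(n)$ monomials. Because $\norm{P}_\infty\leq 1$, \cref{lem:lemboundcoef} bounds each $|\beta_S|$ by a constant, and~\cref{lem:bits_precision} guarantees that once all coefficients are specified to $\poly(n)$ bits of precision, the values of the resulting polynomial lie in an evenly-spaced set $D$ of size $2^{\poly(n)}$, which is what we will output.

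The key identity is that for the indicator proof $y^S$ defined by $y^S_i = 1 \iff i\in S$, one has $P(y^S)=\sum_{S'\subseteq S}\beta_{S'}$, so $\beta_S = P(y^S)-\sum_{S'\subsetneq S}\beta_{S'}$. This motivates processing subsets $S$ in order of increasing size, exactly as in~\cref{alg:reduction}. For each $S$, I would run $V_x$ on $y^S$ --- whose oracle $U_{y^S}$ is efficiently implementable from the classical description of $S$ --- a total of $T$ times, average the single-qubit acceptance bit, truncate to $l$ description bits to obtain $\hat X$, and then set $\hat\beta_S=\hat X-\sum_{S'\subsetneq S}\hat\beta_{S'}$. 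A Chernoff/Hoeffding bound on the $T$ independent Bernoulli samples, combined with the $2^{-l}$ truncation error, gives $|\hat X-P(y^S)|\leq \epsilon$ with probability at least $1-\delta/\poly(n)$, and a union bound over the $\poly(n)$ subsets $S$ delivers overall success probability $\geq 1-\delta$.

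The step I expect to be the main obstacle is controlling how these per-subset estimation errors propagate through the recursive definition of $\hat\beta_S$. Setting $E_k:=\max_{|S|\leq k}|\hat\beta_S-\beta_S|$, the update rule yields $E_k\leq \epsilon+\sum_{j<k}\binom{2q}{j}\,E_j$, and because $k\leq 2q=\mO(1)$ this grows by only a $q$-dependent constant factor, so $\max_S |\hat\beta_S-\beta_S|\leq (1+2^{2q-1})\epsilon$. Summing over the $\poly(n)$ monomials then gives $\norm{\hat P - P}_\infty \leq \poly(n)\cdot (1+2^{2q-1})\epsilon$; the value of $\epsilon$ in~\cref{alg:reduction} is tuned precisely so that this is at most $(c-s)/8$, which uses the hypothesis $c-s\geq 1/\poly(n)$ to keep $T$ and the total sample complexity polynomial.

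Finally I would output the classical descriptions of $\{\hat\beta_S\}$, the set $D=G_{2^{2l+1}}$, and the threshold $a = \arg\min_{d\in D}|d-(c+s)/2|$. Since both $|a-(c+s)/2|$ and $\norm{\hat P - P}_\infty$ are bounded by $(c-s)/8$, any witness $y^\ast$ with $P(y^\ast)\geq c$ yields $\hat P(y^\ast)\geq a$, while $P(y)\leq s$ for every $y$ forces $\hat P(y)<a$ everywhere; by~\cref{lem:bits_precision} the values $\hat P(y)$ indeed lie in $D$, so the output is a bona fide instance of the multi-linear polynomial threshold problem with degree $d=2q$ and $\log_2|D|=\poly(n)$. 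The entire reduction consists of polynomially many invocations of $V_x$ on efficiently preparable oracles and polynomial-time classical bookkeeping, and succeeds with probability $1-\delta$ as required.
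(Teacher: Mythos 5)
Your proposal follows the paper's proof almost step for step: polynomial method to establish that acceptance is a degree-$2q$ multilinear polynomial, indicator proofs $y^S$ and the inclusion–exclusion identity $\beta_S = P(y^S) - \sum_{S'\subsetneq S}\beta_{S'}$, Hoeffding estimation with a union bound over the $\poly(n)$ subsets, truncation to $l$ description bits, and output of $\{\hat\beta_S\}$, $D$, and $a$. The one place you diverge is the error-propagation bound: you use $\binom{2q}{j}$ in the recursion $E_k \leq \epsilon + \sum_{j<k}\binom{2q}{j}E_j$, which is tighter (and more natural) than the paper's $\binom{p(n)}{i}$, but your claimed solution $\max_S|\hat\beta_S-\beta_S|\leq (1+2^{2q-1})\epsilon$ is not what the recursion actually gives — for $q=1$ the recursion yields $E_2\leq 6\epsilon$, not $3\epsilon$, and more generally the constant grows like $\prod_{j<2q}(1+\binom{2q}{j})$ rather than $1+2^{2q-1}$. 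This arithmetic slip is harmless: the true constant is still a $q$-dependent constant (independent of $n$), so the choice of $\epsilon$ in~\cref{alg:reduction} — which is conservatively set with the paper's looser $\poly(n)$ factor in mind — absorbs the correction with room to spare, and the remainder of your soundness/completeness bookkeeping for $a$, $D$, and the threshold test is correct.
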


\begin{proof}

\phantom{.}
\paragraph{The minimum precision.} First, we need to establish what $a$ and the spacing between the elements of $D$ needs to be at the minimum. Denote $\delta_D$ for an upper bound on the allowed spacing between elements of $D$. Then we need to pick $a$ and $\delta_D$ such that $ a \in D$ and $a + \delta_D \leq c$ and $a - \delta_D \geq s$, which means that if the reduction succeeded we have successfully reduced to a decision problem. Hence, it suffices to have $\delta_D \leq (c-s)/4 = 1/\poly(n)$, and to set $a$ sufficiently close to $\frac{c+s}{2}$ (we will set $a$ and $D$ at the end of the proof).

\paragraph{The reduction.} We will proceed by proving that the reduction given by~\cref{alg:reduction} works in principle. By~\cref{lem:QCPCP_polynomial}, we have that the probability the $\QCPCP_Q[q]$-verifier $V_x$, with an input $x$ hardcoded into it, accepts a proof $y$ is given by
\begin{align*}
   P(y) =  \sum_{S \subseteq [p(n)],|S|\leq 2q} \beta_S \prod_{i \in S} y_i.
\end{align*}
With a given $S$ we associate a string $y^{S} \in \{0,1\}^{p(n)}$ such that $y^{S}_i = 1$ if and only if $i \in S$. For this string $y^{S}$ we can efficiently construct a ``fake'' proof unitary $U_{y^{S}}$, and consider the action of the $\QCPCP_Q$-verififier $V_x$ when given query access to $U_{y^{S}}$. We can express its acceptance probability as
\begin{align}
    P(y^{S}) &= \sum_{S' \subseteq [p(n)],|S'|\leq 2q} \beta_{S'} \prod_{i \in S} y^{S'}_i \nonumber\\
    &= \sum_{S' \subset S} \beta_{S'} + \beta_{S},\label{eq:sum_of_S}
\end{align}
keeping only the non-zero terms. Write $\Pi_1 = \ketbra{1} \otimes \mathbb{I}$ for the projection on the output qubit being in $\ket{1}$. Since we have that $P(y^S) = \norm{\Pi_1 \ket{\psi_q (y^S)}}^2_2$ by~\cref{eq:P_and_psiq}, we have that~\cref{eq:sum_of_S} can be rewritten as
\begin{align*}
    \beta_{{S}} &= \norm{\Pi_1 \ket{\psi_q (y^S)}}^2_2 - \sum_{S' \subset S} \beta_{S'}.
\end{align*}
Each coefficient $\beta_S$ can be expressed in terms of other coefficients $\beta_{S'}$, where $S' \subset S$, and the probability that $V_x$ accepts the input $y^S$. Hence, if no errors occurred in the estimation of each of the $\beta_S$,~\cref{alg:reduction} would give a perfect description of the polynomial $P$. However, errors will be introduced as $\norm{\Pi_1 \ket{\psi_q (y)}}^2_2$ can only be estimated. To ease presentation, we will introduce several error parameters which will eventually combined into the single error parameter $\epsilon$ as specified in~\cref{alg:reduction}.

Assume that we have that (i) $\abs{\hat{\beta}_S - \beta_S} \leq \epsilon_1$ for all $S \subseteq [p(n)], |S|\leq 2q$, (ii) that the overall estimation succeeded with probability $\geq 1-\delta$ and (iii) we have already obtained an $a$ that satisfies the above criteria. Then we have that our estimated polynomial $\hat{P}$ satisfies
\begin{align}
    \norm{\hat{P}(y) - P(y)}_{\infty} 
     &\leq \sum_{S \subseteq [p(n)],|S|\leq 2q} \abs{\tilde{\beta}_S - \beta_S} \\
     &\leq p(n)^{2q} \epsilon_1, \nonumber
\end{align}
which meets the minimum required precision $\leq \delta_D$ for $\epsilon_1 := \frac{1}{4}(c-s)  p(n)^{-2q}$. For our choice of $\epsilon_1$, we have that
\begin{itemize}
    \item $x \in A_\textup{\sc yes} \Rightarrow \Pr [ \exists y :  \hat{P}(y) \geq a] \geq  1-\delta$,
    \item $x \in A_\textup{\sc no} \Rightarrow  \Pr [\forall y :  \hat{P}(y) < a ] \geq  1-\delta$,
\end{itemize}
where the probability is taken over the outcome of the performed reduction. Let us now show that for our parameter choices, all the above three criteria are met.

\paragraph{Estimating $\norm{\Pi_1 \ket{\psi_q(y^S)} }_2^2$.} In the reduction we have to estimate $\norm{\Pi_1 \ket{\psi_q(y^S)} }_2^2 \in [0,1]$ for several settings of $y^S$, which is done in~\cref{alg:reduction} by preparing the state $\ket{\psi_q(y^S)}$ and performing a measurement in the computational basis of the first qubit. Suppose for now that we want to do the estimation of $\ket{\psi_q(y^S)}$ up to precision $\epsilon$, using the standard procedure for mean estimation. Let $X_i$ be the random variable for which $X_i =1$ if the outcome was $\ket{1}$ and $X_i = 0$ when the outcome was $\ket{0}$. Now let $\hat{X}$ be the random variable corresponding to 
\begin{align*}
    \hat{X} = \frac{1}{T} \sum_{t \in [T]} X_i,
\end{align*}
i.e.~the average of $T$ outcomes of $X_t$, which are independent and identically distributed. Since $0\leq X_i \leq 0$,  Hoeffding's inequality tells us that
\begin{align*}
    \Pr[\hat{X}-\mathbb{E}[X] \leq \epsilon] \geq 1-\exp(-2T\epsilon^2).
\end{align*}
To achieve overall success probability $1-\delta$, we require that
\begin{align*}
   1-\exp(-2T\epsilon^2) \geq 1-\delta,
\end{align*}
which is satisfied when
\begin{align}
    T \geq \frac{\log \left(\frac{1}{\delta}\right) }{2 \epsilon^2}.
    \label{eq:bound_T}
\end{align}

\paragraph{Estimation precision.}  First, we will find the minimum required precision $\epsilon_2$ of estimating each $\norm{\Pi_1 \ket{\psi_q(y^{S'})} }_2^2$ in order to achieve $\abs{\tilde{\beta}_S - \beta_S} \leq \epsilon_1$ for all $S \subseteq [p(n)], |S|\leq 2q$. Write $X_S = \norm{\Pi_1 \ket{\psi_q(y^{S})} }_2^2$ and $\hat{X}_S$ for its estimate. For some $\beta_S$, we have
\begin{align*}
    \abs{\beta_S - \hat{\beta_S}} &= \abs{X_S - \sum_{S' \subset S} \beta_{S'} - \left(\hat{X}_S - \sum_{S' \subset S} \hat{\beta}_{S'}\right)}\\
    &\leq \abs{X_S -\hat{X}_S} + \abs{  \sum_{S' \subset S} \hat{\beta}_{S'} - \sum_{S' \subset S} \beta_{S'}}\\
    &\leq \abs{X_S -\hat{X}_S} + \sum_{S' \subset S} \abs{ \hat{\beta}_{S'} -  \beta_{S'}},
\end{align*}
Since the error guarantee only depends on the cardinalities of each $S' \subset S$, i.e.~$|S'|$, we replace $\beta_{S'}$ with $\beta_i$ such that with $i=|S'|$. Let $l=|S|$. Using that  $\abs{X_S -\hat{X}_S}  \leq \epsilon_2$ for any $S$, we have that
\begin{align*}
 \abs{\beta_l - \hat{\beta_l}} &\leq \epsilon_2 + \sum_{S' \subset S} \abs{ \hat{\beta}_{S'} -  \beta_{S'}}\\
   &\leq \epsilon_2 +   \sum_{i=0}^{l-1} \binom{p(n)}{i} \abs{ \hat{\beta}_{i} -  \beta_{i}}.
\end{align*}
Define $B(i) = \abs{\beta_i - \hat{\beta_i}}$ for $i = \{0,1,\dots,l\}$. Since all terms are positive, this expression can be upper bounded by solving a recursion
\begin{align}
     B(l) &\leq \epsilon_2 +   \sum_{i=0}^{l-1} \binom{p(n)}{i} B(i)
     \label{eq:B(l)}
\end{align}
We will need the following lemma, that we prove in~\cref{app:omittedproofs}:
\begin{restatable}{lemma}{recursionlemma} Let $f(l)$ be a recursion given by
\begin{align*}
    f(l) =  x +  \sum_{i=0}^{l-1} a_i f(i)
\end{align*}
where $\{a_i\}$ are real coefficients and initial value $f(0) = x$. Then we have that
\begin{align*}
     f(l) = (1+\sum_{S \subseteq [l-1]} \prod_{i \in S} a_i ) x
\end{align*}
\label{lem:recursion}
\end{restatable}
We have that~\cref{eq:B(l)} is the form of~\cref{lem:recursion} with $a_i = \binom{p(n)}{i}$.  We bound each coefficient as
\begin{align*}
    \binom{p(n)}{i} \leq  p(n)^{2q}
\end{align*}
using that $i \leq 2q$. By~\cref{lem:recursion} we find
\begin{align*}
     \abs{\beta_{2q} - \hat{\beta_{2q}}} &\leq \epsilon_2 (1+\sum_{S \subseteq [2q-1]} \prod_{i \in S} p(n)^{2q} ) \leq  \epsilon_2 (1+ 2^{2q-1} \left( p(n)^{2q}\right)^{2q} ) \leq \epsilon_1. 
\end{align*}
when $\epsilon_2 = \epsilon_1/(1+ 2^{2q-1} \left( p(n)^{2q}\right)^{2q} ) $.

\paragraph{Success probability.}
To ensure that the overall reduction succeeds with probability $1-\delta$, we need the estimation of $\norm{\Pi_1 \ket{\psi_q(y^S)} }_2^2$ all up to accuracy $\epsilon$ succeeds on all instances. Let $1-\delta'$ be the success probability of each of these estimations. We can upper bound the total number of subsets $S$ as
\begin{align*}
    \sum_{l=0}^{2q} \binom{p(n)}{l} \leq (2q+1) p(n)^{2q}
\end{align*}
using the previously used upper bound on the binomial coefficient. This means that it suffices to have
\begin{align*}
    (1-\delta')^{ (2q+1) p(n)^{2q}} \geq 1-\delta \Leftrightarrow \delta' \leq 1-(1-\delta)^{\left(  (2q+1) p(n)^{2q} \right)^{-1}},
\end{align*}
for which we have that 
\begin{align*}
    1-(1-\delta)^{\left( (2q+1) p(n)^{2q}\right)^{-1}} \geq \frac{\delta}{ (2q+1) p(n)^{2q}},
\end{align*}
where we used the inequality $(1+x)^r \leq 1+rx$ for $x\geq -1$, $r \in [0,1]$. Hence, setting
\begin{align*}
    \delta' := \frac{\delta}{ (2q+1) p(n)^{2q}}
\end{align*}
suffices. 

\paragraph{Bits of precision and $D$.}
We now want to change our estimate to correspond to bits of precision. Specifically, we will pick an $\epsilon \leq \epsilon_2$ such that we can take the first $l$ bits of our estimate $\hat{X}$ and, conditioning on the estimation procedure succeeding, be sure that they are correct and be sure that we have $\leq \epsilon_2$ additive error. Recall that for any $S$, the random variable $X_i \in \{0,1\}$. Hence, when the estimate $\hat{X}$ is constructed out of $T$ trials, we have that the total outcome space is given by $\{0,\frac{1}{T},\frac{2}{T},\dots,1-\frac{1}{T},1\}=:G_T$. Assume that $T$ satisfies $T + 1 =2^k$ for some $k \in \mathbb{N}$, such that $G_t$ can be represented exactly using $k$ bits. If we want our estimate to have that the first $l \leq k$ bits are correct, this is equivalent to having
\begin{align*}
    \text{First $l$ bits of $\hat{X}$ are correct} \iff \epsilon_2 \leq \frac{1}{2\left(2^l-1\right)}.
\end{align*}
Hence, we have to find an $\epsilon$ below $\epsilon_2$ such that it becomes exactly expressible as the right-hand side of the above inequality. We can do so by setting
\begin{align*}
    \frac{1}{2\left(2^{\lceil \log_2 (1/(2\epsilon_2)+1) \rceil} -1 \right)} \geq \frac{\epsilon_2}{2(1+\epsilon_2)} \geq \epsilon_2/4 =:\epsilon,
    \end{align*}
so that at least $l=\lceil \log_2(1/(2\epsilon_2) +1) \rceil$ bits are correct. By~\cref{lem:lemboundcoef}, we have that each $\beta_S$ is contained in the interval
\begin{align*}
    \beta_S \in [-(1+(1+2^{2q})^{2q-1}), (1+(1+2^{2q})^{2q-1})],
\end{align*}
so we need at most $k = \log_2 \left( 2( 1+(1+2^{2q})^{2q-1}) \right) l +1$ bits to describe $\beta_S$ (since $l$ bits are used to described the interval $[0,1]$). Since $\hat{P}(y) \in [-\delta_D,1+\delta_D] $, we can set $D$ as
\begin{align*}
  D:=  G_{ 2^{2ll+1}} = G_{ 4 \lceil \log_2(1/(2\epsilon_2) +1) \rceil } 
\end{align*}
We only need to ensure that our value of $a$ is in the set. Since $\epsilon_2 < \delta_D$, setting $a :=  \text{argmin}_{d\in D} \abs{d-(c+s)/2} $ suffices to have $a \in (s,c)$ and $a \in D$. Clearly, we have $\log_2(|D|) = \poly(n)$, as desired.

\paragraph{Complexity and combining relevant parameters.} Finally, we combine all the above parameters and show that this results in a polynomial run time when $q$ is constant. For our final choice of $\epsilon$, we can set $T$ satisfying~\cref{eq:bound_T} for $\epsilon$ instead of $\epsilon_2$, and we find
\begin{align*}
&\epsilon := \frac{1}{16 (c-s) p(n)^{2q}(1+ 2^{2q-1} \left( p(n)^{2q}\right)^{2q} }  =1/\poly(n)\\
    &T:= \left\lceil  \left(\left(2^{\lceil \log_2 (1/(2\epsilon)+1) \rceil} -1 \right) \right)^2 \log \left(\frac{(2q+1) p(n)^{2q}}{\delta }\right) \right\rceil = \poly(n),
\end{align*}
which gives an overall runtime of the order of $\mO(p(n)^{2q} T) = \poly(n)$.
\end{proof}

\subsection{Implications}
Similar to what was observed in~\cite{buhrman2024quantum}, quantum reductions are a versatile tool to prove many properties of quantum PCPs. We will now prove several statements regarding quantum-classical PCPs, all of which to some extent utilize the quantum reduction from~\cref{thm:reduction}. 

\subsubsection{A new upper bound on quantum-classical PCPs}
\Cref{thm:reduction} directly implies the following corollary, as we give a randomized reduction to a problem in $\NP$ (note that this is \emph{not} the promise version).

\begin{corollary} For any constant $q\in \mathbb{N}$, we have
\begin{align*}
    \QCPCP_Q[q] \subseteq \BQ \cdot \NP.
\end{align*}
\label{cor:containment_BQNP}
\end{corollary}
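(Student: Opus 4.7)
The plan is to compose the quantum reduction provided by \Cref{thm:reduction} with a classical polynomial-time many-one reduction from the multi-linear polynomial threshold problem (\Cref{def:mlptp}) to an $\NP$-complete problem. Since the multi-linear polynomial threshold problem sits in $\NP$---as noted immediately after \Cref{def:mlptp}, a witness is simply a $y \in \{0,1\}^N$, and verifying $P(y) \geq a$ is efficient thanks to the polynomial bit-length guarantee of \Cref{lem:bits_precision}---such a classical reduction exists via Cook--Levin. Fix any $A \in \QCPCP_Q[q]$ with constant $q$; by the default choice of parameters in \Cref{def:QCPCP_Q2}, the completeness--soundness gap is $c-s = 1/3 = \Omega(1/\poly(n))$, so the hypotheses of \Cref{thm:reduction} are satisfied.

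Concretely, I would first invoke \Cref{thm:reduction} with $\delta := 1/12$ to obtain a polynomial-time quantum algorithm $\mathcal{A}$ that on input $x$ with $|x|=n$ produces the classical description of an instance $\mathcal{I}(x)$ of the multi-linear polynomial threshold problem, with the property that with probability at least $11/12$ over the randomness of $\mathcal{A}$: if $x \in A_\textup{\sc yes}$ then $\mathcal{I}(x)$ is a {\sc yes}-instance, and if $x \in A_\textup{\sc no}$ then $\mathcal{I}(x)$ is a {\sc no}-instance. Then I would compose $\mathcal{A}$ with a fixed classical polynomial-time many-one reduction $R$ from the multi-linear polynomial threshold problem to $\mathsf{3}\text{-}\mathsf{SAT}$. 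Viewing the bits of $R(\mathcal{I}(x))$ as the designated measurement register of the combined quantum algorithm, the map $R \circ \mathcal{A}$ is precisely a polynomial-time quantum reduction in the sense of \Cref{def:BQ}, whose output satisfies the completeness and soundness conditions for the $\NP$-complete problem $\mathsf{3}\text{-}\mathsf{SAT}$ with success probability at least $11/12 > 2/3$. By \Cref{def:BQ}, this places $A$ in $\BQ \cdot \NP$.

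Essentially all of the work has been done in \Cref{thm:reduction}, so no serious obstacle arises here; the argument is a matter of correctly assembling the pieces. The one subtlety worth highlighting is making sure that the reduction target is a genuine decision problem rather than a promise problem (otherwise one would at best land in something like $\BQ \cdot \PromiseNP$). This is exactly what \Cref{thm:reduction} guarantees: the threshold $a$ and the discretization $D$ are chosen so that the tolerance $\delta_D$ lies strictly inside the promise gap $(s,c)$, so that conditioned on the internal estimation of each $\norm{\Pi_1 \ket{\psi_q(y^S)}}_2^2$ succeeding, the produced instance $\mathcal{I}(x)$ is unambiguously {\sc yes} or {\sc no} and the Cook--Levin reduction to $\mathsf{3}\text{-}\mathsf{SAT}$ applies without any promise subtleties.
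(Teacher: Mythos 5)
Your proposal is correct and matches the paper's approach: the paper also derives this corollary directly from Theorem~\ref{thm:reduction}, noting that the quantum reduction targets a problem in $\NP$ (and, importantly, a genuine decision problem, not a promise problem). The only difference is that you make explicit the final composition with a deterministic Cook--Levin reduction to $\mathsf{3}\text{-}\mathsf{SAT}$, which the paper leaves implicit by using the alternative formulation of $\BQ\cdot\mathcal{C}$ (Definition~\ref{def:BQ_alt}, equivalent to Definition~\ref{def:BQ} since $\NP\supseteq\Pclass$); this is a harmless extra detail, not a different argument.
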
 
We remark that $\QCPCP[\mO(1)] \subseteq \BQ \cdot \NP$ also follows from~\cite[Theorem 25]{weggemans2023guidable}, where a similar proof was used to derive the weaker upper bound of $\BQP^{\NP[1]}$ (which is a $\BQP$ verifier that is allowed to make a single query to an $\NP$ oracle). A key difference, however, is that the proof in~\cite{weggemans2023guidable} does not remove the promise, which was acceptable for their purposes since all complexity classes (even the classical ones) were defined as promise classes. Finally, note that the inclusion $\BQ \cdot \NP \subseteq \QCPCP[\mO(1)]$ does not necessarily hold, as the quantum reduction might produce different $\NP$ problems, each requiring different witnesses. Thus, the prover may not know which proof to provide.

\subsubsection{Constant query hierarchy collapse, query equivalence and amplification}
We will use H\aa stad's PCP to show the constant query hierarchy collapse.

\begin{lemma}[\cite{HastadSome1997}] For every $\delta > 0$ and every decision problem $D \in \NP$ there is a PCP-verifier $V$ for $L$ making $3$ queries having completeness parameter $1 - \delta$ and soundness parameter at most $\frac{1}{2} + \delta$.
\label{lem:hastad_PCP}
\end{lemma}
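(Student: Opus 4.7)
The plan is to follow H\aa stad's celebrated $3$-bit PCP construction, which proceeds in several well-understood stages. First, I would start from a hardness result for a two-prover one-round game, specifically a Label Cover instance obtained by applying the PCP theorem to an $\NP$-complete problem and then invoking Raz's parallel repetition theorem to drive the soundness of the game down to any desired constant $\eta > 0$, while keeping the completeness equal to $1$. This gives a pair of provers $(u,v)$ and a projection constraint $\pi_{u,v}$ that the verifier can sample from a distribution consistent with an $\NP$-witness.

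The second stage is to have the honest prover encode the assignment to each variable $u$ of the Label Cover instance using the \emph{long code} $A_u : \{-1,+1\}^{R} \to \{-1,+1\}$, where the long code of a label $\ell$ is the dictator function $\chi_{\ell}$. The verifier reads three bits of the proof: it samples a point $x$, a point $y$, and a noisy product $z = x \cdot y \cdot \mu$ where $\mu \in \{-1,+1\}^R$ is chosen with a small independent bias (roughly $\delta$-biased), and it accepts iff $A_u(x) \cdot A_v(y) \cdot A_v(z) = 1$ (suitably adapted to the projection $\pi_{u,v}$). Completeness $1-\delta$ is almost immediate: an honest long code of the correct labeling fails only when the noise $\mu$ flips the parity, and this happens with probability $O(\delta)$.

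The third and hardest stage is the soundness analysis. Here one assumes the verifier accepts with probability exceeding $1/2 + \delta$ and must construct a randomized labeling of the Label Cover instance that satisfies a non-negligible fraction of the constraints, which contradicts the soundness of the underlying game once $\eta$ has been made small enough. The analysis is Fourier-analytic: one expands $A_u$ and $A_v$ in the Walsh--Hadamard basis, uses the bias of $\mu$ to dampen high-degree terms, and shows that the acceptance advantage concentrates on short Fourier characters, which can be decoded into labels. The main obstacle is precisely this Fourier calculation---keeping track of the interaction of the projection constraint $\pi_{u,v}$ with the long codes, and arguing that the surviving low-degree characters yield labels whose projections are compatible with a constant-probability success rate against the Label Cover game.

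Finally, I would choose the parameters so that both the noise rate and the Label Cover soundness $\eta$ are comfortably below $\delta$, confirming the promised completeness $1-\delta$ and soundness $1/2 + \delta$. Since H\aa stad's theorem is being quoted as a black box for the subsequent hierarchy-collapse argument, a proof sketch at this level of granularity, together with the citation to \cite{HastadSome1997}, is what I would include.
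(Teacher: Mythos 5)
The paper states this lemma purely as a black-box citation to H\aa stad's work~\cite{HastadSome1997} and does not reproduce a proof. Your sketch is a faithful high-level outline of H\aa stad's actual argument---Label Cover from the PCP theorem plus Raz's parallel repetition, long-code encoding of labels, the noisy three-bit test, and the Fourier-analytic decoding in the soundness case---which is exactly what the citation points to, so this matches the paper's intent.
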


\begin{proposition} For any $q\in \mathbb{N}$ constant and for any $c,s \geq 1/\poly(n)$, we have 
\begin{align*}
\QCPCP_{Q,c,s}[q] \subseteq \QCPCP_{1-\delta,1/2+\delta}[3],
\end{align*}
for any $\delta >0$ constant.
\label{prop:QCPCP_equiv}
\end{proposition}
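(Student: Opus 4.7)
The plan is to pipeline the quantum reduction of~\cref{thm:reduction} with H\aa stad's classical $3$-query PCP from~\cref{lem:hastad_PCP}. Given a $\QCPCP_{Q,c,s}[q]$-verifier $V_x$, I would build a $\QCPCP[3]$-verifier $V'_x$ that (i) runs the quantum reduction on $V_x$ to produce a classical description of an NP instance $I=(\{\hat{\beta}_S\},a)$ of the multi-linear polynomial threshold problem (\cref{def:mlptp}), and then (ii) simulates H\aa stad's $3$-query classical PCP verifier for $I$ on the quantum verifier's hardware, routing the three classical bit-queries to a prover-supplied string $\pi$. The resulting verifier makes exactly three classical queries to $\pi$.

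For the parameter bookkeeping, \cref{thm:reduction} gives success probability $1-\delta_1$ for any $\delta_1=2^{-\poly(n)}$, and on success the YES/NO decision of $I$ matches that of $x$: any $y^*$ with $P(y^*)\geq c$ is a valid witness for $I$ because $\hat{P}(y^*)\geq c-(c-s)/4\geq a$. By~\cref{lem:hastad_PCP}, for any constant $\delta_2>0$ the threshold problem admits a $3$-query PCP with completeness $1-\delta_2$ and soundness $\leq 1/2+\delta_2$. Composing the two, in the YES case the prover supplies H\aa stad's proof $\pi_I$ for the (made-deterministic) $I$ and $V'_x$ accepts with probability $\geq (1-\delta_1)(1-\delta_2)\geq 1-\delta_1-\delta_2$; in the NO case $I$ is UNSAT whenever the reduction succeeds, so the PCP verifier accepts any $\pi$ with probability at most $1/2+\delta_2$, giving acceptance $\leq\delta_1+1/2+\delta_2$. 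Picking $\delta_1,\delta_2$ with $\delta_1+\delta_2\leq\delta$ yields the claim.

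The hardest part, I expect, is that H\aa stad's proof $\pi_I$ depends on $I$ while the empirical estimates $\hat\beta_S$ fluctuate across runs, so a single $\pi$ cannot a priori match every possible output. I would resolve this by boosting the estimation precision of~\cref{alg:reduction} to $2^{-\poly(n)}$ with failure probability $2^{-\poly(n)}$, and snapping each $\hat\beta_S$ to a fixed grid of spacing $\Delta\ll(c-s)/p(n)^{2q}$ (small enough that rounding preserves the promise gap by the analysis of~\cref{thm:reduction}). With super-polynomially finer estimates than the grid, the snapped coefficients collapse to a deterministic canonical tuple $\{\tilde\beta_S\}$ with failure probability $2^{-\poly(n)}$, except when some true $\beta_S$ sits within the estimation precision of a grid boundary---this boundary case is the delicate subtlety. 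I expect the cleanest fix is to randomize the grid offset: let the verifier sample $\theta$ uniformly from $K=\poly(n)$ evenly spaced values in $[0,\Delta)$, and have the prover supply $K$ independent H\aa stad proofs, one per offset, with the verifier routing its three queries only into the $\theta$-th block. A measure/union-bound argument shows the bad offsets (those placing some $\beta_S$ within the estimation precision of a grid point) occupy only a fraction $O(\poly(n)\cdot 2^{-\poly(n)}/\Delta)$ of $[0,\Delta)$, so $\theta$ is good with probability $1-o(1)$; conditioned on a good $\theta$ the reduction is deterministic on success, and the residual error is absorbed into $\delta$ in the parameter bookkeeping above.
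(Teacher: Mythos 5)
Your high-level plan is the paper's plan: run the quantum reduction of~\cref{thm:reduction} to get a multi-linear polynomial threshold instance, then have the prover supply a H\aa stad 3-query PCP proof for that instance, with the same parameter bookkeeping ($(1-\delta_1)(1-\delta_2)$ in the YES case, $\leq \delta_1 + 1/2 + \delta_2$ in the NO case). Where you diverge is in the treatment of determinism. The paper dispatches this in one sentence, asserting that ``conditioned on the quantum reduction succeeding, this reduction is deterministic: since the multi-linear polynomial is learned up to a fixed number of bits of precision, it will always output the same polynomial.'' That claim leans on the step in the proof of~\cref{thm:reduction} where one picks $\epsilon \leq \epsilon_2$ and asserts that conditioning on $|\hat{X}_S - \mathbb{E}[X_S]| \leq \epsilon$ forces the first $l$ bits of $\hat{X}_S$ to be ``correct,'' i.e., equal to those of $\mathbb{E}[X_S]$. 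As you point out, this silently assumes that $\mathbb{E}[X_S]$ does not sit within $\epsilon$ of an $l$-bit truncation boundary; if it does, the truncated estimate can land on either side across runs, the output instance $I$ is not deterministic, and a single fixed $\pi_I$ cannot be guaranteed to match the instance the verifier actually produced. You are right to flag this as the real difficulty in the argument — the paper's own proof does not address it.

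Your randomized-offset fix (prover supplies $K$ H\aa stad proofs; verifier samples an offset $\theta$, computes its instance using the $\theta$-shifted grid, and routes its three queries into the $\theta$-th block) is a legitimate resolution that stays within the $\QCPCP$ model: still three classical queries, proof length still polynomial. Two small corrections to your bookkeeping. First, the randomized grid should be applied to the raw estimates $\hat{X}_S$ in~\cref{alg:reduction}, not directly to $\hat{\beta}_S$, since the latter are computed from the former; this changes nothing substantive. Second, the fraction of bad offsets among the $K$ discrete choices is not the Lebesgue ratio $O(\poly(n)\cdot 2^{-\poly(n)}/\Delta)$ you wrote. Each of the $\poly(n)$ intervals of width $2\epsilon' = 2^{-\poly(n)}$ can still contain (at most) one of the $K$ evenly-spaced offsets, so the discrete bad fraction is $O(\poly(n)/K)$, which you then drive to $o(1)$ by choosing $K$ a sufficiently large polynomial. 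With that correction your argument is sound, and is in fact a more careful rendering of this proposition than the paper gives.
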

\begin{proof}
From~\cref{thm:reduction} we know that for any promise problem $A = (A_\textup{\sc yes},A_\textup{\sc no})$ in $\QCPCP_{Q,c,s}[q]$ with input $x$ and completeness and soundness parameters $c,s$ with $c-s \geq 1/\poly(n)$, there exists a quantum reduction to a multi-linear polynomial threshold problem. Conditioned on the quantum reduction succeeding, this reduction is deterministic: since the multi-linear polynomial is learned up to a fixed number of bits of precision, it will always output the same polynomial. Since deciding if a multi-linear polynomial has a setting for which it evaluates to a value larger than $a$, or evaluates to a value smaller than $a$ is in $\NP$, it can be correctly decided by the PCP from~\cref{lem:hastad_PCP} with completeness $1-\delta_1$ and soundness $\frac{1}{2}+\delta_1$.  The $\QCPCP_{1-\delta,\frac{1}{2}+\delta}$ protocol naturally follows: the verifier performs the reduction with success probability $1-\delta_2$ to obtain the polynomial $\hat{P}$ and then uses the $\PCP$ protocol and the proof $y$ given by the prover to decide if there exists a $y$ such that $\hat{P}(y) \geq a$ or whether for all $y$ we have $ \hat{P}(y) <a$. We have that
\begin{itemize}
    \item $x \in A_\textup{\sc yes} \Rightarrow \Pr [ \QCPCP \text{ protocol accepts}] \geq  (1-\delta_1)(1-\delta_2) \geq 1-\delta$,
    \item $x \in A_\textup{\sc no} \Rightarrow  \Pr [ \QCPCP \text{ protocol accepts}] \leq  1-(1-\delta_1)(\frac{1}{2} -\delta_2) \leq \frac{1}{2} + \delta$,
\end{itemize}
when $\delta_1 = \delta_2 = \delta/2$. Since the reduction of~\cref{thm:reduction} runs in time polynomial in $\log(1/\delta_2)$, it will clearly run in polynomial time when $\delta_2$ is constant.
\end{proof}
There are two important remarks one can make with respect to~\cref{prop:QCPCP_equiv}.

\begin{remark}
   The idea behind~\cref{prop:QCPCP_equiv} also shows that w.l.o.g.~a quantum-classical PCP can be assumed to use a uniformly random distribution over (a subset of) the indices of the proof to be queried and then uses a single quantum circuit to determine the outcome, as there are constant-query non-adaptive PCPs that use a uniform random distribution (for example by using the $\NP$-hard gapped version of $3$-SAT, which can be achieved using Dinur's gap amplification procedure~\cite{Dinur2007thepcp}).
\label{rem:unif_PCP}
\end{remark}

\begin{remark}
 Even though~\cref{prop:QCPCP_equiv} uses a classical PCP construction, which is non-relativizing, the theorem itself \emph{does} relativize. The reason for this is that the quantum reduction does allow the $\QCPCP_Q[q]$-verifier to make queries to an additional oracle, so the learned polynomial will have the oracle calls implicitly encoded into it. The degree of the polynomial is independent of external oracle queries and only depends on the number of quantum queries to the proof.
 \label{rem:relativizing}
\end{remark}
Finally, we would like to point out that~\cref{prop:QCPCP_equiv} also has another immediate corollary: $\QCPCP$ with a constant number of adaptive queries has the same power as constant-query non-adaptive $\QCPCP$, as $\QCPCP_Q$ can be viewed as a generalization of adaptive $\QCPCP$. This rediscovers another result from~\cite{weggemans2023guidable}.

\subsubsection{An alternative characterization}
Using the observation in~\cref{rem:unif_PCP}, one can show a similar result to $\QCMA = \NP^{\BQP}$ (see~\cite{weggemans2023guidable}) for quantum-classical PCPs. Note that this is the \textit{two-sided} error version of the class $\PCP$, which contains the standard definition via error reduction.

\begin{proposition}
We have that $\QCPCP_Q[\mO(1)] = \PCP_{2/3,1/3}[\log n,\mO(1)]^{\BQP}$. 
\label{prop:alt_charac}
\end{proposition}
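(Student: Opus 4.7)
The plan is to prove the two inclusions separately. For the easier direction $\PCP_{2/3,1/3}[\log n,\mO(1)]^{\BQP}\subseteq \QCPCP_Q[\mO(1)]$, I would let the $\QCPCP_Q$ verifier directly simulate any such $\PCP^{\BQP}$ verifier: the $\mO(\log n)$ random coins are generated via Hadamard gates and measurements, each $\BQP$ oracle call is executed as an internal polynomial-time quantum subroutine (amplified to exponentially small error, so a union bound over the polynomially many oracle calls is negligible), and each of the $\mO(1)$ classical proof queries of the PCP is realized by a single application of $U_y$ to a computational basis state. Acceptance probabilities are preserved up to negligible additive loss, so the $(2/3,1/3)$ gap transfers.

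For the converse $\QCPCP_Q[\mO(1)]\subseteq \PCP_{2/3,1/3}[\log n,\mO(1)]^{\BQP}$, I would first invoke Proposition~\ref{prop:QCPCP_equiv} together with Remark~\ref{rem:unif_PCP} to reduce to a $\QCPCP_{1-\delta,1/2+\delta}[3]$ protocol, with $\delta>0$ a small constant, whose three query positions $i_1(r),i_2(r),i_3(r)$ are drawn uniformly at random using $\mO(\log n)$ coins. The $\PCP^{\BQP}$ verifier then samples $r$, classically reads $b_j := y_{i_j(r)}$, additionally samples a threshold $t$ uniformly from a $\poly(n)$-spaced grid in $[0,1]$, and queries a $\BQP$ oracle to decide
\[
\Pr\!\big[V_{x,r}(b_1,b_2,b_3)=1\big]\;\ge\; t,
\]
accepting iff the oracle answers yes. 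Averaging over $t$ reproduces the true acceptance probability of the $\QCPCP$ verifier up to an additive $\mO(1/\poly(n))$, giving completeness $\ge 1-\delta-o(1)$ and soundness $\le \tfrac12+\delta+o(1)$. A standard sequential repetition of this PCP, which multiplies the proof-query count by a constant and leaves the randomness at $\mO(\log n)$, then amplifies the constant gap to $(2/3,1/3)$.

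The main obstacle I expect is that the $\BQP$ decision problem above is only a \emph{promise} problem, whose promise fails precisely when $t$ lies within $1/\poly(n)$ of the true circuit acceptance probability; on such inputs the oracle may answer adversarially under the standard convention for promise-class oracles~\cite{goldreich2006promise}. I would control this by choosing the grid fine enough that the measure of bad thresholds per $r$ is $\mO(1/\poly(n))$, absorbing the resulting slack into the $o(1)$ error before the amplification step. The overall structure mirrors the analogous identity $\QCMA=\NP^{\BQP}$ from~\cite{weggemans2023guidable}, with the PCP's $\mO(\log n)$ random selection of $3$ proof positions playing the role of classical nondeterminism and the $\BQP$ oracle absorbing all quantum computation not directly tied to reading the proof.
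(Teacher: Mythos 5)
Your proposal is correct and follows the same overall skeleton as the paper's proof: both directions are handled by simulation, the converse direction relies on~\Cref{prop:QCPCP_equiv} and~\Cref{rem:unif_PCP} to obtain a canonical form where the query positions are sampled with $\mO(\log n)$ coins, and both proofs confront the same central obstacle that the $\BQP$ oracle can only decide a \emph{promise} threshold problem about the verifier circuit's acceptance probability. Where you genuinely diverge from the paper is in how that promise issue is resolved. The paper first applies weak error reduction to the $\QCPCP$ verifier (pushing completeness to $\geq 0.99$ and soundness to $\leq 0.01$), then uses a Markov-inequality argument over the random index choice so that, except with probability $\leq 0.1$, the conditional acceptance probability given the sampled indices lands in one of the two extreme regions $[0,1/10]\cup[9/10,1]$; the $\BQP$ oracle is queried with a \emph{fixed} threshold gap $(1/10,9/10)$, and the small residual probability of an invalid query is absorbed into the final two-sided error. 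Your route instead randomizes the threshold $t$ over a $\poly(n)$-spaced grid (at the cost of another $\mO(\log n)$ coins) and uses the averaging identity $\Pr_t[p\geq t]\approx p$, controlling the contribution of bad thresholds directly by the grid measure. Both work; the paper's version avoids the extra randomness and extra analysis of the averaging step, while yours avoids a separate error-reduction pass and arguably makes the bookkeeping of the slack a bit more transparent. One small presentational gap in your easier direction: you should state (as the paper does explicitly) that the $\PCP^{\BQP}$ verifier may issue $\BQP$ queries violating the promise, and that your internal simulation is nonetheless sound because whatever answer the subroutine returns on such a query is a legitimate oracle response, so the $\PCP^{\BQP}$ protocol's completeness and soundness guarantees hold regardless.
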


\begin{proof}
We prove both directions of the equality. 

\paragraph{$\PCP_{2/3,1/3}[\mO(\log n),\mO(1)]^\BQP \supseteq \QCPCP[\mO(1)]$.} 
Let $q \in \mO(1)$. By~\cref{rem:unif_PCP}, we can assume, without loss of generality (for sufficiently large $q$, depending on the completeness and soundness parameters), that the quantum-classical PCP selects $q$ indices uniformly at random and then runs a verifier $V$, determining acceptance or rejection based on a single output qubit. Thus, we can use the non-adaptive definition of $\QCPCP[q]$ as given in~\cref{def:QCPCP}. For any promise problem $A = (A_\textup{\sc yes},A_\textup{\sc no})$ in $\QCPCP[q]$, we apply weak error reduction, resulting in a $q' = \mO(q)$ number of queries, and obtain the following:
\begin{itemize}
    \item If $x \in A_\text{{\sc yes}}$, there exists a $y$ such that $\mathbb{E}_{i_1,\dots,i_{q'}} [\Pr [V \text{ accepts } (x,y) ] ] \geq 0.99$.
    \item If $x \in A_\text{{\sc no}}$, for all $y$, we have $\mathbb{E}_{i_1,\dots,i_{q'}} [\Pr [V \text{ accepts } (x,y) ] ] \leq 0.01$,
\end{itemize}
where the indices are selected uniformly at random from a subset of all possible indices (which is due to the weak error reduction). Denote this subset as $I'$. Hence, when picking a set of $q'$ indices from $I'$ uniformly at random, when $x \in A_\text{{\sc yes}}$, with probability $\geq 9/10$ these will yield an acceptance probability $\geq 9/10$. Similarly, for $x \in A_\text{{\sc no}}$, with probability $9/10$, these will have an acceptance probability $\leq 1/10$. The $\PCP_{2/3,1/3}[\mO(\log n),\mO(1)]^{\BQP}$-protocol $\mathcal{A}$ is as follows: pick uniformly random indices $i_1,\dots,i_{q'}$ from $I'$, and use the $\BQP$ oracle to simulate the $\QCPCP$ verifier, querying whether the acceptance probability for $(y_{i_1},\dots,y_{i_{q'}},x)$ is $\geq 9/10$ or $\leq 1/10$. We then have:
\begin{itemize}
    \item If $x \in A_\text{{\sc yes}}$, there exists a $y$ such that $\Pr[\mathcal{A} \text{ accepts }(x,y) ] \geq 9/10$.
    \item If $x \in A_\text{{\sc no}}$, for all $y$, we have $\Pr[\mathcal{A} \text{ accepts }(x,y)] \leq 1/10$.
\end{itemize}
which means that $A $ is in $\PCP_{2/3,1/3}[\mO(\log n),\mO(1)]^\BQP$.

\paragraph{$\PCP_{2/3,1/3}[\mO(\log n),\mO(1)]^\BQP \subseteq \QCPCP[\mO(1)]$.} 
Let $p_1,p_2:\mathbb{N} \rightarrow \mathbb{N}$ be polynomials, and let $q\in \mO(1)$ be a constant. Consider a deterministic polynomial-time verification circuit $M^{\Pi,y}(x,r)$ with inputs $x$ and a random bit string $r$ of length $\mO(\log n)$, allowed to query the proof string $y \in \{0,1\}^{p_1(n)}$ up to $q$ times, and a $\BQP$ oracle $\Pi$ up to $p_2(n)$ times as a black box. For every query $z^i \in (\Pi_\textup{\sc yes},\Pi_\textup{\sc no})$ made to $\Pi$, there exists a quantum algorithm $\mathcal{A}(z^i)$ such that
\begin{itemize}
    \item If $ \Pi(z^i) =1$, then $\Pr[\mathcal{A}(z^i) = 1] \geq 1-2^{-p_3(n)}$,
    \item If $\Pi(z^i) =0 $, then $ \Pr[\mathcal{A}(z^i) = 0] \geq 1-2^{-p_3(n)}$.
\end{itemize}
for any polynomially bounded function $p_3(n)$. Let $Z(r) = \{z^1,\dots,z^{p_2(n)}\}$ be the set of all query strings to $\Pi$ given a certain $r$, and let $I(r) = \{i_1,\dots,i_{q}\}$ likewise be the set of all indices where $y$ is supposed to be queried (again given $r$). For any problem $A = (A_\textup{\sc yes},A_\textup{\sc no})$ to be in $\PCP_{2/3,1/3}[\mO(\log n),\mO(1)]^\BQP $, we must have that
\begin{itemize}
    \item If $x \in A_\textup{\sc yes} $, then $ \exists y: \Pr[M^{\Pi,y}(x,r) =1 ] \geq 2/3$,
    \item If $x \in A_\textup{\sc no}$, then $\forall y: \Pr[M^{\Pi,y}(x,r) = 1] \leq 1/3$.
\end{itemize}
However, $M$ might also make invalid queries to the oracle $\Pi$. Let $Z_\textup{\sc inv}(r) \subseteq Z(r)$ be set of all invalid queries made given a $r$. With some abuse of notation, we write $M^{\Pi',y}(x,r,o)$ for the machine $M$ which now uses the oracle $\Pi'$ which, given a certain $r$, only makes the valid queries to $\Pi$ and replace the outcomes of invalid oracle calls to $\Pi$ by an additional input string $o$ of length $|Z_\textup{\sc inv}(r)|$. We must have that
\begin{itemize}
    \item If $x \in A_\textup{\sc yes} $, then $ \exists y, \forall o \in \{0,1\}^{|Z_\textup{\sc inv}(r)|} : \Pr[M^{\Pi',y}(x,r,o) =1 ] \geq 2/3$,
    \item If $x \in A_\textup{\sc no} $, then $\forall y, \forall o \in \{0,1\}^{|Z_\textup{\sc inv}(r)|} : \Pr[M^{\Pi',y}(x,r,o) = 1] \leq 1/3$.
\end{itemize}
This means that one can take an ``adversary'' view of the invalid queries, in the sense that whatever the oracle outputs should still satisfy the completeness and soundness condition to have that a problem is in the class. Now assume the limiting case where $|Z_\textup{\sc inv}(r)| = \emptyset$, i.e. $M^{\Pi,y}(x,r,o)$ only makes valid queries (since the simulation will never be wrong on invalid queries by the above argument). Then we have that all queries are simulated correctly by $\mathcal{A}(z_i)$ is given by
\begin{align*}
    (1-2^{-p_3(n)})^{p_2(n)} \geq 1-\frac{p_2(n)}{2^{p_3(n)}} \geq 0.9
\end{align*}
for $p_3(n) \geq \log(10 q(n ))$. Since this means that $\QCPCP$ can simulate $M^{\Pi,y}(x,r)$ with high accuracy, we have that
\begin{itemize}
    \item If $x \in A_\textup{\sc yes} $, then $ \exists y: \Pr[\mathcal{A}(x,y) \text{ outputs } 1 ] \geq 0.6 $,
    \item If $x \in A_\textup{\sc no} $, then $ \forall y: \Pr[\mathcal{A}(x,y) \text{ outputs } 1] \leq 0.3$,
\end{itemize}
which are separated $\Omega(1)$, putting the problem in $\QCPCP[\mO(1)] \subseteq \QCPCP_Q[\mO(1)] $ using standard error reduction.
\end{proof}

Since $ \QCPCP[\mO(1)] \subseteq \BQ \cdot \NP \subseteq \BQ \cdot \PCP_{2/3,1/3}[\mO(\log n),\mO(1)] \subseteq \BQP^{\PCP_{2/3,1/3}[\mO(\log n),\mO(1)]}$, the following corollary follows.
\begin{corollary}$
        \PCP_{2/3,1/3}[\mO(\log n),\mO(1)]^\BQP \subseteq \QCPCP[\mO(1)] \subseteq \BQP^{\PCP_{2/3,1/3}[\mO(\log n),\mO(1)]}$
        \label{cor:QCPCP_ULB}
\end{corollary}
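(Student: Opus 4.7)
The plan is to obtain the two inclusions separately by assembling results already established in the paper; no new technical idea is needed.

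\textbf{The left inclusion.} The containment $\PCP_{2/3,1/3}[\mO(\log n),\mO(1)]^{\BQP} \subseteq \QCPCP[\mO(1)]$ is exactly the second direction established inside the proof of Proposition~\ref{prop:alt_charac}. There, given a $\PCP^{\BQP}$-verifier $M^{\Pi,y}(x,r)$, one simulates each $\BQP$-oracle call using the standard $\BQP$ algorithm $\mathcal{A}$ with inverse-exponential error and handles invalid queries through the adversarial-oracle convention, thereby producing a $\QCPCP[\mO(1)]$ protocol with a constant promise gap after error reduction. Since that argument is already carried out in the earlier proof, the first inclusion is immediate.

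\textbf{The right inclusion.} For $\QCPCP[\mO(1)] \subseteq \BQP^{\PCP_{2/3,1/3}[\mO(\log n),\mO(1)]}$, I would chain together four previously established facts:
\begin{enumerate}[label=(\roman*)]
    \item $\QCPCP[\mO(1)] \subseteq \QCPCP_Q[\mO(1)]$, since classical queries are a special case of quantum queries to the proof.
    \item $\QCPCP_Q[\mO(1)] \subseteq \BQ \cdot \NP$ by Corollary~\ref{cor:containment_BQNP}.
    \item $\BQ \cdot \NP \subseteq \BQ \cdot \PCP_{2/3,1/3}[\mO(\log n),\mO(1)]$, using the classical PCP theorem (which places $\NP$ inside $\PCP_{2/3,1/3}[\mO(\log n),\mO(1)]$) together with the monotonicity property (i) of the $\BQ$-operator proved in Proposition~\ref{prop:prop_BQ}.
    \item $\BQ \cdot \PCP_{2/3,1/3}[\mO(\log n),\mO(1)] \subseteq \BQP^{\PCP_{2/3,1/3}[\mO(\log n),\mO(1)]}$ by property (iv) of Proposition~\ref{prop:prop_BQ}.
\end{enumerate}
Composing (i)--(iv) gives the desired inclusion.

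\textbf{Obstacles.} There is no real obstacle; the entire work is packaging. The only minor care is that the inclusion used in (iii) must target the two-sided-error variant $\PCP_{2/3,1/3}$, which is trivially implied by the classical one-sided $\NP = \PCP[\mO(\log n),\mO(1)]$ since perfect completeness plus bounded soundness is a special case of two-sided bounded error. Together, the two inclusions yield Corollary~\ref{cor:QCPCP_ULB}.
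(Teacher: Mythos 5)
Your proposal is correct and follows the same route as the paper: the left inclusion is lifted directly from the second direction of Proposition~\ref{prop:alt_charac}, and the right inclusion is exactly the chain $\QCPCP[\mO(1)] \subseteq \BQ \cdot \NP \subseteq \BQ \cdot \PCP_{2/3,1/3}[\mO(\log n),\mO(1)] \subseteq \BQP^{\PCP_{2/3,1/3}[\mO(\log n),\mO(1)]}$ given in the paper, with your steps (i)--(iv) merely spelling out the implicit ingredients. No discrepancy to flag.
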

Since this result also relativizes, this contrasts the results for $\NP$, where it is known that $\NP^\BQP  \not\subset \BQP^\NP$ relative to an oracle~\cite{aaronson2021acrobatics}. Since $\QCPCP[\mO(1)] \subseteq \QPCP[\mO(1)]$,~\cref{cor:QCPCP_ULB} also gives the first lower bound on quantum PCPs beyond the trivial lower bound of $\NP$ which follows from the classical PCP theorem.

\section{Oracle separations for quantum-classical PCPs}
In this section we will again focus on quantum-classical PCPs, but now in the presence of oracles. We will see that our result that $\QCPCP[\mO(1)] \subseteq \BQ \cdot \NP$ (which holds relative to all oracles, see~\cref{rem:relativizing}) implies the existence of a classical oracle relative to which the quantum-classical PCP conjecture is false. We will also study the setting where the number of queries is (poly-)logarithmic instead of constant. We show that relative to an oracle quantum-classical PCPs with quantum queries are more powerful than those with classical queries.

\subsection{The quantum-classical PCP conjecture is false relative to an oracle}
We will first argue that some established inclusions, used to provide our desired oracle separation, relativize. This way, we can directly employ the OR $\circ$ Forrelation oracle as described in~\cref{subsec:or_forr_oracle}.
\begin{lemma}
    For all $q \in \mathbb{N}$ and all oracles $\bmO$, we have that $\QCPCP[q]^{\bmO} \subseteq \QCPCP_Q[q]^{\bmO} \subseteq \left(\BQ \cdot \NP\right)^{\bmO} \subseteq \BQP^{\NP^{\bmO}}$.
\label{lem:contwithA}
\end{lemma}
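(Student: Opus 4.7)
The plan is to verify the three inclusions in order, carefully checking that each step relativizes.

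\textbf{Step 1: $\QCPCP[q]^{\bmO} \subseteq \QCPCP_Q[q]^{\bmO}$.} This is immediate from the definitions: a classical (non-adaptive) query to bit $y_i$ of the proof is simulated by a single quantum query to $U_y$ on the computational basis state $\ket{i}\ket{0}$ followed by a measurement of the answer register. Granting both verifiers oracle access to $\bmO$ on both sides does not affect this simulation, since the simulation only modifies how the proof is accessed.

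\textbf{Step 2: $\QCPCP_Q[q]^{\bmO} \subseteq (\BQ \cdot \NP)^{\bmO}$.} I would invoke \Cref{cor:containment_BQNP} and argue that its proof relativizes. Concretely, the proof rests on \Cref{thm:reduction}, which applies the polynomial method to the $\QCPCP_Q[q]$-verifier $V_x$ with the input hardcoded. As noted in \Cref{rem:relativizing}, if $V_x$ also has access to an oracle $\bmO$, the oracle calls are absorbed into the unitaries $U_0, U_1, \dots, U_q$ of \cref{eq:psi_q}; the degree of the resulting polynomial still depends only on the number of queries $q$ to the proof. Hence the learning procedure of \Cref{alg:reduction} can be implemented by a polynomial-time quantum algorithm with oracle access to $\bmO$, producing a classical instance of the multi-linear polynomial threshold problem, which is in $\NP$ (unrelativized, since the oracle calls are already baked into the learned coefficients). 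This yields a quantum reduction from $\QCPCP_Q[q]^{\bmO}$ to an $\NP$-complete problem with the reducer having access to $\bmO$, which is exactly the definition of $(\BQ \cdot \NP)^{\bmO}$.

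\textbf{Step 3: $(\BQ \cdot \NP)^{\bmO} \subseteq \BQP^{\NP^{\bmO}}$.} I would use the relativized version of part (iv) of \Cref{prop:prop_BQ}, namely that $\BQ \cdot \mathcal{C} \subseteq \BQP^{\mathcal{C}}$ relative to any oracle. Given a promise problem $A \in (\BQ \cdot \NP)^{\bmO}$, there is a polynomial-time quantum algorithm $\mathcal{A}$ with oracle access to $\bmO$ whose measurement outputs $z$ satisfy the usual $\BQ$-completeness/soundness conditions against some $\NP$-complete problem $B$. The natural $\BQP^{\NP^{\bmO}}$ simulation runs $\mathcal{A}$ (which is a $\BQP^{\bmO}$-, hence $\BQP^{\NP^{\bmO}}$-, subroutine), measures to get $z$, and asks the $\NP^{\bmO}$ oracle whether $(x,z) \in B$. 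By \Cref{lem:prop_ampl}, after polynomially many independent repetitions one can take a majority vote and obtain bounded two-sided error, which is the standard $\BQP^{\NP^{\bmO}}$ template.

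The only potentially delicate step is Step 2, since it is the place where relativization could fail: the classical PCP construction is non-relativizing, but here we only use the quantum reduction from \Cref{thm:reduction}, not the PCP amplification of \Cref{prop:QCPCP_equiv}. Once one is comfortable that the polynomial method treats oracle gates on the same footing as the other unitaries $U_i$, the argument goes through uniformly in $\bmO$.
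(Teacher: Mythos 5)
Your proposal is correct and follows essentially the same route as the paper: the trivial simulation for the first inclusion, the observation that the polynomial-method-based reduction of \cref{thm:reduction} relativizes because $\bmO$-gates are absorbed into the between-query unitaries while the polynomial's degree depends only on proof queries, and the run-$\mathcal{A}^{\bmO}$-then-query-$\NP^{\bmO}$ simulation for the last inclusion. The only cosmetic difference is that you invoke \cref{lem:prop_ampl} for amplification in Step 3, which the paper does not need since the reduction's success probability of $2/3$ already suffices (the $\NP$-complete decision problem has no invalid instances, so the oracle answers correctly whenever the reduction produced a good instance).
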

\begin{proof}
   Let $\bmO$ be any oracle. The inclusion $\QCPCP[q]^{\bmO} \subseteq \QCPCP_Q[q]^{\bmO}$ trivially follows by a simulation argument. We have that $\QCPCP_Q[q]^{\bmO} \subseteq \BQP^{\NP^{\bmO}}$ follows from the fact that all oracle calls, which can w.l.o.g.~be viewed as quantum queries $U_f$ to $\bmO$, can be absorbed into the unitaries in $\ket{\psi_q (y)} = U_q O_y U_{q-1} O_y \dots O_y U_q O_y U_0 \ket{x}\ket{0^{m-n}}$, which shows that the proof of~\cref{thm:reduction} and thus $\QCPCP_Q[q]^{\bmO} \subseteq \left(\BQ \cdot \NP\right)^{\bmO}$ relativizes. For $\left(\BQ \cdot \NP\right)^{\bmO} \subseteq \BQP^{\NP^{\bmO}}$, This this is easily shown by noting that we can write $\left(\BQ \cdot \NP\right)^{\bmO}$ as the class of all problems $A = (A_\textup{\sc yes},A_\textup{\sc no})$ for which there exists a polynomial time deterministic Turing machine $M$ with oracle access to $\bmO$, and a polynomial-time quantum algorithm $\mathcal{A}^{\bmO}$ with (unitary) oracle access to $\bmO$ such that for all $x$, $|x| = n$, 
\begin{itemize}
    \item Completeness: if $x \in A_\textup{\sc yes} \Rightarrow \Pr_{z} [\exists y: M^{\bmO}(x,y,z) \text{ accepts}] \geq 2/3$,
    \item Soundness: if $x \in A_\textup{\sc no} \Rightarrow \Pr_{z} [\forall y: M^{\bmO}(x,y,z) \text{ rejects}] \geq 2/3$,
\end{itemize}
where $z \in \{0,1\}^{p(n)}$ is the measured output of the quantum algorithm $\mathcal{A}^{\bmO}(x)$. Note that, given some $x\in \{0,1\}^{n},z \in \{0,1\}^{p(n)}$ and a description of $M^{\bmO}(x,z)$, finding out whether $\exists y: M^{\bmO}(x,y,z)=1$ or $\forall y: M^{\bmO}(x,y,z)=0$ can be decided correctly by a $\NP^{\bmO}$ oracle. Since the class $\BQP^{\NP^{\bmO}}$ has oracle access to $\NP^{\bmO}$ and therefore also ${\bmO}$, it can run the quantum algorithm $\mathcal{A}^{\bmO}$ on input $x$ to obtain some $z$, and then use its $\NP^{\bmO}$ oracle to solve the corresponding decision problem. The overall procedure then succeeds with probability $\geq 2/3$, since it succeeds with probability $1$ if the reduction was performed successfully.
\end{proof}

These relativizing inclusions directly imply our desired oracle separation.

\orsepQCMA*
\begin{proof}
    This follows from~\cref{prop:OracleSep} and~\cref{lem:contwithA}, which shows that for all constant $q$, relative to all oracles ${\bmO}$ we have $
       \QCPCP[q]^{\bmO} \subseteq \QCPCP_Q[q]^{\bmO} \subseteq \BQP^{\NP^{\bmO}}$,   but that there exists an oracle ${\bmO}$ such that $
       \BQP^{\NP^{\bmO}} \not\supset \QCMA^{\bmO}$.   The fact that $\QCPCP_Q[q]^{\bmO} \subseteq \QCMA^{\bmO}$ holds trivially since the unitary $U_y$ can be implemented efficiently given the full description of a polynomially-sized proof $y$.
\end{proof}

\subsection{Oracle separations for logarithmic queries}
Whilst our results in~\cref{sec:QCPCP} indicate that in the constant query regime, quantum queries offer no advantage over classical queries, even in the presence of external oracles (see~\cref{rem:relativizing}), we will now prove that this changes when the number of queries is (poly-)logarithmic relative to an oracle.

We will rely on the following lemma, which shows that the Bernstein-Vazirani algorithm~\cite{bernstein1993quantum} can be used to decode $\mO(\log n)$ bits from a polynomially-sized classical string using only a single quantum query.

\begin{lemma}
Given any $\mO(\log n)$-sized bit string $x$, there exists a polynomially-sized classical proof $y$ such that a quantum algorithm can learn $x$ by making only a single quantum query to $y$.
\label{lem:BV_string_hiding}
\end{lemma}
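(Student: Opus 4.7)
The plan is to realize the proof $y$ as the truth table of the Bernstein--Vazirani inner-product oracle for the string $x$, and then invoke the standard BV algorithm. More concretely, suppose $x \in \{0,1\}^m$ with $m = c \log n$ for some constant $c$. I would define the proof $y \in \{0,1\}^{N}$ with $N = 2^m = n^c = \poly(n)$ by indexing its bits with strings $z \in \{0,1\}^m$ and setting
\begin{equation*}
y_z \;=\; x \cdot z \pmod 2 \;=\; \bigoplus_{i=1}^m x_i z_i .
\end{equation*}
This $y$ is a valid, polynomially-sized classical string that the prover can compute deterministically from $x$.

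Next I would verify that the quantum query unitary $U_y : \ket{z}\ket{a} \mapsto \ket{z}\ket{a \oplus y_z}$ is precisely the oracle used by Bernstein--Vazirani for the hidden string $x$, since $y_z = x \cdot z \pmod 2$. Hence a single quantum query to $U_y$, sandwiched between Hadamard transforms on the $m$-qubit index register (with the ancilla prepared in $\ket{-} = \tfrac{1}{\sqrt 2}(\ket{0}-\ket{1})$ to use phase kickback, which is free in our query model by the discussion in~\cref{sec:prelim}), produces the state $\ket{x}$ in the index register with probability $1$. Measuring in the computational basis then recovers $x$ exactly.

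Putting this together gives a quantum algorithm that, given query access to $y$, outputs $x$ after one quantum query and $\mO(m) = \mO(\log n)$ additional gates. There is no real technical obstacle: the only thing to check is that the length of the proof remains polynomial, which follows from $2^{c\log n} = n^c$, and that a single standard query suffices, which follows from the standard phase-kickback formulation of BV.
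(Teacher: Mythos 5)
Your proposal is correct and takes essentially the same approach as the paper: encode the truth table of the Bernstein--Vazirani inner-product function $f(z) = x \cdot z \bmod 2$ as the polynomially-sized proof $y$, and run the BV algorithm with a single standard query (via phase kickback) to recover $x$.
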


\begin{proof}
This follows directly from the Bernstein-Vazirani algorithm~\cite{bernstein1993quantum}. The algorithm can learn a secret bit string $x$ of length $l = \mO(\log n)$, provided it has quantum oracle access to a function $f : \{0,1\}^l \rightarrow \{0,1\}$ defined by $f(z) = z \cdot x \mod 2$. For any such $x \in \{0,1\}^l$, the prover constructs a function $f$ satisfying this property, and sends a proof $y = (y_1,\dots,y_{2^l})$, where $y_{\bar{z}} = f(z)$ and $\bar{z}$ is the integer representation of the string $z$. Since $l \in \mO(\log n)$, we have $|y| = 2^{l} = \poly(n)$. By leveraging the fact that a phase query can be implemented at the same cost as a standard oracle query, the verifier can extract $x$ by following the Bernstein-Vazirani algorithm, making only a single quantum query to $y$.
\end{proof}

Next, we state the lower bound lemma that lower bounds the query complexity of the OR function, given access to some additional classical bits in assistance of the verification.
\begin{lemma} Suppose we are given oracle access to an $n$-qubit phase oracle ${\bmO_f}$, and want to decide which of the following holds:
\begin{enumerate}[label=(\roman*)]
    \item There exists an $n$-bit string $x^*$ such that ${\bmO_f} \ket{x^*}  = -\ket{x^*} $
    \item ${\bmO_f} \ket{x}  = \ket{x} $ for all $x$.
\end{enumerate}
Then even if we have $q$ classical bits in support of case (i), we still need to make
\begin{align*}
 \Omega \left( \sqrt{ \frac{2^n}{2^q}} \right).
\end{align*}
queries to decide between cases (i) and (ii) with bounded probability of error. 
\label{lem:lb_classical_proof}
\end{lemma}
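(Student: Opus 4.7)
The strategy is to reduce the advice-assisted decision problem to the standard unstructured search lower bound via a pigeonhole argument over advice strings. First, I would fix both an algorithm $A$ making $T$ quantum queries and a worst-case strategy for producing the $q$-bit classical advice in case (i), which we may encode as a function $\alpha : \{0,1\}^n \to \{0,1\}^q$ sending each potential marked string $x^*$ to the advice the prover transmits. By the pigeonhole principle, some advice value $a^* \in \{0,1\}^q$ satisfies $|\alpha^{-1}(a^*)| \geq 2^n/2^q$; set $S := \alpha^{-1}(a^*)$.

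Next, consider the conditional algorithm $A_{a^*}$, meaning $A$ with its advice register hard-coded to $a^*$. By correctness of $A$, the algorithm $A_{a^*}$ must (with bounded error) accept every phase oracle $\bmO_f$ that marks a unique point $x^* \in S$, and reject the trivial all-$(+1)$ phase oracle (a legitimate instance of case (ii), against which the prover may likewise send $a^*$). This is precisely an instance of quantum search over the universe $S$: the oracle responses at positions outside $S$ are identical across all instances under consideration, so queries there convey no distinguishing information. The standard Grover/BBBV lower bound then yields that distinguishing the unmarked oracle from any of $|S|$ single-mark oracles over $S$ requires $\Omega(\sqrt{|S|}) = \Omega(\sqrt{2^n/2^q})$ queries, forcing $T = \Omega(\sqrt{2^n/2^q})$ as claimed.

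The point I would handle most carefully is the soundness of the reduction when the algorithm is adaptive and the advice depends arbitrarily on $x^*$. Because the $q$-bit advice is classical and fully determined before any quantum query is issued, the conditional algorithm $A_{a^*}$ is a bona fide quantum query algorithm solving an unstructured search promise problem on $S$; a hybrid/adversary argument is then sensitive only to the oracle's behaviour inside $S$, which is where the yes- and no-instances actually differ. With that verification in place, the entire proof collapses to pigeonhole combined with a black-box invocation of the canonical quantum-search lower bound, which is precisely why the approach is noticeably lighter than the direct polynomial-method calculation of~\cite[Appendix~D]{buhrman2024quantum}.
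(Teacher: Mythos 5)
Your proof is correct, and it follows a genuinely different route from the paper's. The paper argues by contradiction: it assumes a too-fast $T$-query algorithm, inverts the non-query unitaries so that in the unmarked case the register returns to $\ket{y}\ket{0^m}$, randomly guesses $y$ (incurring a $2^{-q}$ detection probability), and then boosts this via amplitude amplification at a multiplicative cost of $\Theta(\sqrt{2^q})$ queries, contradicting the known $\Omega(\sqrt{2^n})$ lower bound on OR. Your approach instead pigeonholes over the $2^q$ possible advice values to find a set $S$ of size at least $2^n/2^q$ of marked strings all sharing the same advice $a^*$, hard-codes $a^*$, and invokes the unstructured-search lower bound on $S$ directly (noting that a query to the full oracle is simulable by a query restricted to $S$, since the phase is fixed to $+1$ outside $S$ in every instance under consideration, so the adversary/BBBV bound depends only on $|S|$). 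Both arguments are sound; yours is a black-box reduction that avoids the amplitude-amplification compiler entirely and is the more standard template in the "query lower bounds with advice" literature, while the paper's argument is more explicitly constructive (it literally produces a fast OR algorithm). The one point worth making explicit in a final write-up, which you flag but do not fully spell out, is that the reduction requires the adversarial prover to be permitted to send $a^*$ in the all-$(+1)$ case as well, so that soundness forces $A_{a^*}$ to reject the trivial oracle; this matches the Merlin–Arthur-style soundness condition the lemma is used for in the proof of the oracle separation, and the same asymmetry is also implicit in the paper's own argument.
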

\begin{proof}
Let $C_1 > 0$ be some constant to be determined later. Suppose that there exists a $T$-query algorithm where $T < C_1 \left( \sqrt{2^n/2^q} \right)$ which uses a $q$-qubit computational basis state $\ket{y}$, $m = \poly(n)$ workspace qubits initialized in $\ket{0^m}$, and can distinguish between cases (i) and (ii) w.p. $\geq 2/3$. We will show that for some small enough constant $C_1$ this contradicts the known lower bound on computing the OR-function~\cite{beals2001quantum,ambainis2000quantum}, which implies that $T \geq C_1 \left( \sqrt{2^n/2^q} \right)$. W.l.o.g., we can write the output of the verification circuit then as
\begin{align*}
    \ket{\Psi} = U_T O_c U_{T-1} O_c \dots O_c U_1 O_c U_0 \ket{y} \ket{0^{m}},
\end{align*}
where the $O_c$ are controlled queries (tensored with identities) to $\bmO_f$ in either case (i) or case (ii), and the $U_i$ are unitaries which do not depend on $\bmO_f$. Write $\ket{\Psi_{(i)}}$ (resp. $\ket{\Psi_{(ii)}}$) for the final state in case (i) (resp. case (ii). Then in order to have success probability $\geq 2/3$, we require that $\norm{\ketbra{\Psi_{(i)}}-\ketbra{\Psi_{(ii)}} }_1 \geq 1/3$. Hence, we have that $|\bra{\Psi_{(i)}}\ket{\Psi_{(ii)}}|^2 \leq 1-(2/3)^2 = 8/9$. We will now use that in case (ii) the oracle does nothing. Therefore, if we instead consider the circuit which prepares the state $\ket{\Psi}$ followed by all the $U_i^\dagger$ sequentially (without the oracle calls), we have that the final state can be written as
\begin{align*}
    \ket{\Psi'} = \prod_{i =0}^{T} U_i^{\dagger}  U_T O_c U_{T-1} O_c \dots O_c U_1 O_c U_0 \ket{y}\ket{0^m}.
\end{align*}
This means that in case (ii) we have that the final state $\ket{\Psi'_{(ii)}} =  \ket{y} \ket{0^{m}}$. Since the inner product is conserved under unitary transformations, we still have $|\braket{\Psi'_{(i)}}|^2 \leq 8/9$. If we now perform the measurement $M = \{  \ket{y}\ketbra{0^m}{y}\bra{0^m},\mathbb{I}- \ket{y}\ketbra{0^m}{y}\bra{0^m}\}$, we will have that in case (i) we measure $\ket{y}\ket{0^m} $ with probability at most $8/9$ and in case (ii) with probability $1$. If we now instead randomly guess a string $y\in \{0,1\}^q$, we will have in case (ii) that for all corresponding basis states $\ket{y}$ the probability of measuring $\ket{y}\ket{0^m}$ is still $1$ whilst in case (i) there is a probability of $1/2^q$ of sampling a $y$ such that $\ket{y}\ket{0^m}$ is not measured with probability at least $8/9$. By using amplitude amplification, we can boost this back to be $\geq 2/3$ at the cost of a factor of $C_2 \sqrt{2^q}$ more queries for some constant $C_2 >0$~\cite{beals2001quantum}. Hence, this would imply a quantum algorithm for computing the OR-function using $C_1 C_2 \left( \sqrt{2^n} \right) $ queries, which contradicts the lower bound on OR for a small enough constant $C_1$. This implies that $T = \Omega(\sqrt{2^n/2^q})$.
\end{proof}
The key idea behind why this would also imply a lower bound in a PCP setting, is that it allows one to ``fix'' the dishonest prover's strategy in case (ii) whilst still being the optimal witness in case (i). Hence, since this is a valid prover strategy in case (ii), but not necessarily the optimal one, it is a lower bound even in a prover setting.

\orseplog*
\begin{proof}
Let $L$ be a unary language, $p(n)$ be some polynomial and let the oracle $\bmO$ be defined as follows:
\begin{itemize}
    \item if $0^n \in L$, then there is no string $x$ in $\bmO$ of length $ \log^{c+1} n$.
    \item if $0^n \notin L$, then $\bmO$ contains a string $x$ of length $|x| = \log^{c+1} n$.
\end{itemize}
To show that $L \in \QCPCP_Q [\mO(\log^c) n]^{\bmO}$, consider the following protocol: the prover sends a concatenation of $\log^{c} n$ classical proofs $y_i$, each of which encodes the function values of a function $f_{x_i}$ where $x_i$ contains at most a $\lceil \log n \rceil$ part of the total string $x$, i.e.~$x = x_1 x_2 \dots x_k$, where $k\leq \log n$.  The verifier performs a single quantum query on each $y_i$ separately, learning each $x_i$ part of $x$ by using the algorithm from~\cref{lem:BV_string_hiding}, and constructs $x = (x_1,\dots,x_k)$. This requires at most $\mO(\log^c)$ quantum queries. Finally, the verifier makes a single query to the oracle $\bmO$ to check whether $x$ is in $\bmO$, and thus whether $0^n \in L$.

We have that $L \notin \QCPCP [\log^c n]$ follows from the lower bound in~\cref{lem:lb_classical_proof}. By the result of~\cite{weggemans2023guidable}, where it is shown that for $\QCPCP[\mO(\log^c n)]^{\bmO}$ the initial distribution over which the qubits to be accessed can be fixed, even relative to an oracle, we can assume that $\QCPCP[\mO(\log^c n)]^{\bmO}$-verifier is non-adaptive and picks the indices according to a distribution which does not depend on the oracle $\bmO$ (all queries to $\bmO$ will be made after). Write $\bmO_{\emptyset}$ for the case when the oracle is empty and $\bmO_{x}$ when it contains a hidden string $x$. Since the distribution over which the indices are selected does not depend on the oracle, we have
\begin{align*}
    \Pr[V^{\bmO_{\emptyset}} \text{ queries } y_{i_1}, \dots, y_{i_q}] = \Pr[V^{\bmO_{x}} \text{ queries } y_{i_1}, \dots, y_{i_q} ].
\end{align*}
The acceptance probability of the $\QCPCP^{\bmO}[q]$-verifier is given by
\begin{align*}
    \sum_{i_1,\dots, i_q }\Pr[V^{\bmO} \text{ queries } y_{i_1}, \dots, y_{i_q}] \cdot \Pr[V^{\bmO} \text{ accepts querying } y_{i_1}, \dots, y_{i_q}]
\end{align*}
which is simply the expectation value
\begin{align*}
    \mathbb{E}_{i_1,\dots,i_1} \Pr[V^{\bmO} \text{ accepts querying } y_{i_1}, \dots, y_{i_q}].
\end{align*}
Therefore, the maximum difference in accepting given $\bmO_{\emptyset}$ or $\bmO_x$ can be upper bounded by a maximization over the choice of indices as well as the proof, that is
\begin{align*}
    \max_{y}
    \abs{\max_{i_1,\dots,i_1} \left(\Pr[V^{\bmO_\emptyset} \text{ accepts querying } y_{i_1}, \dots, y_{i_q}] - \Pr[V^{\bmO_x} \text{ accepts querying } y_{i_1}, \dots, y_{i_q}] \right)}
\end{align*}
Since~\cref{lem:lb_classical_proof} holds for any configuration of the $q$ qubits that are read, it must also hold the maximization value over any distribution of $q$ qubits (the fact that the choice of indices can change does not matter, as this does not depend on the input and can be absorbed into the first unitary). This means that there exists a string $x$ such that the number of queries needs to be 
\begin{align*}
    \Omega\left( \sqrt{ \frac{2^{\log^{c+1} n}}{2^{\log^{c} n}}} \right),
\end{align*}
which is super-polynomial in $n$. By the standard argument of turning query complexity lower bounds into oracle separations~\cite{baker1975relativizations}, one arrives at the desired result.
\end{proof}
 
\section*{Acknowledgements}
JW thanks Chris Cade, Marten Folkertsma and Alex Grilo for useful discussions. JW was supported by the Dutch Ministry of Economic Affairs and Climate Policy (EZK), as part of the Quantum Delta NL programme. JLG was supported by MEXT Q-LEAP grant No.~JPMXS0120319794 and JSPS KAKENHI grants Nos.~JP20H00579, JP20H05966, and JP	24H00071.

\bibliography{main.bib}
\bibliographystyle{alpha}

\appendix
\section{Omitted proofs}
\label{app:omittedproofs}

\recursionlemma*
\begin{proof}
We will prove this lemma by induction on $l$.\\

\noindent \textbf{Base Case:} For $l = 0$ we have
\begin{align*}
    (1+\sum_{S \subseteq \emptyset} \prod_{i \in S} a_i)x = (1 + 0)x = x.
\end{align*}

\noindent \textbf{Induction step:} Assume the formula holds for $l = k$, i.e.,
\begin{align*}
    f(k) = (1 + \sum_{S \subseteq [k-1]} \prod_{i \in S} a_i)x.
\end{align*}
Now consider $l = k+1$:
\begin{align*}
    f(k+1) &= x + \sum_{i=0}^{k} a_i f(i) \\
    &= x + \sum_{i=0}^{k} a_i \left(1 + \sum_{S \subseteq [i-1]} \prod_{j \in S} a_j \right)x \\
    &= x\left(1 + \sum_{i=0}^{k} a_i + \sum_{i=0}^{k} a_i \sum_{S \subseteq [i-1]} \prod_{j \in S} a_j \right).
\end{align*}
Notice that the right sum (with the product) inside the parentheses can be interpreted as the sum over
all possible products of the coefficients $a_i$ except for those that only contain $a_i$ itself, which is given by the other sum, effectively generating all possible products of subsets of coefficients up to $k$. This can be represented as
\begin{align*}
    f(k+1) &= x\left(1 + \sum_{S \subseteq [k]} \prod_{i \in S} a_i\right).
\end{align*}
Thus, the formula holds for $l = k+1$.

\end{proof}

\lemboundcoef*
\begin{proof}

Let $y^{S} = (y_1,\dots,y_N)$, $y_i \in \{0,1\}$ for all $i\in [N]$, with $y_i = 1$ if and only if $i \in S$ for some subset $S \subseteq [N]$, $|S| \leq d$. It must hold that for any $S' \subseteq [N]$,
\begin{align*}
    \prod_{i \in S'} y_i^S = 1 \text{ if and only if } S'\subseteq S.
\end{align*}
We also have that
\begin{align*}
     \sum_{S' \subseteq [N], |S'| \leq d} \beta_{S'} \prod_{i \in S'} y^{S}_i \leq \norm{P}_\infty
\end{align*}
and thus
\begin{align*}
     \sum_{S' \subseteq S} \beta_{S'} \leq \norm{P}_\infty,
\end{align*}
keeping only the non-zero terms. Isolating the $S$ term gives
\begin{align*}
      \beta_{S} &\leq \norm{P}_\infty - \sum_{S' \subset S} \beta_{S'} 
\end{align*}
which means
\begin{align}
      \abs{\beta_{S}} &\leq \norm{P}_\infty + \sum_{S' \subset S} \abs{\beta_{S'}}.
      \label{eq:app_up_bound_abs}
\end{align}
Hence, the bound of $|\beta_{S}|$ can be expressed only in terms of the cardinalities of the sets $S'$ and upper bounds on the coefficients of $\abs{\beta_{S'}}$. Since all terms are nonnegative, we have that~\cref{eq:app_up_bound_abs} can itself be upper bounded through a recursion
\begin{align*}
    B(l) = \norm{P}_\infty +  \sum_{i=0}^{d-1} \binom{d}{i} B(i)
\end{align*}
with $B(1) = \norm{P}_\infty$. According to~\cref{lem:recursion}, we have that
\begin{align*}
|\beta_S| \leq \left(1+ \sum_{S' \subseteq [d-1]} \prod_{i \in S'} a_i \right) \norm{P}_\infty
\end{align*}
where 
\begin{align*}
a_i = \binom{d}{i}.
\end{align*}
Since for any $i$ it must hold that
\begin{align*}
\binom{d}{i} \leq 2^d,
\end{align*}
we can upper bound $|\beta_S| $ as
\begin{align*}
\left(1+ \sum_{S \subseteq [d-1]} \prod_{i \in S} 2^d \right) \norm{P}_\infty.
\end{align*}
The sum can be evaluated as
\begin{align*}
\sum_{S \subseteq [d-1]} \prod_{i \in S} 2^d= \sum_{k=0}^{d-1} \binom{d-1}{k} 2^{dk} = (1+2^d)^{d-1},
\end{align*}
evoking the binomial theorem. Putting everything together, we find an upper bound of
\begin{align*}
|\beta_S| \leq \left(1 + (1 + 2^d)^{d-1}\right) \norm{P}_\infty,
\end{align*}
as was to be shown.

\end{proof}

\end{document}